\documentclass[a4paper, USenglish, cleveref, autoref, thm-restate]{lipics-v2021}
\usepackage{import}
\usepackage{xcolor}
\usepackage{tikz}
\usetikzlibrary{automata} 
\usetikzlibrary{arrows.meta,automata,positioning}
\nolinenumbers
\title{Hamming distance between finite transducers}

\author{Luc Dartois}{Université Marie et Louis Pasteur, CNRS, institut FEMTO-ST, F- 25000 Besançon, France}{luc.dartois@femto-st.fr}{https://orcid.org/0000-0001-9974-1922}{}
\author{Pierre-Cyrille Héam}{Université Marie et Louis Pasteur, CNRS, institut FEMTO-ST, F- 25000 Besançon, France }{pierre-cyrille.heam@femto-st.fr}{https://orcid.org/0000-0002-1125-1767}{}
\author{Ismaël Jecker}{Université Marie et Louis Pasteur, CNRS, institut FEMTO-ST, F- 25000 Besançon, France }{ismael.jecker@femto-st.fr}{https://orcid.org/0000-0002-6527-4470}{This research was partially funded by the Agence Nationale de la Recherche (ANR) grant ANR-25-CE48-2447 FAVOR}
\author{Silvio Vescovo}{Université Marie et Louis Pasteur, CNRS, institut FEMTO-ST, F- 25000 Besançon, France }{silvio.vescovo@femto-st.fr}{https://orcid.org/0009-0002-4686-1677}{}

\authorrunning{L. Dartois, P.-C. Héam, I. Jecker, and S. Vescovo}

\ccsdesc[500]{Theory of computation~Transducers}

\keywords{Transducers, Hamming distance, NL-completeness, DP-completeness}
\hideLIPIcs  
\Copyright{Luc Dartois, Pierre-Cyrille Héam, Ismaël Jecker, Silvio Vescovo}
\EventEditors{John Q. Open and Joan R. Access}
\EventNoEds{2}
\EventLongTitle{Mathematical Foundations of Computer Science}
\EventShortTitle{MFCS}
\EventAcronym{MFCS}
\EventYear{2016}
\EventDate{December 24--27, 2016}
\EventLocation{Little Whinging, United Kingdom}
\EventLogo{}
\SeriesVolume{42}
\ArticleNo{23}

\usepackage{amsmath}
\usepackage{amsthm}

\usepackage{apxproof}
\newtheoremrep{prop}[proposition]{Proposition}
\newtheoremrep{lem}[lemma]{Lemma}
\newtheoremrep{theo}[theorem]{Theorem}
\newtheoremrep{coro}[corollary]{Corollary}

\renewcommand{\leq}{\leqslant}
\renewcommand{\geq}{\geqslant}

\newcommand{\Cal}[1]{\mathcal{#1}}

\newcommand{\Rm}[1]{\mathrm{#1}}

\newcommand{\true}{\textsc{True}}
\newcommand{\false}{\textsc{False}}
\newcommand{\dev}{\textsf{dev}}

\newcommand{\pspace}{\textsc{PSpace}}
\newcommand{\logspace}{\textsc{LogSpace}}

\DeclareMathOperator*{\dom}{dom}

\DeclareMathOperator*{\shift}{s}

\DeclareMathOperator*{\smax}{s_{max}}
\DeclareMathOperator*{\lmax}{\ell_{max}}

\DeclareMathOperator*{\inn}{in}
\DeclareMathOperator*{\out}{out}

\usepackage{todonotes}
  \newcounter{todocounter}

\usepackage{colortbl}
\usepackage{adjustbox}
\usepackage[most]{tcolorbox}

\newenvironment{boitebleue}{\begin{tcolorbox}[arc=1mm,colback=blue!10,boxsep=1pt,left=2pt,right=2pt,top=2pt,bottom=0pt]}{\end{tcolorbox}}

\newcommand{\PB}[3]{
\begin{boitebleue}
\noindent {\bf \textsf{#1}}\\
\noindent {\bf \textsf{\makebox[1.3cm][l]{Input:}}} #2\\
\noindent {\bf \textsf{\makebox[1.3cm][l]{Output:}}} #3
\end{boitebleue}
}

\begin{document}

\maketitle

\begin{abstract}
	We study bounded deviation of non-deterministic finite transducers under the Hamming distance:
	the bounded comparison problem asks, given two transducers and \(k \in \mathbb{N}\),
	whether for every input the two transducers produce words at Hamming distance at most \(k\).
	This problem is known to be decidable in polynomial time when \(k\) is fixed, and in co-NP otherwise.
	
	We show that the problem is NL-complete when \(k\) is fixed, co-NP-complete when \(k\) is given in binary, and it is DP-complete to decide if the distance is exactly \(k\). 
	We also prove that if the two transducers have bounded comparison, then the maximal distance is at most quadratic in the size of both transducers, and that this bound is asymptotically tight.
	
	We prove the results on deviations problems, 
	which asks similar questions on the distance of the pairs of input and output of a single transducer, and show that these two families of problems are logspace many-one equivalent.
	
\end{abstract}

\newpage

\section{Introduction} 
\label{sec:introduction}

Non-deterministic finite-state transducers (NFT) are a fundamental model for describing transformations between words.
Studied since the early days of computer science, these machines, initially known as
\emph{generalized sequential machines}~\cite{Raney1958SequentialF,Ginsburg1968},
are obtained by equipping transitions of finite-state automata with output words.
Whereas an automaton \(\Cal{A}\) recognizes a \emph{languages} \(L_{\Cal{A}}\),
a transducer \(\Cal{T}\) recognizes a \emph{binary relation} \(R_{\Cal{T}}\) between input and output words,
called a \emph{rational relation}.
We refer to~\cite{MuschollP19,FiliotR16} and the references therein for a comprehensive overview of this model.
As is standard in automata theory,
classical decision problems on transducers are inherently Boolean
(e.g., equivalence, or determinisability).
To move beyond this qualitative setting towards quantitative questions,
we need to determine a meaningful notion of \emph{distance between transducers}.
A key requirement is that such a notion should correspond to algorithmically tractable decision problems
to support effective analysis.
Lifting distances from words 
to transducers yields a natural candidate satisfying these requirements.

\smallskip
A common way of comparing two words \(u\) and \(v\) is through their \emph{edit distance} \(d(u,v)\),
defined as the minimum number of elementary operations, called \emph{edits},
required to transform \(u\) into \(v\).
Allowing different edits induce different distances.
The most well-known, called \emph{Levenshtein distance}~\cite{Levenshtein66},
allows insertions, deletions and substitutions,
and numerous variants arise by restricting operations
or assigning them weights.
In this work, we focus on the \emph{Hamming distance}~\cite{Hamming50},
where the only operation allowed is the substitution of a letter by another.
Beyond words, edit distances have been extended to richer structures.
The distance between two languages is typically defined as the minimal~\cite{HanKS12} or average~\cite{Mohri03}  distance
between pairs of words drawn from each language.
More recently, edit distances have been lifted to rational relations~\cite{AiswaryaMS24,FiliotJMS25}.
In this setting, the perspective shifts: rather than witnessing proximity via the existence of a close pair,
uniform closeness is required among all pairs of outputs associated with the same input.
Formally, given two relations \(R_1\) and \(R_2\), their distance is defined as \(\infty\)
if their domains are distinct, and otherwise we let
\[
d(R_1,R_2) = \sup{} \{d(v_1,v_2) \mid (u,v_1) \in R_1 \text{ and } (u,v_2) \in R_2
\text{ for some } u \in \Sigma^*\} \in \mathbb{N}\cup \{\infty\}.
\]
This worst-case viewpoint is well-suited to quantitative verification,
as it can capture guarantees on the deviation from an ideal behavior.
This distance between relations leads naturally to three fundamental decision problems:
determining whether the distance is finite,
whether it is bounded by a given threshold,
and whether it is exactly equal to a given value.

\PB{Bounded Comparison Problem}
{Two NFTs \(\Cal{T}_1, \Cal{T}_2\) with \(\dom(R_{\Cal{T}_1}) = \dom(R_{\Cal{T}_2})\).}
{\true{} if \(d(R_{\Cal{T}_1}, R_{\Cal{T}_2}) < \infty\), \false{} otherwise.}

\PB{Threshold-bounded Comparison Problem}
{Two NFTs \(\Cal{T}_1, \Cal{T}_2\) with \(\dom(R_{\Cal{T}_1}) = \dom(R_{\Cal{T}_2})\),
	and \(k \in \mathbb{N}\) encoded in binary.}
{\true{} if  \(d(R_{\Cal{T}_1}, R_{\Cal{T}_2}) \leq k\), \false{} otherwise.}

\PB{Exact Comparison Problem}
{Two NFTs \(\Cal{T}_1, \Cal{T}_2\) with \(\dom(R_{\Cal{T}_1}) = \dom(R_{\Cal{T}_2})\),
	and \(k \in \mathbb{N}\) encoded in binary.}
{\true{} if \(d(R_{\Cal{T}_1}, R_{\Cal{T}_2}) = k\), \false{} otherwise.}

\noindent
Restricting the inputs to transducers with identical domains is natural in this setting.
Indeed, if $\dom(R_{\Cal{T}_1}) \neq \dom(R_{\Cal{T}_2})$,
then the distance $d(R_{\Cal{T}_1},R_{\Cal{T}_2})$ is infinite by definition,
making the comparison trivial.
Moreover, this restriction allows us to isolate the intrinsic complexity of the comparison problems.
Without it, one would first need to check whether the domains coincide,
which is \pspace-complete \cite{Stockmeyer1973} and would therefore dominate the overall complexity.


\subparagraph*{Contributions and organization of the paper.}
In this paper we approach these problems from a different perspective:
instead of comparing the outputs of two relations
we compare the input and output of a single relation.
Formally, given a binary relation \(R\),
the \emph{deviation} of \(R\)
is the maximal Hamming distance between input and output
over all pairs in \(R\):
\[
	\dev(R) = \sup{} \{d(u,v) | (u,v) \in R\} \in \mathbb{N}\cup \{\infty\}.
\]
This notion naturally gives rise to the following decision problems.
\PB{Bounded Deviation Problem}{An NFT \(\Cal{T}\).}
{\true{} if \(\dev(R_\Cal{T}) < \infty\), \false{} otherwise.}

\PB{Threshold-bounded Deviation Problem}
{An NFT \(\Cal{T}\) and \(k \in \mathbb{N}\) encoded in binary.}
{\true{} if \(\dev(R_\Cal{T}) \leq k\), \false{} otherwise.}

\PB{Exact Deviation Problem}{An NFT \(\Cal{T}\) and \(k \in \mathbb{N}\) encoded in binary.}
{\true{} if \(\dev(R_\Cal{T}) = k\), \false{} otherwise.}

\noindent
We first show that these problems are equivalent to their comparison counterparts.

\begin{theorem}\label{thm:equivalent_problems}
	The Bounded, Threshold-bounded and Exact Deviation Problems
	are logspace many-one equivalent to the corresponding Comparison Problems.
\end{theorem}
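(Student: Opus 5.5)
We prove the two directions separately. In each direction we give a logspace construction that preserves the relevant distance exactly. Exact preservation handles all three variants at once (bounded, threshold, exact): the value $k$ is passed through unchanged, and only the finiteness of the distance or its precise value matters.

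\emph{From deviation to comparison.} Let $\Cal{T}$ be an NFT over input alphabet $\Sigma$ and output alphabet $\Gamma$. We build $\Cal{T}_1$ by keeping the transitions of $\Cal{T}$ and replacing each output word by the input letter read, so that $R_{\Cal{T}_1}=\{(u,u)\mid u\in\dom(R_\Cal{T})\}$. We set $\Cal{T}_2=\Cal{T}$. We assume $\Cal{T}$ reads one letter per transition, which can be ensured in logspace after an $\varepsilon$-input normalization; alternatively we copy the input of each transition directly. Both transducers share the same state space and transition structure, so they have the same domain, as the comparison problems require. For every $u$, the only output of $\Cal{T}_1$ is $u$ itself. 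Hence
\[
d(R_{\Cal{T}_1},R_{\Cal{T}_2})=\sup\{d(u,v)\mid (u,v)\in R_\Cal{T}\}=\dev(R_\Cal{T}).
\]
This is a local rewriting of transitions and is clearly computable in logspace.

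\emph{From comparison to deviation.} Given $\Cal{T}_1,\Cal{T}_2$ with equal domains, we need a single transducer $\Cal{T}$ whose input/output pairs realize the pairs $(v_1,v_2)$ with $(u,v_1)\in R_{\Cal{T}_1}$ and $(u,v_2)\in R_{\Cal{T}_2}$. The plan is to take the product of $\Cal{T}_1$ and $\Cal{T}_2$ synchronized on the common input $u$. This product reads $v_1$ as its own input, in the letters output by the $\Cal{T}_1$-component, and it outputs $v_2$.

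The main obstacle is that $\Cal{T}_1$ and $\Cal{T}_2$ produce their outputs at different rates, so $v_1$ and $v_2$ cannot be aligned letter by letter inside a finite-state product. To avoid this we do not synchronize on $u$ at all. Instead we compose relations: $\Cal{T}$ nondeterministically guesses $u$ and simulates $\Cal{T}_1$ backwards from $v_1$ to $u$, then simulates $\Cal{T}_2$ forward from $u$ to $v_2$. Rational relations are closed under inverse and composition, and the composition $R_{\Cal{T}_2}\circ R_{\Cal{T}_1}^{-1}$ is realized by a product NFT built in logspace. That product simply has states in $Q_1\times Q_2$, reads $\Cal{T}_1$'s output, and emits $\Cal{T}_2$'s output while guessing a common input letter. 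Its relation is exactly
\[
R=\{(v_1,v_2)\mid \exists u,\ (u,v_1)\in R_{\Cal{T}_1},\ (u,v_2)\in R_{\Cal{T}_2}\},
\]
so $\dev(R)=d(R_{\Cal{T}_1},R_{\Cal{T}_2})$. Here the equal-domain hypothesis is what makes the comparison distance equal to this supremum rather than $\infty$.

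In the product, transitions of $\Cal{T}_1$ and $\Cal{T}_2$ on the same input letter $a$ are paired. The resulting transition reads the output word of the first and writes the output word of the second. Transitions with empty input are allowed to move one component alone. The product may have transitions reading words rather than single letters, but we expect this to be harmless since NFTs in the paper carry words on transitions. If letter-to-letter transitions are required, splitting words over intermediate states is again a logspace operation. Both constructions use only pointers into the input, which gives logspace many-one reductions in both directions.
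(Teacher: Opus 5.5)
Your proposal is correct and follows the paper's proof. For deviation to comparison you use the same construction: $\Cal{T}_2=\Cal{T}$, and $\Cal{T}_1$ copies each transition's input word, so it realizes the identity on $\dom(R_{\Cal{T}})$. For comparison to deviation you build the same input-synchronized product, which reads $\Cal{T}_1$'s output and writes $\Cal{T}_2$'s output; your rule that $\varepsilon$-input transitions move one component alone does the job of the paper's added $(\varepsilon,\varepsilon)$ self-loops. Two small corrections. First, pairing transitions on the same input letter requires making $\Cal{T}_1,\Cal{T}_2$ input-atomic beforehand, as in Lemma~\ref{lemma:atomic_transducer_construction}; you mention this normalization only for the other direction. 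Second, the ``different output rates'' obstacle is not real, because product transitions may read and write words of different lengths, and the product you build does in fact synchronize on $u$.
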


\noindent
Theorem~\ref{thm:equivalent_problems} is proved in Section~\ref{sec:application_to_the_k_similarity_problem},
where we show how to construct,
from two transducers \(\Cal{T}_1,\Cal{T}_2\), a single transducer \(\Cal{T}\)
such that \(\dev(R_{\Cal{T}}) = d(R_{\Cal{T}_1},R_{\Cal{T}_2})\) (Proposition~\ref{prop:ComparisonToDeviation}),
and conversely how to construct, from a transducer \(\Cal{T}\),
a pair of transducers \(\Cal{T}_1,\Cal{T}_2\)
such that \(\dev(R_{\Cal{T}}) = d(R_{\Cal{T}_1},R_{\Cal{T}_2})\) (Proposition~\ref{prop:deviationToComparison}).
We then establish tight complexity bounds for all problems.

\begin{theorem}\label{thm-BoundedNFT}
	The Bounded Deviation Problem for the Hamming distance is NL-complete.
\end{theorem}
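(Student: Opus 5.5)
We split the statement into NL-hardness and membership in NL. Throughout we assume the transducer \(\Cal{T}\) is trim: every state is reachable from an initial state and co-reachable to a final one. Trimming is itself an NL-computable preprocessing step, so the assumption is harmless.

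\textbf{Hardness.} We reduce from the reachability problem for directed graphs, which is NL-complete. Given a graph \(G\) with vertices \(s,t\), we build the transducer \(\Cal{T}\) as follows. Its states are the vertices of \(G\). Every edge of \(G\) becomes a transition reading \(a\) and writing \(a\); these transitions contribute no deviation. The initial state is \(s\). At \(t\) we add a self-loop reading \(a\) and writing \(b\). All states are made final.

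Then \(\dev(R_{\Cal{T}})=\infty\) if and only if \(t\) is reachable from \(s\): the only source of deviation is the loop at \(t\), and it can be pumped exactly when \(t\) is reached. The construction is clearly computable in logspace. By Theorem~\ref{thm:equivalent_problems} this also gives hardness for the comparison version, though that is not needed here.

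\textbf{Membership.} We characterise unboundedness by a pattern and guess it nondeterministically. The natural pattern is a cycle along which the Hamming distance can be made to grow without bound. The difficulty is that input and output may be desynchronised.

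\begin{itemize}
\item If some run has \(|u|\neq|v|\), the Hamming distance is undefined (or infinite by convention). The paper presumably restricts attention to length-preserving pairs or treats this separately. If \(\Cal{T}\) is trim and some accessible cycle has input length different from its output length, then pumping that cycle yields pairs of unequal length.
\item So we may assume every cycle is length-balanced. In that case the \emph{delay} (output length minus input length) along any run is bounded by \(|Q|\cdot M\), where \(M\) is the maximal output length of a transition.
\end{itemize}

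The key claim is then: \(\dev(R_{\Cal{T}})=\infty\) if and only if there is an accessible, co-accessible cycle \(q\xrightarrow{x/y}q\) with \(|x|=|y|\), such that for some delay \(\delta\) at \(q\) reachable from an initial state, the comparison of \(x^\omega\) with \(y^\omega\) shifted by \(\delta\) has a mismatch within one period.

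\textbf{Main obstacle.} The hard step is carrying out this check in logarithmic space. The delay \(\delta\) can be polynomially large, so we cannot store a buffer of delayed letters. Instead we use the following approach.

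\begin{itemize}
\item Guess a mismatch position: an input letter at some step of the cycle, and the output letter it is aligned with.
\item Aligning them only needs counters. We walk the run while keeping the current position of the input and of the output in binary. This takes \(O(\log)\) bits, since lengths are polynomially bounded by the balancedness argument above.
\item We guess the two aligned letters at the moments they are produced, and verify that their positions are equal.
\end{itemize}

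Concretely, the NL procedure does the following.
\begin{enumerate}
\item Guess a path from an initial state to \(q\), recording only the delay counter.
\item Guess the cycle at \(q\), simultaneously tracking the positions to detect a mismatch of aligned letters. Since the mismatched output letter may be emitted in a later iteration of the cycle, we allow at most \(|Q|M\) extra cycle iterations, again tracked with counters.
\item Check that \(q\) can reach a final state.
\item Check the negative side in NL: detect any unbalanced accessible cycle, and reject in that case or report infinite deviation according to the chosen convention.
\end{enumerate}

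Correctness requires two directions. If the pattern exists, pumping the cycle gives arbitrarily many mismatches. Conversely, if no such cycle exists, a pumping argument on long runs shows the deviation is bounded by roughly \(|Q|^2M\). This is consistent with the paper's quadratic bound. Since NL is closed under complement, both the detection step and the negative check fit within NL.
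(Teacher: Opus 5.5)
Your cycle condition is the paper's conjugacy-by-\(s_p\) condition (Lemma~\ref{lemma:bounded_equal_properties}). However, the membership argument does not go through at exactly the step you flag as the main obstacle. You justify logarithmic counters for absolute input/output positions ``by the balancedness argument''. Balancedness only bounds the delay at a state, not the length of the cycle you guess. Nothing you prove rules out that every violating cycle is exponentially long, and then those counters do not fit in logarithmic space. Likewise, ``allowing at most \(|Q|M\) extra cycle iterations'' presupposes replaying the \emph{same} guessed cycle, which an NL machine cannot do since it never stores it.

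The missing ingredient is a small-witness property (Claim~\ref{claim-complexity_conjugate_cycle}): if some cycle at \(p\) is not conjugate by \(s_p\), then one of length at most \(2|Q| + 2\smax(\Cal{T})|Q|^2\) is. The algorithm traverses one such cycle once, guesses positions \(i,j\) inside it, and tests \(j-i \equiv s_p \bmod |u|\). Alignment across iterations is thus handled arithmetically and nothing is replayed. A fix closer to your plan is also possible. Guess one long cycle \(D\) at \(q\) (a concatenation of loops at \(q\) is again a cycle) in which an aligned mismatched pair is emitted entirely inside \(D\). Store only the bounded gap between the first-emitted letter and its partner, not absolute positions. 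Each copy of \(D\) in a pumped run then contributes a mismatch.

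The direction ``no pattern implies bounded'' is also only asserted. It is the content of Lemma~\ref{lemma:QuadBound_imply_not_properties}: pigeonholing the states that read mismatched input letters yields a cycle at some \(p\) containing more than \(b \geq |s_p|\) mismatches. One of these must have its aligned output letter emitted inside the cycle, contradicting conjugacy by \(s_p\).

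Two further gaps. First, the paper defines \(d(u,v)=+\infty\) when \(|u|\neq|v|\), so there is no convention to choose. Boundedness requires every \emph{accepting run} to be balanced, and checking cycles alone does not ensure this. The trim transducer with the single transition \((q_i,a,\varepsilon,q_f)\) has no cycle and infinite deviation, yet your procedure finds no pattern and answers ``bounded''; your key claim also fails for it. You need the additional NL test that accepting runs of length at most \(|Q|\) are balanced (Lemma~\ref{lemma:complexity_u_equal_v_for_small_run}). Length preservation is also what makes the delay at \(q\) unique (Lemma~\ref{lemma:bounded_imply_one_shift_by_state}).

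Second, your hardness reduction sends reachability to \emph{un}boundedness, so you need NL = co-NL to conclude. It is correct for arbitrary NFTs, but it works only because the loop at \(t\) is inaccessible when \(s\) does not reach \(t\). Under your own (and the paper's) standing assumption that transducers are trim, the output is not a legal instance. Trimming it amounts to deciding reachability, so the hardness would come from the trimming step rather than from boundedness. The paper avoids this by linking fresh states \(q_i,q_f\) to every vertex, which makes the transducer trim. Its single mismatching transition \((t,a,b,s)\) then lies on a cycle exactly when \(s\) reaches \(t\).
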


\begin{theorem}\label{thm-KboundedNFT}
	The Threshold-bounded Deviation Problem for the Hamming distance is:
	\begin{itemize}
		\item co-NP-complete when \(k\) is part of the input;
		\item NL-complete for every fixed \(k \geq 1\), when \(k\) is an outside constant.  
	\end{itemize}
\end{theorem}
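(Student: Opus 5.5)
We have to prove Theorem~\ref{thm-KboundedNFT}: two upper bounds and two matching lower bounds.

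\textbf{Upper bound, $k$ in binary.} We show the complement, $\dev(R_{\Cal{T}}) > k$, is in NP. The key ingredient is a polynomial bound on the length of a shortest witness: a pair $(u,v)\in R_{\Cal{T}}$ with $d(u,v)\geq k+1$ and $|u|=|v|$. Pairs of different lengths have infinite Hamming distance by convention, and whether such a pair exists is decidable in NL anyway.

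Consider an accepting run producing the witness. We cut it into segments, each containing at least one mismatch position, with at most $k+1$ segments in total. Between consecutive mismatches, the run only needs to move from one state to another while keeping track of the \emph{offset} (length of input read minus length of output written). The offset is the crucial difficulty. A pumping argument on the graph whose nodes are (state, offset) pairs should show the following: whenever the witness can be realised at all, it can be realised with offsets bounded polynomially in $|\Cal{T}|$. The reason is that large offsets are only useful when they are reachable through cycles, and in that case the deviation is already unbounded. Together with a polynomial-length connecting path between mismatches, this gives a certificate of size polynomial in $|\Cal{T}|$ and $\log k$.

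The certificate cannot list $k+1$ mismatches explicitly when $k$ is exponential. Instead, we use the quadratic bound on the maximal finite deviation announced in the abstract, which we plan to establish as a separate lemma. First check in NL whether the deviation is infinite; if so, answer \false{}. Otherwise $\dev \le O(|\Cal{T}|^2)$. So if $k$ exceeds this bound we answer \true{}, and otherwise $k$ is polynomial and a path listing the $k+1$ mismatch positions in compressed form is a polynomial certificate. Hence the problem is in co-NP.

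\textbf{Upper bound, fixed $k$.} Here $k+1$ is constant. We nondeterministically guess the run on the fly, storing only the current state, the current offset (bounded polynomially by the lemma above, so logspace), and a counter of witnessed mismatches up to $k+1$. Witnessing a mismatch requires remembering the letter at a position in whichever of the two words is ahead. To do this we guess in advance the position of the mismatch and remember that letter, storing $O(k)$ pending letters together with their offsets. This puts the complement in NL, and since NL $=$ co-NL, the problem itself is in NL.

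\textbf{Lower bounds.} NL-hardness for each fixed $k\ge1$ comes by reduction from graph reachability. Map a graph $G$ with vertices $s,t$ to a transducer whose transitions follow the edges of $G$ and output the letter read, and which on reaching $t$ allows one extra gadget transition reading $a$ and outputting $b$. Append a loop at $t$ of this kind so that, when $t$ is reachable, arbitrarily many mismatches occur. Then the deviation exceeds $k$ iff $t$ is reachable, which gives hardness of the complement, and hence of the problem since NL $=$ co-NL.

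co-NP-hardness comes by reduction from the complement of a standard NP-complete problem; the plan is to use SUBSET-SUM or 3-SAT. For SAT with $n$ variables and $m$ clauses, build a transducer whose input encodes an assignment repeated once per clause. The transducer nondeterministically checks consistency of the assignment, and each satisfied clause block yields one mismatch. Inconsistent copies must produce distance $0$ or be rejected through the domain; since $\Cal{T}$ is an NFT, rejection via the domain is allowed, but consistency checking across copies is hard for a finite automaton. So we will instead let the transducer guess a clause to check and use input/output offsets so that the mismatch count equals the number of satisfied literals selected. With $k=m-1$, the deviation exceeds $k$ iff the formula is satisfiable. Designing this gadget so that inconsistent assignments cannot inflate the count is the main obstacle, together with the polynomial offset bound used in the upper bounds.
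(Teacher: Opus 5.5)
\textbf{Main gap: co-NP-hardness.} Your 3-SAT reduction is never actually constructed. You are right that a finite-state device cannot check that the $m$ copies of the assignment agree. Your fallback (guess a clause, let the mismatch count equal the number of selected satisfied literals, take $k=m-1$) is not a gadget, though. Nothing in it stops inconsistent copies from contributing mismatches, which is exactly the obstacle you flag.

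The missing idea is to let the Hamming comparison itself enforce consistency, by offsetting the output by one block of $n$ letters:
\begin{itemize}
\item a first gadget reads $\varepsilon$ and writes $n$ arbitrary bits;
\item for each clause $c_i$, a gadget reads an $n$-bit word encoding a valuation satisfying $c_i$ and writes its bitwise complement (two layers $\bot_i,\top_i$ of $n+1$ states, moving from $\bot_i$ to $\top_i$ only on a bit making a literal of $c_i$ true);
\item a last gadget reads $n$ bits and writes $\varepsilon$.
\end{itemize}
The output block of gadget $i$ then lies exactly below the input block of gadget $i+1$, and every pair has length $n(m+1)$. Distance $n(m+1)$ is reached iff each input block is the complement of the preceding output block. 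That holds iff all input blocks are one and the same valuation, satisfying every clause. Hence $\dev(R_{\Cal{T}})\leq n(m+1)-1$ iff $\varphi$ is unsatisfiable.

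\textbf{Secondary issues.} For co-NP membership, your polynomial-witness step is aimed at the wrong difficulty. Once $\Cal{T}$ is length-preserving, every run reaching $p$ has shift $s_p$ with $|s_p|\leq \smax(\Cal{T})\cdot|Q|$ (Lemma~\ref{lemma:bounded_imply_one_shift_by_state}). So offsets are bounded for free, and your (state, offset) graph is just the state graph.

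The real difficulty is ignored by ``a polynomial-length connecting path between mismatches''. When $s_p\neq 0$, the two letters $u_i$ and $v_i$ of one mismatch are produced at different moments of the run. Cutting a cycle that lies between those moments misaligns them and may destroy the mismatch. The paper handles this via conjugacy of cycles (Claim~\ref{claim:k-bounded-NP-algo}). Alternatively, you can argue as follows. While a chosen mismatch is half-produced, the run reads and writes only $O(\smax(\Cal{T})\cdot|Q|+\lmax(\Cal{T}))$ letters. These windows are therefore polynomially short once $(\varepsilon,\varepsilon)$-cycles are removed, and cycles outside all windows can be cut freely.

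Your on-the-fly algorithm for fixed $k$ is fine, provided pending positions are stored relative to the lagging head rather than absolutely. It then needs no bound on witness length.

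Finally, your NL-hardness reduction must specify the initial and final states. Under the paper's trimness convention, the usual logspace trick of connecting fresh initial and final states to all vertices makes $t$ always reachable. A mismatch loop at $t$ then gives infinite deviation unconditionally. The paper instead places mismatches at both ends, via $(q_i,a^k,b^k,s)$ and $(t,a,b,q_f)$, so that $k+1$ mismatches require a path from $s$ to $t$.
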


\noindent
The DP complexity class, for \emph{difference polynomial time} is first defined in \cite[S.~2]{Papadimitriou1984} as the class of the problems that are expressed as the difference between two NP problems.
Alternatively, it consists of problems that can be defined as the intersection of an NP problem and a co-NP problem. Note that DP contains both NP and co-NP.

\begin{theorem}\label{thm-Kexact}
	The Exact Deviation Problem for the Hamming distance is DP-complete.
\end{theorem}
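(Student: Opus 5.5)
The plan is to write the problem as the intersection of a problem in NP and a problem in co-NP, and to obtain hardness by the standard reduction from SAT--UNSAT.

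\textbf{Membership.} We have \(\dev(R_{\Cal{T}}) = k\) if and only if \(\dev(R_{\Cal{T}}) \leq k\) and \(\dev(R_{\Cal{T}}) > k-1\). By Theorem~\ref{thm-KboundedNFT}, the first condition is in co-NP. The second condition, \(\dev(R_{\Cal{T}}) \geq k\) for \(k \geq 1\), is the complement of a co-NP question, so it is in NP. Hence the problem lies in DP.

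One step needs care: Theorem~\ref{thm-KboundedNFT} should give co-NP membership uniformly in the binary \(k\). I expect the co-NP proof to rest on a polynomial certificate for \(\dev > k\): a short accepting run of \(\Cal{T}\) witnessing \(k+1\) mismatches, described compactly. This uses the quadratic bound on the maximal finite distance announced in the abstract, together with a pumping argument once the deviation is infinite. If that certificate is available, the same NP verifier applied to \(k-1\) certifies \(\dev \geq k\). The case \(k=0\) is trivial for the lower-bound part.

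\textbf{Hardness.} I would reduce from SAT-UNSAT: given CNF formulas \(\varphi,\psi\), decide whether \(\varphi\) is satisfiable and \(\psi\) is not. The co-NP-hardness proof of Theorem~\ref{thm-KboundedNFT} presumably yields a logspace map \(\psi \mapsto (\Cal{T}_\psi, k_\psi)\) with \(\dev(R_{\Cal{T}_\psi}) \leq k_\psi\) iff \(\psi\) is unsatisfiable. I would first sharpen it so that \(\dev(R_{\Cal{T}_\psi}) \in \{k_\psi, k_\psi+1\}\), with the value \(k_\psi+1\) occurring exactly when \(\psi\) is satisfiable. This can be done by adding a branch that always achieves deviation exactly \(k_\psi\).

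Applying the same gadget to \(\varphi\) gives \(\dev(R_{\Cal{T}_\varphi}) \in \{k_\varphi, k_\varphi+1\}\). I would then combine the two gadgets additively, so that the combined deviation is the sum of the two deviations. To do this, build \(\Cal{T}\) reading inputs of the form \(u\#v\) with a fresh separator: it runs \(\Cal{T}_\varphi\) on \(u\), copies \(\#\), then runs \(\Cal{T}_\psi\) on \(v\). The Hamming distance of a concatenation is the sum of the distances when lengths match on each part. The deviation is then \(\dev_\varphi + \dev_\psi\), and choosing \(k = k_\varphi + 1 + k_\psi\) makes \(\dev = k\) hold exactly when \(\varphi\) is satisfiable and \(\psi\) is not.

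\textbf{Main obstacle.} The main obstacle is the concatenation step. Hamming distance is infinite, or undefined, on words of different lengths. So I must ensure the gadgets only produce length-preserving pairs, or that the convention for unequal lengths is compatible with the sum. I would first try to check that the co-NP-hardness gadget is letter-to-letter and adapt it if it is not.
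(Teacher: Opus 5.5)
Your membership argument is correct and is the paper's: intersect the co-NP threshold problem at $k$ with the complement of the threshold problem at $k-1$. The hardness reduction, however, fails at the final step. You have $\dev_\varphi\in\{k_\varphi,k_\varphi+1\}$ and $\dev_\psi\in\{k_\psi,k_\psi+1\}$, and the sum $\dev_\varphi+\dev_\psi$ equals $k_\varphi+k_\psi+1$ in two cases. One is when $\varphi$ is satisfiable and $\psi$ is not. The other is when $\varphi$ is unsatisfiable and $\psi$ is satisfiable. So your instance is also a yes-instance on (UNSAT, SAT) pairs, and the map is not a reduction from SAT-UNSAT. A symmetric sum of two gadgets that each add one extra mismatch cannot tell which gadget added it. The missing idea is to weight the two contributions asymmetrically.

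\textbf{Fix.} In your setting, concatenate two copies of the sharpened $\varphi$-gadget, so its deviation becomes $2k_\varphi$ or $2k_\varphi+2$. The four cases (unsat, unsat), (unsat, sat), (sat, unsat), (sat, sat) then give $2k_\varphi+k_\psi$, $2k_\varphi+k_\psi+1$, $2k_\varphi+k_\psi+2$, $2k_\varphi+k_\psi+3$ respectively, so $k=2k_\varphi+k_\psi+2$ works.

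\textbf{The paper's version.} It does the analogous thing:
\begin{itemize}
\item It concatenates $k_2$ copies of $\Cal{T}_1$. The $\varphi_1$-part then contributes exactly $k_1k_2$ if $\varphi_1$ is satisfiable and at most $(k_1-1)k_2$ otherwise, so no padding is needed there.
\item It pads only the $\varphi_2$-part, with a branch $\Cal{T}_3$ realizing $(0^{k_2-1},1^{k_2-1})$.
\item It targets $k_1k_2+k_2-1$.
\end{itemize}

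\textbf{Length.} The obstacle you flagged is not a real one. Every pair in the relation of the co-NP gadget satisfies $|u|=|v|=n(m+1)$. Hence concatenation, with or without a separator, adds Hamming distances, even though the gadget is not letter-to-letter.
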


\noindent
Note that, by Theorem~\ref{thm:equivalent_problems},
these results immediately extend to the Comparison Problems.
Theorems~\ref{thm-BoundedNFT}-\ref{thm-Kexact} are proved over two sections:
in Section~\ref{sec:lowerbounds}, we prove the matching hardness results
by reductions from canonical complete problems
(Propositions~\ref{prop:bounded-hardness}, \ref{prop:fixed-hardness}, \ref{prop:threshold-bounded-hardness} and \ref{prop:exact-hardness}).
In Section~\ref{sec:upperBounds},
we define algorithms establishing membership of the Bounded Problems in the corresponding complexity class
(Propositions~\ref{prop:BoundedNFTHigh} and \ref{prop:KboundedNFTHigh}).
Remark that the Exact Deviation Problem is DP as it is the intersection of the Threshold Bounded Problem which is co-NP, and its complement~\cite{SainaThesis}.
We also show a quadratic bound on the size of the transducer for the deviation of bounded NFT.

\begin{theorem}\label{thm-tightBound}
	For every NFT \(\Cal{T}\), if \(\dev(R_{\Cal{T}}) < \infty\) then \(\dev(R_{\Cal{T}}) = O(|\Cal{T}|^2)\).
	Moreover, there exists a family \((\Cal{T}_{n})_{n \in \mathbb{N}}\)
	such that each \(\Cal{T}_n\) has \(2n\) states, \(3n-1\) atomic transitions, and satisfies
	\(\dev(R_{\Cal{T}_i}) = \frac{n^2+n}{2}\).
\end{theorem}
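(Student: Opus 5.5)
We prove the two halves separately: a structural upper bound obtained by analysing a single accepting run, and an explicit family for the lower bound.

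\textbf{Upper bound.} We first normalise $\Cal{T}$ so that every transition is \emph{atomic}, reading at most one input letter and producing at most one output letter; this changes $|\Cal{T}|$ only linearly. Along an accepting run on $(u,v)$, the relevant quantity is the \emph{shift}: after a prefix of the run, it is the difference between the number of output letters produced and the number of input letters consumed.

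The Hamming distance is infinite (conventionally) when $|u|\neq|v|$. So bounded deviation forces every accepting run to end with shift $0$. More strongly, every cycle reachable and co-reachable in a trim transducer must have net shift $0$. Otherwise, pumping it would yield an accepted pair of unequal lengths.

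The next step is to bound the shift. Suppose some run reaches shift $s$ with $|s|>|Q|$, where $Q$ is the set of states. Take the last positions at which the shift passes through each of the values $0,1,\dots,s$. By pigeonhole, two of these positions carry the same state, and the run segment between them is a cycle with nonzero net shift, a contradiction. Hence the shift always stays in $[-|Q|,|Q|]$.

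Now fix an accepting run on $(u,v)$ with $|u|=|v|$. A mismatch at position $i$ compares $u_i$ with $v_i$. We consider the configuration at the moment the run consumes $u_i$, which determines a pair (state, current shift); there are $O(|Q|\cdot|Q|)=O(|\Cal{T}|^2)$ such pairs. If the mismatched positions numbered more than this, two of them would share the same pair. The segment between them is then a zero-shift cycle that contains a mismatch, or more precisely contains a ``misaligned'' pair. Pumping this cycle would repeat the mismatch and make the deviation unbounded.

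\textbf{Main obstacle.} The delicate point is making the pumping argument sound. Because of the shift, a mismatch involves letters read and written at different times. Repeating a cycle with zero net shift and constant offset $s$ copies verbatim the input/output alignment inside the cycle, but only when the mismatched positions are wholly contained in the pumped region (up to an offset of at most $|Q|$). We would handle this by associating each mismatch with the pair (state, shift) at the time the \emph{later} of its two letters is processed. Since the cycle has zero net shift, pumping it $m$ times preserves every alignment strictly inside it, so its internal mismatches are replicated $m$ times. Mismatches near the cycle boundaries, within the window of size $|s|\le|Q|$, need a separate count. One option is to refine the configuration to include the buffer of pending letters, but the buffer can contain up to $|\Sigma|^{|Q|}$ words, so this would not give a polynomial bound. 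We therefore plan to count such boundary mismatches separately, showing there are at most $O(|Q|)$ of them per distinct configuration class. That bounds the total by $O(|\Cal{T}|^2)$.

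\textbf{Lower bound.} For each $i=1,\dots,n$ we build a gadget that delays the output by shift $i$: it reads $i$ letters $a$ while outputting nothing, and later outputs them while reading. Concretely, we chain $n$ blocks of states. In block $i$ the transducer lags by $i$ positions and then catches up, so that input $a^i b^i$ is output as $b^i a^i$, producing $i$ mismatches. We arrange that each block contributes its mismatches only once, using acyclic states, with $2$ states and about $3$ atomic transitions per block, giving $2n$ states and $3n-1$ transitions. The total deviation is then $\sum_{i=1}^n i=\frac{n^2+n}{2}$. Finiteness holds because the construction is acyclic, or has only zero-shift, mismatch-free loops. Exactness requires checking every run, which is routine but tedious.
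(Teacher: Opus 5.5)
There are genuine gaps in both halves. Your normalization to atomic transitions and your bound $|s|\leq|Q|$ on the shift are fine, but the pigeonhole step of the upper bound is false as stated. Suppose two mismatches have their input letters consumed in the same configuration. The cycle $\rho'$ between them returns to a state $p$ with offset $s_p$. The partners of the letters read in $\rho'$ are the output letters $s_p$ positions away, and these need not be produced inside $\rho'$ (for instance when $\rho'$ reads at most $|s_p|$ letters). Pumping replicates only comparisons lying entirely inside $\rho'$. The paper's family refutes your claim. The shift is a function of the state (extend two initial runs to $p$ by a common final run), and every input letter of $\Cal{T}_n$ is read from one of $p_1,\dots,p_n$. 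So your argument would cap the deviation of $\Cal{T}_n$ at $n$, whereas it equals $\frac{n^2+n}{2}$. Concretely, the $i-1$ mismatches opening block $i$ all have their input letter read from $p_{i-1}$, and the cycle between any two of them is an iterate of a loop $x|x$, which is conjugate at every offset. Your ``main obstacle'' paragraph senses this, but the repair is only announced. Its count ($O(|Q|^2)$ classes times $O(|Q|)$ each) would also be cubic; since the shift is determined by the state there are only $|Q|$ classes, and the per-class bound is exactly what must be proved. That is the argument of Lemma~\ref{lemma:QuadBound_imply_not_properties}:
\begin{itemize}
\item Suppose $p$ is the source of the reading transitions of at least $|s_p|+2$ mismatches, and let $\rho'$ over $(u',v')$ run from the first to the last of them.
\item Then $\rho'$ reads at least $|s_p|+1$ mismatched letters $u'_i$, whose partners are $v'_{i+s_p}$ (with $s_p$ taken as input length minus output length).
\item At most $|s_p|$ indices satisfy $i+s_p\notin[1,|u'|]$, so some $u'_i\neq v'_{i+s_p}$.
\item Hence $u'$ is not conjugate to $v'$ by $s_p$, and pumping does replicate a mismatch (Lemma~\ref{lemma:bounded_imply_conjugate_cycle}).
\end{itemize}
Summing over states bounds the deviation by $|Q|(|Q|+1)$.

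Your lower-bound construction also does not work. With atomic transitions each step changes the shift by at most $1$, and the shift is determined by the state. So a block that builds a lag of $i$ and then catches up contains at least $i$ states, and your blocks cannot have $2$ states each. With honest blocks you get $\Theta(n^2)$ states for $\Theta(n^2)$ mismatches, a linear ratio that proves nothing. Two smaller errors: $a^ib^i$ and $b^ia^i$ differ in $2i$ positions, not $i$. Also, ``zero-shift, mismatch-free loops'' do not ensure boundedness at a state with $s_p\neq 0$; a loop must be conjugate by $s_p$, and e.g.\ $ab|ab$ at shift $1$ is not.

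The missing idea is that the lag is \emph{cumulative} and repaid only once. In Figure~\ref{fig:quadratic-mismatches-example}, each $p_i$ with $i<n$ has a loop $(i\bmod 2)|(i\bmod 2)$ and a transition $(i\bmod 2)|\varepsilon$ to $p_{i+1}$. So block $i$ runs with lag $i-1$, and the change of letter between consecutive blocks creates $i-1$ mismatches. The chain $p_n\to q_1\to\cdots\to q_n$ then writes $((n+1)\bmod 2)^n$ against the last $n$ input letters, giving $n$ more. This uses $2n$ states and $3n-1$ atomic transitions and reaches $\sum_{i=1}^n i$ mismatches. The matching upper bound for this family, which you call routine, is short. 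In the output block $(i\bmod 2)^{k_i}$ written by the loop at $p_i$, only the first $\min(i-1,k_i)$ positions face input letters from earlier blocks. The final $n$ positions contribute at most $n$.
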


The upper bound is shown in Proposition~\ref{prop:BoundedNFTHigh}.
The lower bound is a consequence of Lemma~\ref{lem:asymptoticFamily}.


\subparagraph*{Related work.}
The Comparison Problems for both Hamming and Levenshtein distance were already studied in~\cite{AiswaryaMS24},
with additional details in the long version~\cite{AiswaryaMS24LongVersion}
and in the PhD thesis of S.~Sunny~\cite{SainaThesis}.
In particular, it is shown that for the Hamming distance:
\begin{enumerate}
	\item the Bounded Comparison Problem is decidable in polynomial time~\cite[Theorem 4.10]{AiswaryaMS24},
	\item the Threshold-bounded Comparison Problem is in co-NP~\cite[Theorem 4.13]{AiswaryaMS24LongVersion},
	\item the Exact Comparison Problem is in DP~\cite[Theorem 6.19]{SainaThesis}.
\end{enumerate}
These results correspond to the upper bounds we revisit in Section~\ref{sec:upperBounds}.
Our contributions strengthen them as follows.
For the Bounded problem,
while we use a similar proof structure and characterization,
we provide alternative proofs,
and we show that the problem is actually in NL (Proposition~\ref{prop:BoundedNFTHigh}).
Moreover, we establish a tight quadratic bound on the distance when it is finite (Theorem~\ref{thm-tightBound}).
Regarding this bound, although no explicit statement appears in~\cite{AiswaryaMS24LongVersion},
a polynomial bound can be extracted from the proofs therein.
We make this bound explicit and show that it is optimal.
For the Threshold-bounded Problem,
we identify the true source of intractability:
the parameter \(k\), rather than the transducers themselves.
More precisely, we show that the Threshold-bounded Comparison problem drops from co-NP to NL when \(k\) is fixed (Proposition~\ref{prop:KboundedNFTHigh}).
Note that these similarities with~\cite{AiswaryaMS24} concern only Section~\ref{sec:upperBounds},
whereas Section~\ref{sec:lowerbounds} contains entirely new results.

\smallskip
\noindent
Other extensions of edit distance to rational relations have been considered.

The notion studied in this paper is inherently \emph{universal}:
for every input, \emph{all} corresponding outputs must be close.
In contrast, the notion of \emph{almost reflexivity}, introduced in~\cite{ChoffrutP02},
adds an existential quantifier:
for every input, there must exist \emph{at least one} corresponding output that is close.
This notion is less well-behaved algorithmically,
as most related decision problems are undecidable over the class of rational relations.

The \emph{robustness} introduced in~\cite{SamantaDC13,HenzingerOS14}
proposes another way of comparing transducers based on edit distances.
Rather than comparing outputs corresponding to a fixed input, like the Comparison Problems,
or comparing input and output, like the Deviation Problems,
it enforces a Lipschitz-like condition,
requiring that close inputs yield close outputs.
Again, this notion is algorithmically less well-behaved and leads to undecidability in general:
restrictions to subclasses of rational relations are required to recover decidability.


\section{Preliminaries} 
\label{sec:preliminaries}

\subparagraph*{Words and Hamming distance.}
An \emph{alphabet} is a finite set of symbols called \emph{letters}.
A \emph{word} over a given alphabet \(\Sigma\) is a finite sequence of letters of \(\Sigma\).
The \emph{empty word}, denoted \(\varepsilon\), is the empty sequence.
The set of all words over an alphabet \(\Sigma\) is the free monoid \(\Sigma^{*}\).
Given a word \(u\), we denote by \(|u|\) the length of the sequence of letters of \(u\), i.e.
the number of letters in \(u\), and for each \(i\) in \(1 \leq i \leq |u|\) we denote by \(u_{i}\) the i-th letter of \(u\).
For two words \(u = u_{1}u_{2}\cdots u_{n}\), \(v = v_{1}v_{2}\cdots v_{m}\) we denote \(uv\) the concatenation of \(u\) and \(v\): \(uv = u_{1}\cdots u_{n}v_{1}\cdots v_{m}\).
Note that the concatenation operation is associative and \(\varepsilon\) is the neutral element for this operation.
For a given word \(u \in \Sigma^{*}\), we say that a word \(x \in \Sigma^{*}\) is a \emph{factor} of \(u\) if there exist two words \(v\) and \(w\) of \(\Sigma^{*}\) such that \(u = vxw\).
A \emph{prefix} (resp. \emph{suffix}) \(x \in \Sigma^{*}\) of \(u \in \Sigma^{*}\) is a factor  of \(u\) where \(v\) (resp. \(w\)) is the empty word.
A word \(v_1\cdots v_n\) is a \emph{sub-word} of a word \(u\) if there exist some words \(w_0,\ldots,w_n\) such that \(u=w_0v_1w_1\cdots v_nw_n\).
We say that two words \(u\) and \(v\) are \emph{conjugate} if there exist a prefix \(w\) of \(u\) and a prefix \(z\) of \(v\) such that \(u = wz\) and \(v = zw\).
Furthermore, we say that \(u\) is \emph{conjugate to \(v\) by \(n \in \mathbb{Z}\)} if \(u\) and \(v\) are in fact conjugate and for all \(i,j \in [1, |u|]\) such that \(j-i \equiv n \bmod |u|\), \(u_i = v_j\)\footnote{Note that it is symmetric, if \(u\) is conjugate to \(v\) by \(0\leq n<|u|\), then \(v\) is conjugate to \(u\) by \(|u| - n\)}.
In this article, we are interested in deciding the similarity between machines and their word outputs, using the \emph{Hamming distance} \cite[S.~3.6]{Hamming1986}. To this end, we define the function \(d\) from \((u,v) \in \Sigma^* \times \Sigma^*\) to \( d(u,v) \in \mathbb{N} \cup \{+\infty\}\) where:
	\[
		d(u,v) = 
		\begin{cases}
			+\infty, &\text{ if } |u| \neq |v|\\
			|\{i \in [1,|u|] \mid u_i \neq v_i\}|
			&\text{ if } |u| = |v|\\
		\end{cases}
	\]
	Note that the function \(d\) is equal to the Hamming distance when \(|u| = |v|\).

\subparagraph*{Non-deterministic Finite-state Transducers.}
A \emph{non-deterministic finite-state transducer} (NFT) \(\Cal{T}\) is an extension of a finite automaton, i.e. a quintuplet \((Q,\Sigma, Q_{i}, Q_{f}, \Delta)\) where \(Q\) is a finite set of states, \(\Sigma\) is both the input and output alphabet, \(Q_i\subseteq Q\) and \(Q_f\subseteq Q\) are respectively the sets of initial and final states. The set of transitions \(\Delta\) is a finite subset of \(Q \times \Sigma^* \times \Sigma^* \times Q\). Note that generally the input and output alphabets are defined as different. However, this definition is without loss of generality as one can always consider \(\Sigma\) to be the union of the input and output alphabets.

A \emph{run} \(\rho\) of \(\Cal{T}\) is a word \(\delta_{1}\delta_{2}\cdots\delta_{n}\) in \(\Delta^{*}\) such that for all \(1 \leq i < n\), \(\pi_4(\delta_i) = \pi_1(\delta_{i+1})\), \emph{where \(\pi_{j}(\delta)\) is the projection on the j-th component of \(\delta\)}.
If \(n \geq 1\), we say that \(\rho\) is a run from a state \(p = \pi_1(\delta_1)\) to a state \(q = \pi_4(\delta_n)\) over the pair of words \((u,v)\), where \(u = \pi_2(\delta_1)\pi_2(\delta_2)\cdots\pi_2(\delta_n)\) and \(v = \pi_3(\delta_1)\pi_3(\delta_2)\cdots\pi_3(\delta_n)\).
The length of a run \(\rho\), denoted by \(|\rho|\), is the number of transitions of \(\rho\).
A run  \(\rho\) is said to be \emph{empty} if and only if \(|\rho| = 0\), in this case, the run \(\rho\) is a run from any state \(q \in Q\) to itself over the pair of words \((\varepsilon, \varepsilon)\).  
A run \(\rho\) from \(p\) to \(q\) is \emph{initial} if \(p \in Q_i\) and \emph{final} if \(q \in Q_f\).
A run that is both initial and final is an \emph{accepting} run of \(\Cal{T}\).

A transducer \(\Cal{T}\) defines a relation, denoted \(R_\Cal{T}\), as a subset of \(\Sigma^* \times \Sigma^*\) where \((u,v) \in R_\Cal{T}\) if and only if there exists an accepting run \(\rho\) over \((u,v)\).
The domain \(\dom(R_{\Cal{T}})\) of \(R_{\Cal{T}}\) is a subset of \(\Sigma^*\) defined as \(\{u \mid (u,v) \in R_{\Cal{T}}\}\).
An NFT \(\Cal{T}\) is \emph{length-preserving} if for all \((u,v)\) in \(R_{\Cal{T}}\), \(|u| = |v|\).
Given an integer  \(k\geq 0\), an NFT \(\Cal{T}\) is said to be \(k\)\emph{-bounded} if \(d(u,v) \leq k\) for all \((u,v) \in R_{\Cal{T}}\). It is \emph{bounded} if it is \(k\)-bounded for some \(k\geq 0\).
Two NFTs \(\Cal{T}\) and \(\Cal{T'}\) are said to be \emph{equivalent} if \(R_{\Cal{T}} = R_{\Cal{T'}}\).
Throughout this article, we assume that all NFTs are trimmed.
Note that automata and hence transducers
can be made trim in \logspace~using repeated calls
to the NL-complete Directed Graph Reachability Problem \cite{Papadimitriou94}.

In this article, we consider two metrics for the size of a transducer. When possible we only consider the number of states, denoted \(|Q|\). However, as we consider non-deterministic transducers that can read and output finite but arbitrarily long words, we sometime refer to the size of \(|\Cal{T}|\), by which we mean the total size needed to represent \(\Cal{T}\) on the tape of a Turing Machine. In particular, it takes into account \(Q\) but also all transitions, written as quadruplets where the indexes of the states are written in binary and the input and output words are written as such.

Finally, we also introduce two functions that will be used for some proofs. Given an NFT \(\Cal{T}\), and a run \(\rho\) of \(\Cal{T}\) over some \((u,v)\):
\begin{itemize}
	\item We define the function \(\inn_\rho\) from \([1,|u|]\) to \([1, |\rho|]\) associating an index \(i\) in \(u\) to the index of the transition reading the i-th letter of the input \(u_i\) in \(\rho\).
	\item Similarly, we define the function \(\out_\rho\) from \([1,|v|]\) to \([1, |\rho|]\) which associates an index \(j\) of \(v\) to the index of the transition writing the j-th letter of the output \(v_j\) in \(\rho\).
\end{itemize}
When the run $\rho$ is clear from context, we simply write \(\inn(i)\) and \(\out(j)\).

\subparagraph*{Shift of a run.} 
\label{sub:misc}
In this contribution, the \emph{shift} of a run is the difference in length between what is read
and what is produced.
Formally, the shift of a transition \(\delta = (p,u,v,q)\) of an NFT is denoted by \(\shift(\delta)\) and is equal to \(|u| - |v|\).
The \emph{maximum transition shift} of an NFT \(\Cal{T}\) is denoted by \(\smax(\Cal{T})\), and it is equal to \( \max\limits_{\delta \in \Delta}(|\shift(\delta)|)\).
The notion of shift extends to runs as follows: given a run \(\rho\) of an NFT over \((u,v)\), the \emph{shift of the run}, also denoted by \(\shift(\rho)\), is equal to \(|u| - |v|\).
\begin{remark}\label{remark:shift_def}
    Given a non-empty run \(\rho = \delta_1\delta_2\cdots\delta_n\) of an NFT \(\Cal{T}\) we have \(\shift(\rho) = \sum\limits_{i = 1}^{n}\shift(\delta_i)\).
\end{remark}
In parallel, the \emph{length} of a transition \(\delta = (p,u,v,q)\) of an NFT is denoted \(|\delta|\) and is equal to \(|u|+|v|\).
The \emph{maximum transition length} of an NFT \(\Cal{T}\) is denoted \(\lmax(\Cal{T})\) and is equal to
\(
	\max\limits_{\delta \in \Delta}(|\delta|).
\)

\begin{example}\label{ex-quadratic}
	We give an example of an NFT \(\Cal{T}_4\) in~\cref{fig:quadratic-mismatches-example}. It realizes the relation:
	\[
		R_{\Cal{T}_4} =
		\{
			(1^{n_1+1}0^{n_2+1}1^{n_3+1}0^{n_4+1},
			1^{n_1}0^{n_2}1^{n_3}0^{n_4}1^4)
			\mid n_1,n_2,n_3,n_4 \in \mathbb{N}
		\}.
	\]
	The pair \((1001110000,0110001111)\) is in \(R_{\Cal{T}_4}\)
	and produces \(\sum\limits_{i=1}^4i=10\) mismatches.
	We now generalize this construction.
\end{example}
	
\begin{lemma}\label{lem:asymptoticFamily}
	There exists a family of NFTs \((\Cal{T}_n)_{n \geq 2}\)
	such that for every \(n \geq 2\) the NFT \(\Cal{T}_n\) has \(2n\) states
	and satisfies \(\dev(R_{\Cal{T}}) \geq \frac{n^2+n}{2}\).
	Moreover, the NFT \(\Cal{T}_n\) has \(3n-1\) transitions and \(\smax(\Cal{T}_n)=1\).
\end{lemma}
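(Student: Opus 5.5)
Following Example~\ref{ex-quadratic}, I would take \(\Cal{T}_n\) to realize
\[
R_{\Cal{T}_n}=\{(a_1^{m_1+1}a_2^{m_2+1}\cdots a_n^{m_n+1},\; a_1^{m_1}a_2^{m_2}\cdots a_n^{m_n}1^n)\mid m_1,\dots,m_n\in\mathbb{N}\},
\]
where \(a_i=1\) for odd \(i\) and \(a_i=0\) for even \(i\).

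\textbf{Construction.} The states are \(p_1,q_1,\dots,p_n,q_n\), which gives \(2n\) states.

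\begin{itemize}
\item For each block \(i\), there is a transition \((p_i,a_i,\varepsilon,q_i)\). It reads the extra letter of the block and outputs nothing, so it has shift \(1\).
\item For each block \(i\), there is a loop \((q_i,a_i,a_i,q_i)\), which copies the block letter.
\item For each \(i<n\), there is a linking transition \((q_i,\varepsilon,1,p_{i+1})\). It outputs one of the trailing \(1\)'s early, so it has shift \(-1\).
\item The initial state is \(p_1\). The last block needs one more trailing \(1\): since transitions are length-changing, I would take \(q_n\) final and let one of the linking transitions emit it.
\end{itemize}

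The transition count has to land exactly on \(3n-1\) (\(2n\) block transitions plus \(n-1\) links) while emitting \(n\) ones in total. So I would adjust: let the first transition be \((p_1,a_1,\varepsilon,q_1)\), and let the link into block \(i+1\) be \((q_i,a_{i+1},1,q_{i+1})\). This merges \(p_{i+1}\)'s transition into the link, with shift \(0\). That alone does not give the same relation, so I would simply count: the exact shape in Example~\ref{ex-quadratic} for \(n=4\) has \(11=3\cdot4-1\) transitions, and I would read off the general pattern from the figure. Every transition reads and writes at most one letter, so \(\smax(\Cal{T}_n)=1\) holds in any such variant. The precise bookkeeping of which transition emits which trailing \(1\) is the routine but fiddly part.

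\textbf{Mismatch count.} Take the witness with \(m_i=i-1\), as in the example (\(n=4\) gives the blocks \(1,00,111,0000\)). The input is \(a_1^{1}a_2^{2}\cdots a_n^{n}\). The output is \(a_1^{0}a_2^{1}\cdots a_n^{n-1}1^n\). Both have length \(\frac{n(n+1)}{2}\).

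Compare position by position. In the output, block \(i\) has been shifted left by exactly \(i\) positions relative to input block \(i\), since each preceding block lost one letter. So output block \(i\) (of length \(i-1\)) begins exactly where input block \(i\) begins minus \(i-1\). It therefore sits entirely inside input block \(i-1\), which carries the opposite letter, and so every one of its \(i-1\) letters mismatches.

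The trailing \(1^n\) sits under input block \(n\). Here I would instead choose the parity so that the last input block is \(0\), or argue that the count works out, as checked on the example, which gives \(10=\sum_{i=1}^4 i\).

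Summing gives \(\sum_{i}(i-1)+n=\frac{n^2+n}{2}\). Since \(\dev\) is a supremum over pairs in the relation, this yields \(\dev(R_{\Cal{T}_n})\geq\frac{n^2+n}{2}\).

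\textbf{Main obstacle.} The main obstacle is aligning the exact transition count \(3n-1\) with the parity of the trailing ones. I would first verify the general construction against the \(n=4\) figure, and only then write the index computation for the block alignment.
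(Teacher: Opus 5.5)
\textbf{There is a genuine gap.} Your alignment count is the right one for the intended relation, but you never construct \(\Cal{T}_n\), and the machine you do sketch realizes a different relation.

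\textbf{Why the sketched machine fails.} A transducer concatenates outputs in run order, so your links \((q_i,\varepsilon,1,p_{i+1})\) do not emit trailing letters ``early''. Your first machine produces \(a_1^{m_1}\,1\,a_2^{m_2}\,1\cdots 1\,a_n^{m_n}\), which is one letter shorter than the input.
\begin{itemize}
\item It is therefore not length-preserving and has \(\dev=\infty\). That satisfies the stated inequality only vacuously, and is useless for the tightness claim of Theorem~\ref{thm-tightBound}, which needs a \emph{bounded} family.
\item More importantly, the interleaved letters destroy exactly what your counting relies on: the lag of \(i-1\) output letters at the start of block \(i\). Each junction letter cancels the unit of lag created by the preceding deleting transition, so the lag never exceeds \(1\).
\item Consequently, even after fixing the length with one extra output letter at the end, every input block \(a_i^{m_i+1}\) lies exactly under an output segment \(a_i^{m_i}c\) for a single letter \(c\). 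That gives at most \(n\) mismatches.
\end{itemize}
You concede that your ``adjusted'' variant changes the relation. ``Reading off the general pattern from the figure'' is precisely the content of the lemma, so the transducer is still missing.

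\textbf{The missing construction.} The \(q\)-states should not carry the blocks. They should form a separate tail that emits all \(n\) extra letters \emph{after} the last input block. With \(a_i=i\bmod 2\) and \(\bar a_n=1-a_n\), take
\[
(p_i,a_i,a_i,p_i)\ (1\leq i\leq n),\quad (p_i,a_i,\varepsilon,p_{i+1})\ (i<n),\quad (p_n,a_n,\bar a_n,q_1),\quad (q_i,\varepsilon,\bar a_n,q_{i+1})\ (i<n),
\]
with \(p_1\) initial and \(q_n\) final.
\begin{itemize}
\item This has \(2n\) states and \(n+(n-1)+1+(n-1)=3n-1\) transitions; the loop on \(p_n\) is needed to reach this count.
\item The shifts lie in \(\{-1,0,1\}\), so \(\smax(\Cal{T}_n)=1\).
\item It realizes \((a_1^{k_1+1}\cdots a_n^{k_n+1},\,a_1^{k_1}\cdots a_n^{k_n}\bar a_n^{\,n})\).
\end{itemize}
This is the paper's construction, drawn in Figure~\ref{fig:quadratic-mismatches-example} for \(n=4\).

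\textbf{The parity issue.} The tail letter must be \(\bar a_n\), not always \(1\). For odd \(n\) your relation has \(a_n=1\), so the tail \(1^n\) coincides with input block \(n\). Your witness then yields only \(\frac{n^2-n}{2}\) mismatches, and one checks that no pair of that relation does better. So the fallback of arguing that ``the count works out'' is not available.

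\textbf{With these corrections, your argument goes through.} Use lag \(i-1\), not \(i\). Take \(k_i=i-1\) for all \(1\leq i\leq n\). Then for \(2\leq i\leq n\), output block \(i\) (of length \(i-1\)) coincides exactly with input block \(i-1\), which carries the opposite letter. The tail \(\bar a_n^{\,n}\) coincides with input block \(n\). Hence all \(\frac{n^2+n}{2}\) positions mismatch.
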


\begin{proof}
	For every \(n \geq 2\), we define an NFT
	\(\Cal{T}_n = (\{p_1,\ldots,p_n\}\cup\{q_1,\ldots,q_n\},\{0,1\}, p_1, q_n, \Delta_n)\),
	where \(\Delta_n\) is defined as follows:
	\[
	\begin{array}{lll}
	\Delta_n = & \{
	\{(p_i,i\bmod 2,\varepsilon,p_{i+1}),(p_i,i\bmod 2,i\bmod 2,p_{i})\mid 1\leq i< n\} \\
	&\cup \{(p_n,(n+1\bmod 2),n\bmod 2,q_1)\}\\
	& \cup \ \{(q_i,\varepsilon,n\bmod 2,q_{i+1})\mid 1\leq i<n\}.
	\end{array}
	\]
	The transducer \(\Cal{T}_n\) has \(2n\) states and \(3n-1\) transitions. The maximal shift on any given transition is \(1\), and it recognizes the relation \(R_{\Cal{T}_n}\) equal to
	\[
	\{
	(1^{k_1+1}0^{k_2+1} \cdots (n\bmod 2)^{k_{n}+1},
	1^{k_1}0^{k_2}\cdots (n\bmod 2)^{k_{n-1}}(n+1\bmod 2)^n)
	\mid k_1,\ldots,k_n \in \mathbb{N}
	\}.
	\]
	The claimed bound \(\dev(R_{\Cal{T}_n}) \geq \frac{n^2+n}{2}\) is witnessed by the pair
	\[
	(10^{2}1^{3}\cdots (n-1\bmod 2)^{n-1}(n\bmod 2)^{n},
	01^20^3\cdots (n\bmod 2)^{n-1}(n+1\bmod 2)^n) \in R_{\Cal{T}_n}
	\]
	which occurs by setting \(k_i = i-1\)
	for every \(1 \leq i \leq n-1\).
\end{proof}

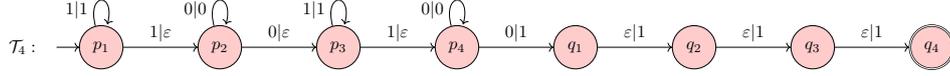
\begin{figure}
	\centering
	\resizebox{0.9\textwidth}{!}{

\begin{tikzpicture}[node distance = 1.5cm, auto,initial text=]
	\node[state,initial, fill=red!20](q_1){\(p_1\)};
	\node[state, fill=red!20](q_2)[right=of q_1]{\(p_2\)};
	\node[state, fill=red!20](q_3)[right=of q_2]{\(p_3\)};
	\node[state, fill=red!20](q_4)[right=of q_3]{\(p_4\)};
	\node[state, fill=red!20](q_5)[right=of q_4]{\(q_1\)};
	\node[state, fill=red!20](q_6)[right=of q_5]{\(q_2\)};
	\node[state, fill=red!20](q_7)[right=of q_6]{\(q_3\)};
	\node[state, fill=red!20, accepting](q_8)[right=of q_7]{\(q_4\)};
	\node[](t)[left=0.75cm of q_1]{\(\Cal{T}_4:\)};

	\path[->] (q_1) edge [loop above]	node [left, pos=0.2]	{\(1|1\)}			(q_1)
			  (q_1) edge				node [above]			{\(1|\varepsilon\)}	(q_2)
			  (q_2) edge [loop above]	node [left, pos=0.2]	{\(0|0\)}			(q_2)
			  (q_2) edge				node [above]			{\(0|\varepsilon\)} (q_3)
			  (q_3) edge [loop above]	node [left, pos=0.2]	{\(1|1\)}			(q_3)
			  (q_3) edge				node [above]			{\(1|\varepsilon\)}	(q_4)
			  (q_4) edge [loop above]	node [left, pos=0.2]	{\(0|0\)}			(q_4)
			  (q_4) edge 				node [above]			{\(0|1\)}			(q_5)
			  (q_5) edge 				node [above]			{\(\varepsilon|1\)}	(q_6)
			  (q_6) edge 				node [above]			{\(\varepsilon|1\)}	(q_7)
			  (q_7) edge 				node [above]			{\(\varepsilon|1\)}	(q_8);
\end{tikzpicture}}
	\caption{A bounded NFT with 8 states, 11 atomic transitions, creating at most 10 mismatches.}
	\label{fig:quadratic-mismatches-example}
\end{figure}

\subsection{Input-atomic transducer} 
\label{sub:atomic_transducer}

We introduce the notion of \emph{input-atomic transducer} as it will be used to simplify some proofs in the \cref{sec:application_to_the_k_similarity_problem}.
An input-atomic transducer is an NFT \(\Cal{T} = (Q,\Sigma, Q_{i}, Q_{f}, \Delta)\) where all transitions in \(\Delta\) read at most one letter.

\begin{lemma}\label{lemma:atomic_transducer_construction}
    For every NFT, an equivalent input-atomic transducer can be constructed in \logspace.
\end{lemma}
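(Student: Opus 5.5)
The construction splits every transition reading more than one letter into a chain of transitions through fresh intermediate states, each reading a single letter. Concretely, given $\Cal{T} = (Q,\Sigma,Q_i,Q_f,\Delta)$, each transition $\delta = (p,u,v,q) \in \Delta$ with $|u| = m \geq 2$, $u = u_1\cdots u_m$, is handled as follows.
\begin{itemize}
\item We add fresh states $r_{\delta,1},\ldots,r_{\delta,m-1}$.
\item We replace $\delta$ by the transitions $(p,u_1,v,r_{\delta,1})$, $(r_{\delta,j},u_{j+1},\varepsilon,r_{\delta,j+1})$ for $1 \leq j < m-1$, and $(r_{\delta,m-1},u_m,\varepsilon,q)$.
\end{itemize}
The whole output $v$ is emitted on the first transition; this is one arbitrary but valid choice. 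Transitions with $|u| \leq 1$ are kept unchanged. Initial and final states stay the same. Call the result $\Cal{T}'$; every transition of $\Cal{T}'$ reads at most one letter, so $\Cal{T}'$ is input-atomic.

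\textbf{Equivalence.} Each fresh state $r_{\delta,j}$ has exactly one incoming and one outgoing transition. Hence any run of $\Cal{T}'$ between original states decomposes uniquely into maximal blocks. Each block is either an unchanged transition of $\Delta$ or a full chain associated to some $\delta$. Such a chain runs from $p$ to $q$ over exactly $(u,v)$. The map sending $\delta$ to its chain thus gives a bijection between runs of $\Cal{T}$ and runs of $\Cal{T}'$ from $Q$ to $Q$, preserving the label pair. Accepting runs of $\Cal{T}'$ start and end in original states, since $Q_i,Q_f \subseteq Q$. Therefore $R_{\Cal{T}} = R_{\Cal{T}'}$. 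If trimness must be preserved, note that it is: a fresh state lies on a chain between two states of the trim $\Cal{T}$.

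\textbf{Logspace computation.} This is the only point needing care. The transducer is produced by a single pass over the input tape, iterating over transitions with a pointer. For each transition, the machine writes the chain by iterating over an index $j \leq |u|$, using counters of logarithmic size.

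Fresh states must be named without storing anything. We name $r_{\delta,j}$ by the pair (position of $\delta$ on the input tape, $j$), written in binary. These names are distinct from each other. They are also distinct from the names of original states, which can be ensured by prefixing a tag bit or by offsetting past $|Q|$, computable in logspace. Letters $u_{j}$ and the word $v$ are copied directly from the input tape.

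All counters are bounded by $|\Cal{T}|$, so the transducer uses $O(\log |\Cal{T}|)$ space. I expect no real obstacle; the only subtlety is this consistent naming of fresh states without storage, which the position-based encoding resolves.
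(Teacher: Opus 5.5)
Your proposal is correct and uses the same construction as the paper: each transition reading $|u|\geq 2$ letters becomes a chain through fresh states indexed by $(\delta,i)$, with the whole output $v$ emitted on the first link, and logspace computability follows because only a few pointers and counters are needed. Your observation that each fresh state has exactly one incoming and one outgoing transition gives a label-preserving bijection of runs, which supplies the inclusion $R_{\Cal{T}'}\subseteq R_{\Cal{T}}$ that the paper's proof leaves implicit.
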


\begin{proof}
	From an NFT \(\Cal{T} = (Q,\Sigma, Q_{i}, Q_{f}, \Delta)\), we construct an input-atomic NFT \(\Cal{T'} = (Q',\Sigma, Q_{i}, Q_{f}, \Delta')\) such that \(\Cal{T'}\) and \(\Cal{T}\) are equivalent.
	We set \(Q' \subseteq Q \cup \Delta \times \lmax(\Cal{T})\) and \(\Delta'\) is equal to
	\[
		\bigcup\limits_{\delta \in \Delta} \Delta'_\delta,
	\]
	where for all transition \(\delta = (p,u,v,q) \in \Delta\), abusing the notation that \(p = (\delta,0)\), \(q = (\delta, |u|)\) we define:
	\[
		\begin{cases}
			\Delta'_{\delta} = \{\delta\}, &\text{ if }|u| \leq 1.\\
			\Delta'_{\delta} = \{((\delta,0),u_1,v,(\delta,1))\} \cup \{(\delta,i),u_{i+1},\varepsilon,(\delta,i+1)) | 1\leq i<|u|\}, &\text{ if }|u| > 1.\\
			
		\end{cases}
	\]
	By construction, for all transitions \(\delta = (p,u,v,q) \in \Delta\), there exists a run \(\rho = (p',u',v',q') \in \Delta'^*\) such that \(p'=p\), \(q'=q\), \(u'=u\), and \(v'=v\).
	Consequently, \(\Cal{T}\) and \(\Cal{T}'\) are equivalent.
	Note that in order to construct each transition \(\Cal{T}'\) we only need to store two indexes of states and read the transition letter by letter. Consequently, it is in \logspace. 
\end{proof}



\section{Equivalence of the Comparison and Deviation Problems} 
\label{sec:application_to_the_k_similarity_problem}

This section is devoted to the proof of~\cref{thm:equivalent_problems}, i.e. the two-way logspace many-one reductions between deviation and comparison problems.
More precisely, we show two reductions that preserve the distance and hence the exact bound. As such, the same reductions can be applied to show equivalence of the three pairs of problems.

\begin{proposition}\label{prop:ComparisonToDeviation}
	For all pairs of NFT \(\Cal{T}_1\) and \(\Cal{T}_2\) such that \(\dom(\Cal{T}_1) = \dom(\Cal{T}_2)\),
	there exists an NFT \(\Cal{T}\) such that \((u,v) \in R_{\Cal{T}}\) if and only if there exists an input word \(x\) satisfying \((x,u) \in R_{\Cal{T}_1}\) and \((x,v) \in R_{\Cal{T}_2}\).
	Furthermore, \(\Cal{T}\) is computable in \logspace.
\end{proposition}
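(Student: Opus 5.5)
We build \(\Cal{T}\) as a product of \(\Cal{T}_1\) and \(\Cal{T}_2\) that is synchronised on the common input \(x\) and outputs the two images: the output of \(\Cal{T}_1\) becomes the input of \(\Cal{T}\), and the output of \(\Cal{T}_2\) becomes the output of \(\Cal{T}\). To make synchronisation on \(x\) letter by letter possible, we first apply Lemma~\ref{lemma:atomic_transducer_construction} to both transducers, obtaining equivalent input-atomic NFTs \(\Cal{T}'_1=(Q_1,\Sigma,I_1,F_1,\Delta_1)\) and \(\Cal{T}'_2=(Q_2,\Sigma,I_2,F_2,\Delta_2)\) in \logspace. Every transition of these then reads either a single letter \(a\in\Sigma\) or \(\varepsilon\).

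We define \(\Cal{T}\) with state set \(Q_1\times Q_2\), initial states \(I_1\times I_2\) and final states \(F_1\times F_2\), and three kinds of transitions:
\begin{itemize}
\item for \((p_1,a,w_1,q_1)\in\Delta_1\) and \((p_2,a,w_2,q_2)\in\Delta_2\) with the same letter \(a\in\Sigma\), a transition \(((p_1,p_2),w_1,w_2,(q_1,q_2))\);
\item for \((p_1,\varepsilon,w_1,q_1)\in\Delta_1\) and every \(p_2\in Q_2\), a transition \(((p_1,p_2),w_1,\varepsilon,(q_1,p_2))\);
\item symmetrically, for \((p_2,\varepsilon,w_2,q_2)\in\Delta_2\) and every \(p_1\in Q_1\), a transition \(((p_1,p_2),\varepsilon,w_2,(p_1,q_2))\).
\end{itemize}
In words, \(\Cal{T}\) guesses \(x\) letter by letter. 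Transitions of the first kind advance both components on the same letter of \(x\), while the \(\varepsilon\)-input moves of each component interleave freely.

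\emph{Correctness.} Suppose an accepting run of \(\Cal{T}\) reads \(u\) and writes \(v\). Projecting it onto each component, and deleting the idle steps, gives accepting runs of \(\Cal{T}'_1\) over \((x,u)\) and of \(\Cal{T}'_2\) over \((x',v)\). Here \(x\) and \(x'\) are the sequences of letters on transitions of the first kind, so \(x=x'\). Conversely, take accepting runs \(\rho_1\) over \((x,u)\) and \(\rho_2\) over \((x,v)\) with \(|x|=n\). Each run factors uniquely as \(\alpha_0\beta_1\alpha_1\cdots\beta_n\alpha_n\), where each \(\beta_i\) is the single transition reading \(x_i\) and each \(\alpha_i\) consists of \(\varepsilon\)-input transitions. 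We interleave the two runs as follows: play \(\alpha^1_i\) and then \(\alpha^2_i\) using the one-sided transitions, then play \(\beta^1_{i+1},\beta^2_{i+1}\) jointly using a transition of the first kind. The concatenated input is \(u\) and the concatenated output is \(v\). The hypothesis \(\dom(\Cal{T}_1)=\dom(\Cal{T}_2)\) is not needed for this equivalence; it only matters for the later distance statements.

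\emph{Logspace.} The construction is a product with polynomially many transitions. Each transition is determined by a pair of transition indices (or a transition index and a state index), which fits in logarithmic space. Composing with the logspace atomisation is fine, since logspace transducers compose.

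The only delicate point is that transitions in input-atomic form may read \(\varepsilon\). This is exactly what the one-sided moves handle, and the unique factorisation used in the converse direction is what makes the argument rigorous. Finally, we trim \(\Cal{T}\) in \logspace, as the paper permits.
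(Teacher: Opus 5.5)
Your proof is correct and is essentially the paper's construction. The paper also makes both transducers input-atomic, but instead of introducing one-sided moves explicitly it adds an $(\varepsilon,\varepsilon)$-self-loop to every state and synchronises transitions on a common $x\in\Sigma\cup\{\varepsilon\}$; this yields exactly your three kinds of transitions (plus redundant simultaneous $\varepsilon$-moves), and your unique-factorisation interleaving spells out the correctness argument that the paper only asserts.

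The closing trimming step is not needed for the statement and is better dropped. Trimming requires solving reachability, which is NL-complete, so it is not known to be computable in logarithmic space.
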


\begin{proof}
To achieve the reduction we first transform both NFTs to reach a state where there are both input-atomic transducers, i.e. each transition either read a letter or \(\varepsilon\).
Next we add \((\varepsilon,\varepsilon)\)-transitions from each state to itself.
After those two transformations, we compute the composition of the two resulting NFTs.

	Let \(\Cal{T} = (Q,\Sigma, Q_{i}, Q_{f}, \Delta)\), and \(\Cal{S} = (P,\Sigma, P_{i}, P_{f}, \Theta)\) be two NFTs such that \(\dom(\Cal{T}) = \dom(\Cal{S})\).
	We transform \(\Cal{T}\) and \(\Cal{S}\) into \(\Cal{T}'\) and \(\Cal{S}'\) respectively, two input-atomic transducers such that \(R_{\Cal{T}'} = R_{\Cal{T}}\), and \(R_{\Cal{S}'} = R_{\Cal{S}}\). We also add \((\varepsilon,\varepsilon)\)-transitions to all states in both \(\Cal{S'}\) and \(\Cal{T}'\): \(\Theta' = \Theta' \cup \{(q,\varepsilon,\varepsilon,q) \mid q \in P'\}\), and \(\Delta' = \Delta \cup \{(q,\varepsilon,\varepsilon,q) \mid q \in Q'\}\).
	Thanks to~\cref{lemma:atomic_transducer_construction} these constructions are in \logspace.
	Finally, we construct the NFT \(\Cal{Z} = (O,\Sigma,O_i,O_f,\Lambda)\) such that \(O = Q' \times P'\), \(O_i = Q'_i \times P'_i\), \(O_f = Q'_f \times P'_f\), and \(\Lambda \subseteq O \times \Sigma^* \times \Sigma^* \times O\), with \(\Lambda = \{((q,p),u,v,(q',p')) \mid \exists x\ (q,x,u,q') \in \Delta'' \wedge (p,x,v,p') \in \Theta'\}\).
	By construction, we have that \((u,v) \in R_{\Cal{Z}}\) if and only if \((x,u) \in R_{\Cal{T}}\), and \((x,v) \in R_{\Cal{S}}\).
	
	Constructing the set of states can be done by storing the indexes of both states, hence it is in \logspace, and constructing the set of transitions can be done by storing the indexes of states and the value \(x\in \Sigma\cup\{\varepsilon\}\) and reading and writing both transitions.
	The composition of this transformation with the one from~\cref{lemma:atomic_transducer_construction} can be done in \logspace~in the following way:
	the algorithm simulates the second machine and the position of the reading head of the simulation in \logspace. Each time the simulation asks for the \(k\)-th bit of its input, the algorithm launches a \logspace~simulation of the first machine up to its \(k\)-th production.
	The complete construction is therefore in \logspace.
\end{proof}

\begin{proposition}\label{prop:deviationToComparison}
	For all NFT \(\Cal{T}\), there exists two NFTs \(\Cal{T}_1\) and \(\Cal{T}_2\) computable in \(O(1)\)-space such that \(\dom(\Cal{T}_1) = \dom(\Cal{T}_2)\),
	and \((u,v) \in R_{\Cal{T}}\) if and only if
	there exists an input word \(x\) satisfying \((x,u) \in R_{\Cal{T}_1}\), \((x,v) \in R_{\Cal{T}_2}\).

\end{proposition}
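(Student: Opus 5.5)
The idea is to make the input word \(x\) encode an entire accepting run of \(\Cal{T}\). Let \(\Cal{T} = (Q,\Sigma,Q_i,Q_f,\Delta)\). We work over the extended alphabet \(\Gamma = \Sigma \cup \Delta\), treating each transition \(\delta \in \Delta\) as a fresh input letter; since the paper takes input and output alphabets to be the same set, we simply use \(\Gamma\) for both.

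We define \(\Cal{T}_1\) and \(\Cal{T}_2\) to have the same states, initial states and final states as \(\Cal{T}\). For every \(\delta = (p,u,v,q) \in \Delta\), we put the transition \((p,\delta,u,q)\) in \(\Cal{T}_1\) and the transition \((p,\delta,v,q)\) in \(\Cal{T}_2\). In other words, both transducers read the name of the transition; \(\Cal{T}_1\) outputs the input label of \(\delta\) and \(\Cal{T}_2\) outputs its output label. The underlying automaton of each \(\Cal{T}_j\), once outputs are forgotten, is a deterministic automaton over \(\Delta\): reading \(\delta\) from state \(p\) leads only to \(q\). Both recognize exactly the words \(\delta_1\cdots\delta_n\) that form accepting runs of \(\Cal{T}\). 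Hence \(\dom(\Cal{T}_1)=\dom(\Cal{T}_2)\), and both domains equal the set of accepting runs of \(\Cal{T}\). The empty run is included precisely when \(Q_i\cap Q_f\neq\emptyset\), and it is accepted by both.

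For correctness, suppose \((u,v)\in R_{\Cal{T}}\). Take an accepting run \(\rho=\delta_1\cdots\delta_n\) over \((u,v)\) and set \(x=\rho\). Then \(\Cal{T}_1\) maps \(x\) to \(u\) and \(\Cal{T}_2\) maps \(x\) to \(v\). Conversely, suppose \((x,u)\in R_{\Cal{T}_1}\) and \((x,v)\in R_{\Cal{T}_2}\). By determinism, the only run of \(\Cal{T}_1\) on \(x\) follows the transitions named by the letters of \(x\), so \(x\) is an accepting run of \(\Cal{T}\) and \(u\) is its input. The same holds for \(\Cal{T}_2\) with the same run, so \(v\) is the output of that run. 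Therefore \((u,v)\in R_{\Cal{T}}\). The same determinism also shows that the distance is preserved, which is needed for Theorem~\ref{thm:equivalent_problems}: each \(x\) yields exactly one pair \((u,v)\), and every pair in \(R_\Cal{T}\) arises this way. It follows that \(d(R_{\Cal{T}_1},R_{\Cal{T}_2})=\dev(R_\Cal{T})\).

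The step needing the most care is the claim of \(O(1)\)-space computability. The letters \(\delta\) must be encoded, for instance by their index or by the position of the transition in the input tape. The output transducers then have to be written by copying the description of \(\Cal{T}\) transition by transition, rewriting the input field. Pointing to an arbitrary index would require a counter, so instead we name each transition by a verbatim copy of its encoding, delimited. The machine then makes a single left-to-right pass over the description: it echoes the state fields, copies the whole quadruple of \(\delta\) as the new input letter, and copies the \(u\) (respectively \(v\)) field as output. This pass uses only finite control, so it runs in constant space. The same pass can be done once for \(\Cal{T}_1\) and once for \(\Cal{T}_2\).
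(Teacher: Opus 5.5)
Your construction is correct, but it takes a different route from the paper. The paper keeps the alphabet $\Sigma$. It sets $\Cal{T}_2=\Cal{T}$ and obtains $\Cal{T}_1$ by replacing each transition $(q,u,v,p)$ with $(q,u,u,p)$, so $R_{\Cal{T}_1}$ is the identity on $\dom(R_{\Cal{T}})$ and the common input is simply $x=u$. You instead let the input spell out an entire accepting run over $\Sigma\cup\Delta$, and the two transducers project it onto its input and output labels, essentially a Nivat-style decomposition.

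\emph{What the paper's route buys.} It is shorter and leaves the alphabet untouched. The lower bounds of Section~\ref{sec:lowerbounds}, built over $\{a,b\}$ or $\{0,1\}$, therefore transfer to the comparison problems over the same fixed alphabet. Its constant-space implementation is also immediate.

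\emph{What your route buys.} It gives more structure on the produced instance. Both $\Cal{T}_1$ and $\Cal{T}_2$ are real-time and input-atomic, share one underlying graph, and are deterministic up to the choice of initial state, which the first letter resolves. By contrast, the paper's $\Cal{T}_1$ is functional but inherits all the nondeterminism of $\Cal{T}$. Your reduction thus shows that the comparison problems are already hard for pairs of such deterministic transducers. The price is an alphabet of size $|\Sigma|+|\Delta|$ and a more delicate encoding of letters.

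Two small remarks.

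\begin{enumerate}
\item Distance preservation does not need determinism. The equivalence in the statement already says that the set of pairs $(v_1,v_2)$ produced from a common input is exactly $R_{\Cal{T}}$.
\item Your copying procedure is not a single left-to-right pass. Emitting $p$, then the delimited copy of $(p,u,v,q)$, then $u$ forces you to re-read the current quadruple. This is harmless: the input head of a space-bounded machine is two-way, so returning to the quadruple's left delimiter a constant number of times keeps the work space constant. The paper's mention of a two-way transducer relies on the same fact.
\end{enumerate}
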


\begin{proof}
	Given an NFT \(\Cal{T} = (Q,\Sigma, Q_{i}, Q_{f}, \Delta)\),
	we let \(\Cal{T}_2 = \Cal{T}\), and
	we construct \(\Cal{T}_1 = (Q,\Sigma, Q_{i}, Q_{f}, \Delta')\),
	where \(\Delta' =\{(q,u,u,p) \mid (q,u,v,p) \in \Delta\}\).
	Then \(R_{\Cal{T}_1}\) is the identity function restricted to the domain of \(R_{\cal{T}}\), 
	hence \((u,v) \in R_{\Cal{T}}\) if and only if there exists \(x\), namely \(x = u\),
	such that \((x,u) \in R_{\Cal{T}_1}\) and \((x,v) \in R_{\Cal{T}_2}\).
	Constructing the NFTs \(\Cal{T}_i\) amounts to read and copy it twice, which can be done by a two-way transducer, and hence in constant space.
\end{proof}

\begin{figure}
	\centering
	\resizebox{0.9\textwidth}{!}{

\begin{tikzpicture}

\begin{scope}
\node[state,initial, initial text=,fill=blue!20] (1) at (0,0) {$1$};
\node[state,fill=blue!20] (2) at (2,0) {$2$};
\node[state,fill=blue!20] (3) at (2,2) {$3$};
\node[state,fill=blue!20] (4) at (4,0) {$4$};
\node[state,fill=blue!20] (5) at (6,0) {$5$};
\node[state,fill=blue!20,accepting] (6) at (6,2) {$6$};

\path[draw,thick,-latex] (1) edge node[above] {$a|\varepsilon$} (2);
\path[draw,thick,-latex] (2) edge node[above] {$\varepsilon|b$} (4);
\path[draw,thick,-latex] (4) edge node[above] {$\varepsilon|b$} (5);
\path[draw,thick,-latex] (5) edge node[right] {$b|\varepsilon$} (6);
\path[draw,thick,-latex] (2) edge[bend left] node[left] {$ab|\varepsilon$} (3);
\path[draw,thick,-latex] (3) edge[bend left] node[right] {$\varepsilon|ba$} (2);

\node(T1) at (4,1.5) {$T_1$};
\end{scope}
\begin{scope}[xshift=8.5cm]
\node[state,initial, initial text=,fill=blue!20] (1) at (0,0) {$1$};
\node[state,fill=blue!20] (2) at (2,0) {$2$};
\node[state,fill=blue!20] (3) at (2,2) {$3$};
\node[state,fill=blue!20] (4) at (4,0) {$4$};
\node[state,fill=blue!20] (5) at (6,0) {$5$};
\node[state,fill=blue!20,accepting] (6) at (6,2) {$6$};

\node[state,fill=blue!20] (7) at (0,2) {$7$};

\path[draw,thick,-latex] (1) edge node[above] {$a|\varepsilon$} (2);
\path[draw,thick,-latex] (2) edge node[above] {$\varepsilon|b$} (4);
\path[draw,thick,-latex] (4) edge node[above] {$\varepsilon|b$} (5);
\path[draw,thick,-latex] (5) edge node[right] {$b|\varepsilon$} (6);
\path[draw,thick,-latex] (2) edge[] node[left] {$a|\varepsilon$} (7);
\path[draw,thick,-latex] (7) edge[] node[above] {$b|\varepsilon$} (3);
\path[draw,thick,-latex] (3) edge[] node[right] {$\varepsilon|ba$} (2);

\path[draw,thick,-latex,dashed] (1) edge[loop below,pos=0.2] node[right]{$\varepsilon|\varepsilon$}();
\path[draw,thick,-latex,dashed] (2) edge[loop below,pos=0.2] node[right]{$\varepsilon|\varepsilon$}();
\path[draw,thick,-latex,dashed] (4) edge[loop below,pos=0.2] node[right]{$\varepsilon|\varepsilon$}();
\path[draw,thick,-latex,dashed] (6) edge[loop left,pos=0.5] node[left]{$\varepsilon|\varepsilon$}();
\path[draw,thick,-latex,dashed] (5) edge[loop below,pos=0.2] node[right]{$\varepsilon|\varepsilon$}();
\path[draw,thick,-latex,dashed] (3) edge[loop right] node[right]{$\varepsilon|\varepsilon$}();
\path[draw,thick,-latex,dashed] (7) edge[loop left] node[left]{$\varepsilon|\varepsilon$}();
\node(T1) at (4,1.5) {$T_1^\prime$};
\end{scope}
\begin{scope}[yshift=-4cm]
\node[state,initial, initial text=,fill=red!20] (1) at (0,0) {$1$};
\node[state,fill=red!20] (2) at (2,0) {$2$};
\node[state,fill=red!20] (3) at (2,2) {$3$};
\node[state,fill=red!20] (4) at (4,0) {$4$};
\node[state,fill=red!20] (5) at (6,0) {$5$};
\node[state,fill=red!20,accepting] (6) at (6,2) {$6$};

\path[draw,thick,-latex] (1) edge node[above] {$a|\varepsilon$} (2);
\path[draw,thick,-latex] (2) edge node[above] {$\varepsilon|a$} (4);
\path[draw,thick,-latex] (4) edge node[above] {$b|\varepsilon$} (5);
\path[draw,thick,-latex] (5) edge node[right] {$\varepsilon|a$} (6);
\path[draw,thick,-latex] (2) edge[bend left] node[left] {$ab|\varepsilon$} (3);
\path[draw,thick,-latex] (3) edge[bend left] node[right] {$\varepsilon|bb$} (2);

\node(T1) at (4,1.5) {$T_2$};
\end{scope}
\begin{scope}[xshift=8.5cm,yshift=-4cm]
\node[state,initial, initial text=,fill=red!20] (1) at (0,0) {$1$};
\node[state,fill=red!20] (2) at (2,0) {$2$};
\node[state,fill=red!20] (3) at (2,2) {$3$};
\node[state,fill=red!20] (4) at (4,0) {$4$};
\node[state,fill=red!20] (5) at (6,0) {$5$};
\node[state,fill=red!20,accepting] (6) at (6,2) {$6$};

\node[state,fill=red!20] (7) at (0,2) {$7$};

\path[draw,thick,-latex] (1) edge node[above] {$a|\varepsilon$} (2);
\path[draw,thick,-latex] (2) edge node[above] {$\varepsilon|a$} (4);
\path[draw,thick,-latex] (4) edge node[above] {$b|\varepsilon$} (5);
\path[draw,thick,-latex] (5) edge node[right] {$\varepsilon|a$} (6);
\path[draw,thick,-latex] (2) edge[] node[left] {$a|\varepsilon$} (7);
\path[draw,thick,-latex] (7) edge[] node[above] {$b|\varepsilon$} (3);
\path[draw,thick,-latex] (3) edge[] node[left] {$\varepsilon|bb$} (2);

\path[draw,thick,-latex,dashed] (1) edge[loop below,pos=0.2] node[right]{$\varepsilon|\varepsilon$}();
\path[draw,thick,-latex,dashed] (2) edge[loop below,pos=0.2] node[right]{$\varepsilon|\varepsilon$}();
\path[draw,thick,-latex,dashed] (4) edge[loop below,pos=0.2] node[right]{$\varepsilon|\varepsilon$}();
\path[draw,thick,-latex,dashed] (6) edge[loop left,pos=0.5] node[left]{$\varepsilon|\varepsilon$}();
\path[draw,thick,-latex,dashed] (5) edge[loop below,pos=0.2] node[right]{$\varepsilon|\varepsilon$}();
\path[draw,thick,-latex,dashed] (3) edge[loop right] node[right]{$\varepsilon|\varepsilon$}();
\path[draw,thick,-latex,dashed] (7) edge[loop left] node[left]{$\varepsilon|\varepsilon$}();

\node(T1) at (4,1.5) {$T_2$};
\end{scope}

\newcommand{\dessin}[2]{\(\textcolor{blue}{#1},\textcolor{red}{#2}\)}
\begin{scope}[xshift=1cm,yshift=-8cm]
\node[state,initial, initial text=,fill=green!20] (11) at (0,0) {\dessin{1}{1}};
\node[state,fill=green!20] (22) at (2,0) {\dessin{2}{2}};

\node[state,fill=green!20] (33) at (2,2) {\dessin{3}{3}};
\node[state,fill=green!20] (44) at (4,0) {\dessin{4}{4}};
\node[state,fill=green!20] (54) at (6,0) {\dessin{5}{4}};
\node[state,fill=green!20] (56) at (8,0) {\dessin{5}{6}};
\node[state,fill=green!20,accepting] (66) at (8,2) {\dessin{6}{6}};
\node[state,fill=green!20] (77) at (0,2) {\dessin{7}{7}};

\path[draw,thick,-latex] (11) edge node[above]{$\varepsilon|\varepsilon$}(22);
\path[draw,thick,-latex] (22) edge node[left]{$\varepsilon|\varepsilon$}(77);
\path[draw,thick,-latex] (77) edge node[above]{$\varepsilon|\varepsilon$}(33);
\path[draw,thick,-latex] (33) edge node[right]{$ba|bb$}(22);
\path[draw,thick,-latex] (22) edge node[above]{$b|a$}(44);
\path[draw,thick,-latex] (44) edge node[above]{$b|\varepsilon$}(54);

\path[draw,thick,-latex] (54) edge node[above]{$\varepsilon|\varepsilon$}(56);

\path[draw,thick,-latex] (56) edge node[right]{$\varepsilon|a$}(66);

\path[draw,thick,-latex] (11) edge[loop below,pos=0.2] node[right]{$\varepsilon|\varepsilon$}();

\path[draw,thick,-latex,] (22) edge[loop below,pos=0.2] node[right]{$\varepsilon|\varepsilon$}();
\path[draw,thick,-latex,] (44) edge[loop below,pos=0.2] node[right]{$\varepsilon|\varepsilon$}();
\path[draw,thick,-latex,] (66) edge[loop left,pos=0.5] node[left]{$\varepsilon|\varepsilon$}();
\path[draw,thick,-latex,] (54) edge[loop below,pos=0.2] node[right]{$\varepsilon|\varepsilon$}();
\path[draw,thick,-latex,] (56) edge[loop below,pos=0.2] node[right]{$\varepsilon|\varepsilon$}();
\path[draw,thick,-latex,] (33) edge[loop right] node[right]{$\varepsilon|\varepsilon$}();
\path[draw,thick,-latex,] (77) edge[loop left] node[left]{$\varepsilon|\varepsilon$}();

\node[text width= 3cm](T1T2) at (11,1.5) {Accessible part of the product};
\end{scope}

\end{tikzpicture}}
	\caption{Reduction from \(\Cal{T}_1\) and \(\Cal{T}_2\), with \(\dev(R_{(\Cal{T}'_1 \circ \Cal{T}'_2)}) = +\infty\).}
	\label{fig:reduction-exemple}
\end{figure}



\section{Lower Bounds}\label{sec:lowerbounds}
This section is devoted to proving the lower bounds stated in Theorems~\ref{thm-BoundedNFT}-\ref{thm-Kexact}.
We establish the results separately for each complexity class,
by reductions from standard complete problems for NL (Subsection~\ref{subsec:NL}), co-NP (Subsection~\ref{subsec:co-NP}), and DP (Subsection~\ref{subsec:DP}), respectively.

\subsection{NL-hardness} 
\label{subsec:NL}
We recall the following NL-complete problem~\cite[Theorem 8.4]{Papadimitriou94}.
\PB{Directed Reachability Problem}
{A directed graph \(G=(V,E)\) and two vertices \(s,t \in V\).}
{\true{} if there exists a path from \(s\) to \(t\) in \(G\), \false{} otherwise.}

\noindent
Note that the following reductions for the NL class produce trim transducers without \((\varepsilon,\varepsilon)\)-transitions.
This shows that the hardness is intrinsic to the problems,
and does not arise from auxiliary tasks such as eliminating \(\varepsilon\)-transitions or trimming the transducer.

In both reductions, we reduce the directed reachability problem
by embedding the input graph into the transition structure of a transducer,
adding an initial state and a final state connected to all vertices
to ensure that the resulting transducer is trim.
We then add a few transitions that introduce mismatches
when a specific path exists.
This actually results in a reduction from the \emph{complement} of the reachability problem,
which is equivalent since, by the Immerman–Szelepcsényi Theorem,  NL = co-NL~\cite{Immerman88,Szelepcsenyi88}.

\begin{proposition}
	\label{prop:bounded-hardness}
	The Bounded Deviation Problem is NL-hard.
\end{proposition}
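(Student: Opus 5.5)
We reduce from the complement of the Directed Reachability Problem, which is NL-complete by the Immerman–Szelepcsényi Theorem; the reduction will be computable in \logspace. Given \(G=(V,E)\) and \(s,t\in V\), we build a transducer \(\Cal{T}_G\) over \(\Sigma=\{a,b\}\) that is bounded (in fact \(0\)-bounded) exactly when \(t\) is not reachable from \(s\).

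\textbf{Construction.} The states are \(V\cup\{q_{in},q_{f}\}\), with \(q_{in}\) the unique initial state and \(q_f\) the unique final state. The transitions are the following.
\begin{itemize}
\item For each edge \((u,v)\in E\), a transition \((u,a,a,v)\). These are identity transitions.
\item For every \(v\in V\), transitions \((q_{in},a,a,v)\) and \((v,a,a,q_f)\). These make the transducer trim without introducing any mismatch.
\item A mismatch gadget: a loop \((s,a,b,s)\) on \(s\), and a loop \((t,b,a,t)\) on \(t\).
\end{itemize}
If \(s=t\) the instance is trivially positive for reachability. We can either output a fixed unbounded transducer, or rely on the gadget itself: a loop \(a|b\) alone already gives unbounded deviation. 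The only requirement is that \(\Cal{T}_G\) be unbounded. Every transition is length-preserving and input-atomic, and the construction only copies edges and adds per-vertex transitions, so it is in \logspace{}. It also has no \((\varepsilon,\varepsilon)\)-transitions.

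\textbf{Correctness.} Since all transitions read and write exactly one letter, the Hamming distance of an accepting run equals the number of gadget transitions it uses.

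Suppose \(t\) is reachable from \(s\). Take the run that enters \(s\) from \(q_{in}\), loops \(n\) times on \((s,a,b,s)\), follows a path to \(t\), and exits to \(q_f\). Its pair has \(n\) mismatches, so the deviation is infinite.

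Conversely, suppose \(t\) is not reachable from \(s\). We must show the deviation is bounded, and here is the main subtlety: a run may reach \(t\) without passing through \(s\), and then loop on \(t\) arbitrarily often. The gadget as proposed therefore fails, and must be designed so that a single loop alone yields no unbounded deviation.

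The fix is to make the two loops mismatch only jointly. Let the loop on \(s\) be \((s,ab,ba,s)\) and the loop on \(t\) be \((t,ba,ab,t)\), and take \(\Sigma=\{a,b\}\) with all other transitions on \(a|a\). Then each loop iteration alone still contributes \(2\) mismatches, so the same problem remains. A cleaner approach is to use shifts instead of letter mismatches. Let the loop on \(s\) be \((s,a,\varepsilon,s)\), the loop on \(t\) be \((t,\varepsilon,b,t)\), and let all other transitions be \(a|a\).

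\begin{itemize}
\item A run using only the \(s\)-loop has an output strictly shorter than its input, so \(d=+\infty\) by definition. Since the deviation is a supremum, we must avoid such pairs entirely. To do this, restrict attention to runs that are length-balanced: route every run visiting \(s\) through \(t\) before reaching \(q_f\). Concretely, remove the exits \((s,a,a,q_f)\) and the entries \((q_{in},a,a,t)\). Trimness is then preserved only through paths that exist, so we trim in \logspace{} if needed; trimming is acceptable as a \logspace{} preprocessing step.
\item If \(t\) is reachable from \(s\), the runs \(a\,a^{n}\,w\,b^{n}\,a\) realise pairs \((a a^n x a,\ a x' b^n a)\) of equal length and with about \(n\) mismatches, so the deviation is unbounded.
\item If \(t\) is not reachable from \(s\), every accepting run uses at most one of the two loops and is unbalanced, contributing distance \(\infty\).
\end{itemize}
This last case breaks the definition. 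So the final plan returns to letter mismatches, but removes the entry \((q_{in},a,a,t)\) and the exit \((s,a,a,q_f)\), and keeps only the \(s\)-loop \((s,a,b,s)\).

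With this version, the \(s\)-loop lies on an accepting path if and only if \(s\) can reach \(q_f\), and we make \(s\)'s only way to \(q_f\) go through \(t\). To do so, allow exits to \(q_f\) only from \(t\) and from vertices not reachable from \(s\). Deciding which vertices are reachable from \(s\) is itself an NL question, so instead we duplicate the graph into two layers. The first layer is an unmarked copy \(V\), with all entries and exits and identity edges. The second layer is a marked copy \(V'\), entered only via \((q_{in},a,a,s')\), carrying the loop \((s',a,b,s')\) and copied edges, and exited only via \((t',a,a,q_f)\). The unmarked layer is mismatch-free and trim. In the marked layer, trimness may fail, but the paper trims in \logspace{} anyway, which keeps the reduction in \logspace.

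Unboundedness now holds if and only if \(s'\) is co-reachable to \(q_f\), that is, if and only if \(t\) is reachable from \(s\). The main obstacle is precisely this gadget design, ensuring that mismatch loops lie on accepting runs only when the path exists. The two-layer copy is how I would resolve it.
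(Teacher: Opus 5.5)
\textbf{The gap is trimness.} Your final two-layer transducer does realise the right relation. Since $R_{\Cal{T}}$ depends only on accepting runs, its deviation is $\infty$ if $t$ is reachable from $s$ and $0$ otherwise. But the paper assumes throughout that all NFTs are trim. It also stresses that its NL-hardness reductions output trim transducers, so that hardness does not come from trimming.

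\textbf{Why your remedy fails.} Your marked layer fails to be trim exactly when $t$ is unreachable from $s$, because then $s'$ and its loop are not co-accessible. You propose to fix this by saying ``the paper trims in \logspace{} anyway'', but that is not valid. The paper's remark concerns logspace with repeated calls to Directed Reachability, which is an NL computation, not a logspace one. Here this is not a technicality. Trimming your transducer keeps the loop $(s',a,b,s')$ if and only if $t$ is reachable from $s$. So the trimming step would itself solve the source instance, and the composed map is not a logspace reduction unless L $=$ NL.

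In fact your whole strategy cannot work for trim transducers. You make the mismatch loop lie on an accepting run only when the path exists, but in a trim NFT every transition lies on some accepting run. If arbitrary, non-trim NFTs were allowed as inputs, your construction would be correct as is, with no trimming step at all.

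\textbf{The missing idea.} Let reachability decide whether a mismatching transition lies on a \emph{cycle}, not whether it lies on an accepting run. The paper keeps a single copy of $G$, with entries $(q_i,a,a,v)$ and exits $(v,a,a,q_f)$ for every $v\in V$, so the transducer is trim. It then adds one back-edge $(t,a,b,s)$.
\begin{itemize}
\item If $t$ is reachable from $s$, this edge closes a cycle $s\to t\to s$ carrying a mismatch, and pumping it gives unbounded deviation.
\item If not, the edge can be used at most once on any run, so $\dev(R_{\Cal{T}})\le 1$.
\end{itemize}
In the negative case you only need \emph{some} constant bound, not deviation $0$. That is what frees you from hiding the mismatch transition from accepting runs.
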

\begin{proof}
	Let \(G=(V,E)\) and \(s,t \in V\) be an instance of the Directed Reachability Problem.
	We construct a trim NFT \(\Cal{T} = (V \cup \{q_i,q_f\},\{a,b\}, \{q_i\}, \{q_f\}, \Delta)\),
	where \(\Delta\) is defined as follows:
	\[
		\Delta = \{
		(q_i,a,a,v), (v,a,a,q_f) \mid v \in V\}
		\cup \{
		(u,a,a,v) \mid (u,v) \in E
		\}
		\cup 
		\{
		(t,a,b,s)
		\}.
	\]
	This construction is realizable in \logspace.
	Note that all the transitions read and output the letter \(a\),
	except the transition \((t,a,b,s)\), which introduces a mismatch.
	Assume first that there is a path from \(s\) to \(t\) in \(G\).
	This path induces a corresponding run in \(\Cal{T}\) from \(s\) to \(t\),
	which forms a cycle once concatenated with transition \((t,a,b,s)\).
	Iterating this cycle arbitrarily many times yields runs with arbitrarily many mismatches.
	As a consequence, \(\dev(R_{\Cal{T}}) = \infty\).
	Conversely, assume  there is no path from \(s\) to \(t\) in \(G\).
	Then the transition \((t,a,b,s)\) occurs at most once in each run of \(\Cal{T}\),
	thus \(\dev(R_{\Cal{T}}) \leq 1\).
	Therefore, \(\dev(R_{\Cal{T}}) < \infty\) if and only if \(t\) is not reachable from \(s\) in \(G\),
	which concludes the proof.
\end{proof}

\begin{proposition}
	\label{prop:fixed-hardness}
	The Threshold-bounded Deviation Problem is NL-hard for all fixed \(k \geq 1\).
\end{proposition}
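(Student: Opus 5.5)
We reduce from the complement of the Directed Reachability Problem, mirroring the construction in the proof of Proposition~\ref{prop:bounded-hardness}; this suffices because NL $=$ co-NL by the Immerman--Szelepcsényi Theorem. Fix \(k \geq 1\). Given \(G=(V,E)\) and \(s,t\in V\), we build a trim NFT \(\Cal{T}\) over \(\{a,b\}\) in \logspace{} such that \(\dev(R_{\Cal{T}}) \leq k\) if and only if \(t\) is not reachable from \(s\).

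The states are \(V \cup \{q_i,q_f\}\), with \(q_i\) initial and \(q_f\) final. As before, we add the transitions \((q_i,a,a,v)\) and \((v,a,a,q_f)\) for every \(v\in V\), and \((u,a,a,v)\) for every \((u,v)\in E\); together they make \(\Cal{T}\) trim. In place of the single mismatching transition \((t,a,b,s)\), we add
\[
(t,a^k,b^k,s).
\]
Since \(k\) is a fixed constant, this transition has constant size, so the construction remains in \logspace{}. We also use no \((\varepsilon,\varepsilon)\)-transitions.

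\emph{If \(t\) is reachable from \(s\).} A path from \(s\) to \(t\) in \(G\) yields a run of \(\Cal{T}\) from \(s\) to \(t\). Concatenated with \((t,a^k,b^k,s)\), it becomes a cycle producing \(k\) mismatches per iteration. Taking the cycle twice, prefixed by \((q_i,a,a,s)\) and suffixed by a path back to \(s\) and the exit \((s,a,a,q_f)\), gives an accepting run with at least \(2k>k\) mismatches. We must also check that the Hamming distance is finite here, i.e., that input and output have equal lengths. This holds because every transition reads and writes words of the same length, so each transition writes exactly the letters it reads, aligned position by position. Hence \(\dev(R_{\Cal{T}}) = \infty > k\).

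\emph{If \(t\) is not reachable from \(s\).} The special transition can occur at most once in any run, since a second occurrence would require a path from \(s\) back to \(t\) inside \(G\). By the position-by-position alignment, mismatches arise only from the letters of that one transition, so every accepting run yields at most \(k\) mismatches and \(\dev(R_{\Cal{T}}) \leq k\).

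The only point needing care is this alignment argument. It requires every transition to have shift \(0\), which holds because all transitions are of the form \((\cdot,a,a,\cdot)\) or \((t,a^k,b^k,s)\). With that observation, the number of mismatches of a run is exactly \(k\) times the number of uses of the special transition, and both directions follow.
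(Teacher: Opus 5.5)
Your reduction is correct, but it takes a different route from the paper's. You close the cycle \(s \to \cdots \to t \to s\) with the mismatching transition \((t,a^k,b^k,s)\), so when \(t\) is reachable from \(s\) the deviation is \(\infty\).

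The paper instead places \((q_i,a^k,b^k,s)\) at the start and \((t,a,b,q_f)\) at the end. A path from \(s\) to \(t\) then gives an accepting run with exactly \(k+1\) mismatches. Without such a path, no run can use both special transitions, so the deviation is at most \(k\).

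Your construction is essentially the gadget of Proposition~\ref{prop:bounded-hardness}, and the label \(a^k|b^k\) is not even needed. The transducer built there, with \((t,a,b,s)\), already has deviation at most \(1 \leq k\) or \(\infty\), so it works unchanged for every fixed \(k \geq 1\).

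The paper's version proves a stronger statement. The transducers it produces are always bounded, with deviation at most \(k+1\). So the threshold question stays NL-hard even when boundedness is promised, and the hardness does not come merely from detecting unbounded deviation. This matters because the procedure behind Proposition~\ref{prop:KboundedNFTHigh} first tests boundedness and only then compares against \(k\).

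Your two supporting arguments are sound: every transition has shift \(0\), so \(d(u,v)\) is the sum of per-transition distances. Likewise, the first return to \(t\) after the special transition would need a path of \(G\) from \(s\) to \(t\).
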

\begin{proof}
	Let \(G=(V,E)\) and \(s,t \in V\) be an instance of the Directed Reachability Problem.
	We construct a trim NFT \(\Cal{T} = (V \cup \{q_i,q_f\},\{a,b\}, \{q_i\}, \{q_f\}, \Delta)\),
	where \(\Delta\) is defined as follows:
	\[
		\Delta = \{
		(q_i,a,a,v), (v,a,a,q_f) \mid v \in V\}
		\cup \{
		(u,a,a,v) \mid (u,v) \in E
		\}
		\cup
		\{
		(q_i,a^k,b^k,s), (t,a,b,q_f)
		\}.
	\]
	All the transitions read and output the letter \(a\),
	except the transitions \((q_i,a^k,b^k,s)\) and \((t,a,b,q_f)\),
	which introduce  \(k\) and \(1\) mismatches respectively.
	If \(t\) is reachable from \(s\) in \(G\),
	there exists a run of \(\Cal{T}\) that starts with \((q_i,a^k,b^k,s)\),
	then follows a path of \(G\) from \(s\) to \(t\), and concludes with the transition \((t,a,b,q_f)\).
	This run contains exactly \(k+1\) mismatches, hence \(\dev(R_{\Cal{T}}) > k\).
	Conversely, if \(t\) is not reachable from \(s\), then 
	no run can contain both \((q_i,a^k,b^k,s)\)
	and \((t,a,b,q_f)\).
	Hence, every run contains at most \(k\) mismatches, thus \(\dev(\Cal{T}) \leq k\).
	Therefore, \(\dev(R_{\Cal{T}}) \leq k\) if and only if \(t\) is not reachable from \(s\) in \(G\),
	which concludes the proof.
\end{proof}

\subsection{co-NP hardness} 
\label{subsec:co-NP}
We recall the following NP-complete problem~\cite[Theorem 6.1]{Papadimitriou94}.

\PB{3-SAT}
{A Boolean formula \(\varphi = \bigwedge_i c_i\) where each clause \(c_i\) is a disjunction of three literals.}
{\true{} if \(\varphi\) is satisfiable, \false{} otherwise.}

\begin{proposition}
	\label{prop:threshold-bounded-hardness}
	The Threshold-bounded Deviation Problem is co-NP-hard.
\end{proposition}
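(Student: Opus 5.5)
I would reduce 3-SAT to the \emph{complement} of the Threshold-bounded Deviation Problem. Given \(\varphi\) with variables \(x_1,\dots,x_n\) and clauses \(c_1,\dots,c_m\), I will build in polynomial time (in fact \logspace) an NFT \(\Cal{T}\) and a binary integer \(k\) such that \(\dev(R_{\Cal{T}}) > k\) if and only if \(\varphi\) is satisfiable. The input word is meant to encode an assignment repeated once per clause. Since the transducer only needs to be polynomial (not logspace-small), \(k\) can be of size polynomial in \(n,m\), which is exactly why the problem becomes co-NP-hard only when \(k\) is given in binary.

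\textbf{Encoding.} Fix a weight \(W = m+1\). A \emph{block} is a word over \(\{0,1,a\}\) of length \(L = nW+1\), made of \(n\) segments \(\beta_i^W\) with \(\beta_i\in\{0,1\}\) (the value of \(x_i\)), followed by one clause cell containing the letter \(a\). The domain is \emph{block}\(^{m+1}\); a finite automaton enforces this shape, and in particular checks that each segment is constant, with polynomially many states. The output is the input delayed by exactly \(L\) positions.
\begin{itemize}
\item First, emit a fixed padding word of length \(L\).
\item Then, while reading each input letter, emit the complement of every bit (\(0\mapsto 1\), \(1\mapsto 0\)). Comparing output position \(p\) with input position \(p\) therefore compares \(u_p\) with \(\overline{u_{p-L}}\), which produces a mismatch exactly when bit position \(p\) agrees with the same position in the previous block.
\item Finally, emit nothing while reading the last block.
\end{itemize}
Only a counter up to \(L\) is needed, so the size is polynomial.

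\textbf{Clause bonus.} For the clause cells, when the transducer reads block \(j\le m\) it nondeterministically guesses one literal of \(c_j\) and verifies, when reading the corresponding segment, that this literal is true. On reading the clause cell \(a\) of block \(j\), it outputs \(b\) if the verification succeeded and \(a\) otherwise. By the delay, this output letter lands on the clause cell of block \(j+1\), whose input letter is \(a\). Hence satisfied clauses contribute one mismatch each. The padding is chosen so that the comparisons against block \(1\) contribute a fixed known amount (for instance, it is made to match). Set \(k = mnW + m - 1\).

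\textbf{Correctness.} If \(\varphi\) is satisfiable, repeat a satisfying assignment in all \(m+1\) blocks and guess a true literal in every clause. This yields \(mnW\) consistency mismatches plus \(m\) clause mismatches, so \(\dev > k\). Conversely, take any accepting pair. If all blocks encode the same assignment, the clause bonus is at most the number of clauses that assignment satisfies, and this is \(<m\) if \(\varphi\) is unsatisfiable. If some segment differs between two consecutive blocks, we lose at least \(W=m+1\) consistency mismatches, and the clause bonus is at most \(m\), so the total is at most \(mnW - W + m \le k\). Thus \(\dev(R_{\Cal{T}})\le k\) iff \(\varphi\) is unsatisfiable, and co-NP-hardness follows.

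\textbf{Main obstacle.} The delicate step is the bookkeeping of the delay-by-\(L\) construction: the output produced while reading block \(j\) is compared against input block \(j+1\). I must check two things carefully. First, the clause-cell output is aligned exactly with the next block's clause cell. Second, the boundary contributions from the padding against block \(1\) and from the suppressed output on the last block are constant, independent of the input. If the alignment proves awkward, my fallback is to make the clause cell a segment of \(W'\) copies of \(a\) and to slightly adjust \(k\) accordingly, keeping the same argument that inconsistency costs more than any clause gain.
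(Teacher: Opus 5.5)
Your reduction is correct in substance, but it enforces the clauses differently from the paper. The skeleton is the same as the paper's:
\begin{itemize}
\item the output is the bitwise complement of the input, delayed by one block;
\item a first output block is written without reading, and a last input block is read without writing;
\item hence two consecutive blocks produce a full block of mismatches exactly when they carry the same valuation.
\end{itemize}
The paper, however, makes clause satisfaction a \emph{hard} constraint. The gadget $\Cal{T}_i$ has no accepting run on a valuation falsifying $c_i$, so the domain of $\Cal{T}$ depends on $\varphi$. The threshold $n(m+1)-1$ then just asks for a pair that mismatches at every position, which forces all blocks to agree. You instead keep a formula-independent domain (all well-shaped inputs) and reward each clause whose guessed literal is true with one mismatch in a clause cell. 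This requires the weight $W=m+1$ and the constancy check on segments, so that one inconsistency costs more than all bonuses together. That trade-off count is correct. It buys a transducer total on a fixed regular domain, at the price of extra letters, weights and a more delicate count. If a failed literal check simply rejected, your construction would collapse to essentially the paper's, with no clause cells or weights. The alignment you worry about is automatic: on blocks $1,\dots,m$ each letter read produces exactly one letter, so offsets within blocks are preserved.

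The step you must repair is the padding. A fixed padding word cannot be ``made to match'' the first block, since that block varies with the input. Worse, a fixed padding over $\{0,1\}$ adds a valuation-dependent term, in multiples of $W$, that swamps the clause bonus. For instance, with padding $0^{nW}a$ the maximum is always attained by the all-ones valuation, whatever the satisfiability of $\varphi$. There are two fixes:
\begin{itemize}
\item put a letter outside $\{0,1\}$ at the bit positions, e.g.\ padding $a^L$, which always yields exactly $nW$ mismatches against the first block;
\item or guess the padding bits nondeterministically, as the paper does with $\Cal{T}_{\Rm{init}}$, so that the best run gets exactly $nW$ there.
\end{itemize}
Either way the threshold must be $k=(m+1)nW+m-1$, not $mnW+m-1$. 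Your three cases then go through:
\begin{itemize}
\item $(m+1)nW+m$ if $\varphi$ is satisfiable;
\item at most $(m+1)nW+m-1$ for a consistent valuation of an unsatisfiable formula;
\item at most $(m+1)nW-1$ otherwise.
\end{itemize}
Finally, your remark about the binary encoding is off. In both constructions $k$ is polynomial in the instance size, so hardness holds as soon as $k$ is part of the input, even in unary.
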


\begin{proof}
	Let \(\varphi\) be a 3-SAT instance on \(n\) variables composed of \(m\) clauses:
	\[
		\varphi(x_1,\ldots,x_n) = \bigwedge\limits_{i = 1}^m  \big(\ell_{i}^1 \vee \ell_{i}^2 \vee \ell_{i}^3\big)\text{, where } \ell_{i}^j \text{ is either } x_k \text{ or } \lnot x_k, \text{ with } 1 \leq k \leq n.
	\]
	We construct an NFT \(\Cal{T}\) such that \(\dev(R_{\Cal{T}}) > n \cdot (m+1)-1\), if and only if the 3-SAT instance has a solution.
	The construction of $\Cal{T}$ is based on a sequential concatenation of gadgets.
	Intuitively, for each clause \(c_i\) we construct an NFT \(\Cal{T}_i\)
	that reads a word \(u \in \{0,1\}^n\) encoding a valuation,
	accepts if the valuation satisfies the clause,
	and produces its bitwise negation.
	To evaluate all the clauses,
	these gadgets are then sequentially combined with an initial shift of minus \(n\).
	The proof of correctness then relies on the fact that the input of a gadget produces \(n\) mismatches
	with the output of its predecessor
	if and only if they both read the same valuation. An example of a clause gadget can be found in~\cref{fig:gadget}. The different gadgets and their combination is given in~\cref{fig:co-np-hardness-reduction-exemple} for two clauses and four variables.
	
	\begin{figure}
		\centering
		\resizebox{0.6\textwidth}{!}{

\begin{tikzpicture}

\node[state,fill=red!20](0T) at (0,0) {\(0,\top_1\)};
\node[state,fill=red!20](1T) at (3,0) {\(1,\top_1\)};
\node[state,fill=red!20](2T) at (6,0) {\(2,\top_1\)};
\node[state,fill=red!20](3T) at (9,0) {\(3,\top_1\)};
\node[state,fill=red!20,accepting](4T) at (12,0) {\(4,\top_1\)};

\node[state,fill=red!20,initial,initial text=](0b) at (0,2) {\(0,\bot_1\)};
\node[state,fill=red!20](1b) at (3,2) {\(1,\bot_1\)};
\node[state,fill=red!20](2b) at (6,2) {\(2,\bot_1\)};
\node[state,fill=red!20](3b) at (9,2) {\(3,\bot_1\)};
\node[state,fill=red!20](4b) at (12,2) {\(4,\bot_1\)};

\path[draw,thick,-latex] (0T) edge node[above] {\(1|0\)} node[below] {\(0|1\)}(1T);
\path[draw,thick,-latex] (1T) edge node[above] {\(1|0\)} node[below] {\(0|1\)}(2T);
\path[draw,thick,-latex] (2T) edge node[above] {\(1|0\)} node[below] {\(0|1\)}(3T);
\path[draw,thick,-latex] (3T) edge node[above] {\(1|0\)} node[below]
     {\(0|1\)}(4T);
     
\path[draw,thick,-latex] (0b) edge node[above] {\(1|0\)} node[below] {\(0|1\)}(1b);
\path[draw,thick,-latex] (1b) edge node[above] {\(1|0\)} node[below] {\(0|1\)}(2b);
\path[draw,thick,-latex] (2b) edge node[above] {\(1|0\)} node[below] {\(0|1\)}(3b);
\path[draw,thick,-latex] (3b) edge node[above] {\(1|0\)} node[below] {\(0|1\)}(4b);

\path[draw,thick,-latex] (0b) edge node[left] {\(1|0 \ \)} (1T);
\path[draw,thick,-latex] (1b) edge node[left] {\(0|1 \ \)} (2T);
\path[draw,thick,-latex] (3b) edge node[left] {\(0|1 \ \)} (4T);
\end{tikzpicture}}
		\caption{Gadget for \((x_1 \vee \lnot x_2 \vee \lnot x_4)\).}
		\label{fig:gadget}
	\end{figure}

	\proofsubparagraph{Clause gadgets.}
	For each clause \(c_i\), the NFT \(\Cal{T}_i\) is defined as \((Q_i,\{0,1\}, \{s_{i}\}, \{f_{i}\}, \Delta_i)\),
	where \(Q_i = [0,n] \times \{\top_i,\bot_i\}\), \(s_i = (0,\bot_i)\), \(f_i = (n,\top_i)\), and \(\Delta_i\) contains the following quadruplets:
	\begin{enumerate}
		\item For all \(j < n\) and \(b \in \{0,1\}\), \(\big((j,\top_i), b, 1-b, (j+1,\top_i)\big) \in \Delta_i\).
		\item For all \(j < n\) and \(b \in \{0,1\}\), \(\big((j,\bot_i), b, 1-b, (j+1,\bot_i)\big) \in \Delta_i\).
		\item For all \(k\) such that the literal \(x_k\) occurs in \(c_i\), \(\big((k-1,\bot_i), 1, 0, (k,\top_i)\big) \in \Delta_i\).
		\item For all \(k\) such that the literal \(\lnot x_k\) occurs in \(c_i\), \(\big((k-1,\bot_i), 0, 1, (k,\top_i)\big) \in \Delta_i\).
	\end{enumerate}
	Note that every accepting run of \(\Cal{T}_i\) is of length exactly \(n\),
	and outputs the negation of its input.
	Furthermore, the transitions of type~3 and~4 above are the only ones that move from the
	\(\bot_i\)-states to the \(\top_i\)-states, and thus are necessary to reach the final state.
	They are enabled exactly when the corresponding literal makes the clause \(c_i\) true,
	which ensures that the accepting runs of \(\Cal{T}_i\) encode valuations satisfying \(c_i\).

	\proofsubparagraph{Boundary gadgets.}
	We construct two gadgets NFT \(\Cal{T}_{\Rm{init}} = (Q_{\Rm{init}},\{0,1\}, \{s_{\Rm{init}}\}, \{f_{\Rm{init}}\}, \Delta_\Rm{init})\) 
	and \(\Cal{T}_{\Rm{final}} = (Q_{\Rm{final}},\{0,1\}, \{s_{\Rm{final}}\}, \{f_{\Rm{final}}\}, \Delta_\Rm{final})\), where:
	\begin{itemize}
	\item \makebox[5cm][l]{\(Q_{\Rm{init}}=[0,n]\times\{\Rm{init}\}\),} \(Q_{\Rm{final}}=[0,n]\times\{\Rm{final}\}\),
	\item \makebox[5cm][l]{\(s_{\Rm{init}}=(0,\Rm{init})\),} \(s_{\Rm{final}}=(0,\Rm{final})\), 
	\item \makebox[5cm][l]{\(f_{\Rm{init}}=(n,\Rm{init})\),} \(f_{\Rm{final}}=(n,\Rm{final})\), 
	\item \( \Delta_\Rm{init}=\{\big((i,\Rm{init}),\varepsilon,b,(i+1,\Rm{init})\big)\mid 0\leq i<n \text{ and }b\in\{0,1\}\}\),\\
		\( \Delta_\Rm{final}=\{\big((i,\Rm{final}),b,\varepsilon,(i+1,\Rm{final})\big)\mid 0\leq i<n \text{ and }b\in\{0,1\}\}\).
\end{itemize}
Note that \(R_{\Cal{T}_{\Rm{init}}}=\{\varepsilon\}\times\{0,1\}^n\) and 
 \(R_{\Cal{T}_{\Rm{final}}}=\{0,1\}^n\times\{\varepsilon\}\).

	\proofsubparagraph{Global construction.}
	We construct an NFT \(\Cal{T}\) that recognizes the concatenation of the relations of all the gadgets: \(R_{\Cal{T}} = R_{\Cal{T}_{\Rm{init}}} \cdot R_{\Cal{T}_{1}} \cdot \ldots \cdot R_{\Cal{T}_{m}} \cdot R_{\Cal{T}_{\Rm{final}}}\).
	Since each gadget has a unique initial state with no incoming transition
	and a unique final state with no outgoing transition,
	this concatenation is achieved by merging the final state of each gadget
	with the initial state of the following one~: 
	\(f_{\Rm{init}}\) with \(s_1\), \(f_i\) with \(s_{i+1}\) for all \(1 \leq i < m\), and \(f_m\) with \(s_{\Rm{final}}\).
	The initial state of \(\Cal{T}\) is \(s_{\Rm{init}}\) and the final state is \(f_{\Rm{final}}\).
	The construction is then polynomial, the number of states of \(\Cal{T}\) being \((n+1) \cdot (2m+2)-(m+1)=(2n+1)\cdot(m+1)\).
	
	\proofsubparagraph{Correctness.}
	We prove that \(\varphi\) is satisfiable if and only if
	there exists \((y,z)\) in \(R_{\Cal{T}}\) such that \(d(y,z) \geq n \cdot (m+1)\).
	Suppose that \(\varphi\) is satisfiable and let \(\nu\) be a valuation satisfying it.
	Let \(u = \nu(x_1)\cdots\nu(x_n)\) be the word of length \(n\) encoding this valuation and let \(v =  \lnot\nu(x_1)\cdots\lnot\nu(x_n)\) be its negation.
	Since \(\nu\) is a valuation satisfying \(\varphi\), it satisfies each of its clauses, hence for all \(i \in [1,m]\), \((u,v) \in R_{\Cal{T}_i}\).
	Furthermore, \((\varepsilon,v) \in R_{\Cal{T}_{\Rm{init}}}\) and \((u,\varepsilon) \in R_{\Cal{T}_{\Rm{final}}}\).
	Consequently, \((u^{m+1},v^{m+1}) \in R_{\Cal{T}}\), with \(d(u^{m+1},v^{m+1}) = |u| \cdot (m+1) = n \cdot (m+1)\).

	Now let \((y,z) \in R_{\Cal{T}}\), since \(R_{\Cal{T}} = R_{\Cal{T}_{\Rm{init}}} \cdot R_{\Cal{T}_{1}} \cdot \ldots \cdot R_{\Cal{T}_{m}} \cdot R_{\Cal{T}_{\Rm{final}}}\), we can decompose \(y\) and \(z\) into \(u_1\cdots u_m u_{\Rm{final}}\) and \(v_{\Rm{init}}v_1\cdots v_{m}\) respectively, with for all \(i \in [1,m]\), \(|v_{\Rm{init}}| = |u_{\Rm{final}}| = |u_i| = |v_i| = n\).
	By definition of \(R_{\Cal{T}_i}\), \(u_i\) describes a valuation satisfying \(c_i\) and \(v_i\) is the negation of \(u_i\).
	If \(d(y,z) \geq n \cdot (m+1)\) while \(|y| = |z| = n \cdot (m+1)\), all positions of \(y\) and \(z\) mismatch.
	Consequently, \(u_1\) is the negation of \(v_{\Rm{init}}\), for all \(i \in [2,m]\) \(u_i\) is the negation of \(v_{i-1}\), and \(u_{\Rm{final}}\) is the negation of \(v_m\).
	Therefore, \(u_1 = u_2 = \ldots = u_m = u_{\Rm{final}}\), and they all encode the same valuation satisfying \(c_1,c_2,\ldots,c_m\) and consequently satisfying \(\varphi\).
\end{proof}

\begin{figure}
	\centering
	\resizebox{0.95\textwidth}{!}{
{\Large
\begin{tikzpicture}


  \draw[fill=blue!10] (-0.75,-1.75) rectangle +(13.5,4.5);
  \node () at (6,-1.2) {\(\mathcal{T}_1\) for \((x_1\wedge \neg x_2 \wedge \neg
    x_3)\)};
  
\node[state,fill=blue!20](0T1) at (0,0) {};
\node[state,fill=blue!20](1T1) at (3,0) {};
\node[state,fill=blue!20](2T1) at (6,0) {};
\node[state,fill=blue!20](3T1) at (9,0) {};
\node[state,fill=blue!20](4T1) at (12,0) {};

\node[state,fill=blue!20](0b1) at (0,2) {};
\node[state,fill=blue!20](1b1) at (3,2) {};
\node[state,fill=blue!20](2b1) at (6,2) {};
\node[state,fill=blue!20](3b1) at (9,2) {};
\node[state,fill=blue!20](4b1) at (12,2) {};

\path[draw,thick,-latex] (0T1) edge node[above] {\(1|0\)} node[below] {\(0|1\)}(1T1);
\path[draw,thick,-latex] (1T1) edge node[above] {\(1|0\)} node[below] {\(0|1\)}(2T1);
\path[draw,thick,-latex] (2T1) edge node[above] {\(1|0\)} node[below] {\(0|1\)}(3T1);
     
\path[draw,thick,-latex] (0b1) edge node[above] {\(1|0\)} node[below] {\(0|1\)}(1b1);
\path[draw,thick,-latex] (1b1) edge node[above] {\(1|0\)} node[below] {\(0|1\)}(2b1);
\path[draw,thick,-latex] (2b1) edge node[above] {\(1|0\)} node[below] {\(0|1\)}(3b1);
\path[draw,thick,-latex] (3b1) edge node[above] {\(1|0\)} node[below] {\(0|1\)}(4b1);

\path[draw,thick,-latex] (0b1) edge node[left] {\(1|0 \ \)} (1T1);
\path[draw,thick,-latex] (1b1) edge node[left] {\(0|1 \ \)} (2T1);
\path[draw,thick,-latex] (2b1) edge node[left] {\(0|1 \ \)} (3T1);


\begin{scope}[xshift=12cm,yshift=-2cm]

  \draw[fill=red!10] (-0.75,-1.75) rectangle +(13.5,4.5);
  \node () at (6,-1.2) {\(\mathcal{T}_2\) for \((x_2\wedge \neg x_3 \wedge x_4)\)};
  
\node[state,fill=red!20](0T) at (0,0) {};
\node[state,fill=red!20](1T) at (3,0) {};
\node[state,fill=red!20](2T) at (6,0) {};
\node[state,fill=red!20](3T) at (9,0) {};
\node[state,fill=red!20](4T) at (12,0) {};

\node[state,fill=red!20](0b) at (0,2) {};
\node[state,fill=red!20](1b) at (3,2) {};
\node[state,fill=red!20](2b) at (6,2) {};
\node[state,fill=red!20](3b) at (9,2) {};
\node[state,fill=red!20](4b) at (12,2) {};

\path[draw,thick,-latex] (3T1) edge node[above] {\(1|0\)} node[below]
     {\(0|1\)}(4T1);

\path[draw,thick,-latex] (0T) edge node[above] {\(1|0\)} node[below] {\(0|1\)}(1T);
\path[draw,thick,-latex] (1T) edge node[above] {\(1|0\)} node[below] {\(0|1\)}(2T);
\path[draw,thick,-latex] (2T) edge node[above] {\(1|0\)} node[below] {\(0|1\)}(3T);
\path[draw,thick,-latex] (3T) edge node[above] {\(1|0\)} node[below]
     {\(0|1\)}(4T);
     
\path[draw,thick,-latex] (0b) edge node[above] {\(1|0\)} node[below] {\(0|1\)}(1b);
\path[draw,thick,-latex] (1b) edge node[above] {\(1|0\)} node[below] {\(0|1\)}(2b);
\path[draw,thick,-latex] (2b) edge node[above] {\(1|0\)} node[below] {\(0|1\)}(3b);
\path[draw,thick,-latex] (3b) edge node[above] {\(1|0\)} node[below] {\(0|1\)}(4b);

\path[draw,thick,-latex] (1b) edge node[left] {\(1|0 \ \)} (2T);
\path[draw,thick,-latex] (2b) edge node[left] {\(0|1 \ \)} (3T);
\path[draw,thick,-latex] (3b) edge node[left] {\(1|0 \ \)} (4T);

\end{scope}

\begin{scope}[xshift=0cm,yshift=4cm]

  \draw[fill=green!10] (-0,-1) rectangle +(15,2);
  \node () at (14,0) {\(\mathcal{T}_{\rm init}\)};
  
\node[state,fill=green!20](0Ti) at (12,0) {};
\node[state,fill=green!20](1Ti) at (9,0) {};
\node[state,fill=green!20](2Ti) at (6,0) {};
\node[state,fill=green!20](3Ti) at (3,0) {};

\path[draw,thick,-latex] (0Ti) edge node[above] {\(\varepsilon|0\)} node[below] {\(\varepsilon|1\)}(1Ti);
\path[draw,thick,-latex] (1Ti) edge node[above] {\(\varepsilon|0\)} node[below] {\(\varepsilon|1\)}(2Ti);
\path[draw,thick,-latex] (2Ti) edge node[above] {\(\varepsilon|0\)} node[below] {\(\varepsilon|1\)}(3Ti);
\path[draw,thick,-latex] (3Ti) edge node[left,pos=0.15] {\(\varepsilon|0,\ \varepsilon|1\)}(0b1);

\end{scope}
\begin{scope}[xshift=12cm,yshift=-4.75cm]

  \draw[fill=yellow!10] (-3.5,-.75) rectangle +(15,1.5);
  \node () at (-2,0) {\(\mathcal{T}_{\rm final}\)};
  
\node[state,fill=yellow!20](1Tf) at (9,0) {};
\node[state,fill=yellow!20](2Tf) at (6,0) {};
\node[state,fill=yellow!20](3Tf) at (3,0) {};
\node[state,fill=yellow!20,accepting](4Tf) at (0,0) {};

\path[draw,thick,-latex] (4T) edge node[right,pos=0.4] {\(\ 0|\varepsilon,\ 1|\varepsilon\)}(1Tf);
\path[draw,thick,-latex] (1Tf) edge node[above] {\(0|\varepsilon\)} node[below] {\(1|\varepsilon\)}(2Tf);
\path[draw,thick,-latex] (2Tf) edge node[above] {\(0|\varepsilon\)} node[below] {\(1|\varepsilon\)}(3Tf);
\path[draw,thick,-latex] (3Tf) edge node[above] {\(0|\varepsilon\)} node[below] {\(1|\varepsilon\)}(4Tf);

\end{scope}

\end{tikzpicture}}}
	\caption{An NFT \(\Cal{T}\) for a SAT formula with 2 clauses and 4 variables.}
	\label{fig:co-np-hardness-reduction-exemple}
\end{figure}
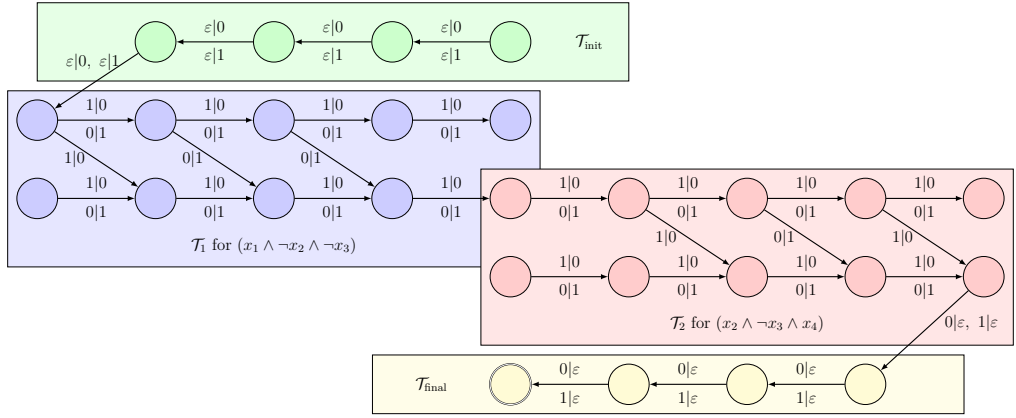
\label{ssub:completness}

\subsection{DP-Hardness} 
\label{subsec:DP}
\label{ssec:dp_hardness}

A canonical example of a DP problem is SAT-UNSAT~\cite[S.~2,~Lemma~1]{Papadimitriou1984}:

\PB{SAT-UNSAT Problem}
{Two Boolean formulas \(\varphi_1,\varphi_2\).}
{\true{} if \(\varphi_1\) is satisfiable and \(\varphi_2\) is not satisfiable, \false{} otherwise.}

\begin{proposition}
	\label{prop:exact-hardness}
	The Exact Deviation Problem is DP-hard.
\end{proposition}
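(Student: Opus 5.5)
We reduce from SAT-UNSAT, building on the co-NP-hardness construction of Proposition~\ref{prop:threshold-bounded-hardness}. Given \(\varphi_1,\varphi_2\), we first normalise both to 3-SAT instances and pad them so that they have the same number of variables \(n\) and the same number of clauses \(m\). Padding uses fresh variables and trivially satisfiable clauses such as \((y\vee\lnot y\vee y)\), which preserves satisfiability. For each \(\varphi_j\) the construction of Proposition~\ref{prop:threshold-bounded-hardness} yields a transducer \(\Cal{T}_{\varphi_j}\) with the following property. Every pair \((y,z)\in R_{\Cal{T}_{\varphi_j}}\) has \(|y|=|z|=N:=n(m+1)\). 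Moreover, \(\dev(R_{\Cal{T}_{\varphi_j}})=N\) if \(\varphi_j\) is satisfiable, and \(\dev(R_{\Cal{T}_{\varphi_j}})\leq N-1\) otherwise.

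The key extra fact needed is a lower bound in the unsatisfiable case, namely \(\dev(R_{\Cal{T}_{\varphi_j}})=N-1\) exactly, or at least \(\geq N-1\) under a controlled modification. I would try to prove this directly: pick any \(u_1,\dots,u_m\) where \(u_i\) satisfies \(c_i\) (such words exist, since each clause is satisfiable on its own). Then choose \(v_{\Rm{init}}=\lnot u_1\) and \(u_{\Rm{final}}=\lnot v_m\), so that only the internal boundaries \(v_{i-1}\) versus \(u_i\) can match. If this does not give exactly \(N-1\), I would adjust the gadgets. For instance, I would add to the global transducer a separate branch, disjoint from the main chain, that reads \(0^N\) and outputs \(1^{N-1}0\). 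This forces \(\dev\geq N-1\), so \(\dev(R_{\Cal{T}_{\varphi_j}})\in\{N-1,N\}\) with \(N\) iff \(\varphi_j\) is satisfiable. This is the cleanest choice, and I commit to it.

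Next I combine the two transducers so that the deviation distinguishes the four cases. Let \(X=\dev(\Cal{T}_{\varphi_1})\in\{N-1,N\}\) and \(Y=\dev(\Cal{T}_{\varphi_2})\in\{N-1,N\}\). The target is an exact value \(k\) attained only when \(X=N\) and \(Y=N-1\). A disjoint union gives \(\max(X,Y)\), which cannot separate the cases. Concatenation gives \(X+Y\), and the sum \(2N-1\) is ambiguous between the two mixed cases. So I weight the two components: take the concatenation of \(R_{\Cal{T}_{\varphi_1}}\) repeated twice with one copy of \(R_{\Cal{T}_{\varphi_2}}\). Since all pairs are length-preserving with fixed lengths, deviations add, and \(\dev=2X+Y\). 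The four cases then give \(3N\), \(3N-1\), \(3N-2\), \(3N-3\), with \(3N-1\) exactly the SAT-UNSAT case. The distinct values \(2X+Y\) are what make this work.

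The weighting step is straightforward once additivity is verified. Additivity holds because each component relation has all pairs of equal fixed length \(N\), so mismatch counts add over the blocks and the suprema add. The construction is concatenation by merging final and initial states, which is polynomial, so setting \(k=3N-1\) completes the reduction. The main obstacle is the exact-value control in the unsatisfiable case, that is, ensuring \(\dev\) is exactly \(N-1\) rather than merely \(\leq N-1\). The auxiliary branch handles this, provided I check that the branch cannot exceed \(N-1\) mismatches and does not interact with the clause chain, which holds because it is a separate path between the initial and final states.
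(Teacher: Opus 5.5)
Your reduction is correct and follows the same overall plan as the paper. Both reduce from SAT-UNSAT, reuse the co-NP gadget chain, use additivity of the Hamming distance under concatenation of length-preserving relations, and take a union with a fixed pair to pin the value in the unsatisfiable case.

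The difference lies in how the two formulas are balanced. You pin both components to $\{N-1,N\}$ with the branch $(0^N,1^{N-1}0)$ and use weights $2$ and $1$. The paper pins only the $\varphi_2$ side, via the union with $\Cal{T}_3$ realising $(0^{k_2-1},1^{k_2-1})$, and concatenates $k_2$ copies of $\Cal{T}_1$. An unsatisfiable $\varphi_1$ then loses at least $k_2$ mismatches, so the total falls below $k_1k_2+k_2-1$ whatever the exact unsatisfiable deviation of $\Cal{T}_1$ is.

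Your version gives a smaller instance (three blocks rather than $k_2+1$) and a fully explicit four-case table. The paper's version needs no lower-bound control on $\Cal{T}_1$.

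Two of your steps are in fact dispensable:
\begin{itemize}
\item Padding to a common $N$ is unnecessary, since $2X+Y$ separates the four cases even when $N_1\neq N_2$.
\item The branch on the $\varphi_1$ side can be dropped. With weight $2$, an unsatisfiable $\varphi_1$ lowers the total by at least $2$, while a satisfiable $\varphi_2$ raises it by only $1$.
\end{itemize}

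You were right not to rely on the direct lower-bound attempt. For unsatisfiable formulas the chain alone can have deviation far below $N-1$: clauses alternating between $x_1$ and $\lnot x_1$ force $m-1$ bit flips between consecutive blocks. So the auxiliary branch is genuinely needed on the $\varphi_2$ side.
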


\begin{proof}
	Consider an instance of SAT-UNSAT given by two propositional formulas \(\varphi_1\) and \(\varphi_2\).
	We reuse the construction of the proof of the \cref{prop:threshold-bounded-hardness} to obtain two NFTs \(\Cal{T}_1\), \(\Cal{T}_2\) and two positive integers \(k_1\), \(k_2\) such that for \(i \in \{1,2\}\), \(|u| = |v| = k_i\) for all \((u,v) \in R_{\Cal{T}_i}\), and:
	\[
		\begin{cases}
			\dev(R_{\Cal{T}_i}) \leq k_i -1 &\text{ if } \varphi_i \text{ is not satisfiable,} \\
			\dev(R_{\Cal{T}_i}) = k_i &\text{ if } \varphi_i \text{ is satisfiable.}
		\end{cases}
	\]
	We then create \(\Cal{T}'_1\) by concatenating \(k_2\) copies of \(\Cal{T}_1\) together, i.e., \(R_{\Cal{T}'_1} = R_{\Cal{T}_1}^{k_2}\).
	Therefore:
	\[
		\begin{cases}
			\dev(R_{\Cal{T}'_1}) \leq (k_{1} - 1)k_{2} &\text{ if } \varphi_1 \text{ is not satisfiable,} \\
			\dev(R_{\Cal{T}'_1}) = k_{1}k_{2} &\text{ if } \varphi_1 \text{ is satisfiable.}
		\end{cases}
	\]
	Consider \(\Cal{T}\) the concatenation of \(\Cal{T}'_1\) and \(\Cal{T}_2\).
	By construction, for all \((u,v) \in R_{\Cal{T}}\), \(|u| = |v| = k_1k_2 + k_2\) and \(\Cal{T}\) is a bounded NFT.
	Furthermore, \(\varphi_1\) and \(\varphi_2\) are both satisfiable if and only if there exists \((u,v) \in R_{\Cal{T}}\) such that \(d(u,v) = k_1k_2 + k_2\).
	Note also that if \(\varphi_1\) is not satisfiable, for all \((u,v) \in R_{\Cal{T}'_1}\), there will be at most \((k_1-1)k_2\) mismatches from \(\Cal{T}'_1\) and at most \(k_2\) mismatches from \(\Cal{T}_2\), there will be at most \(k_1k_2\) mismatches.
	The SAT-UNSAT instance is true if and only if the bound of \(\Cal{T}\) is greater than \(k_1k_2\), and strictly less than \(k_1k_2 + k_2\).

	We conclude by reducing this interval to a single value in the following way.
	Let \(\Cal{T}_3\) be an NFT with two states and one transition such that \(R_{\Cal{T}_3} = \{(0^{k_2-1}, 1^{k_2-1})\}\).
	Consider \(\Cal{S}\) the NFT such that \(\Cal{S}\) is the concatenation of \(\Cal{T}'_1\) with the non-deterministic choice between \(\Cal{T}_2\) and \(\Cal{T}_3\).
	We have \(R_{\Cal{S}} = R_{\Cal{T}'_1} \cdot (R_{\Cal{T}_2} \cup R_{\Cal{T}_3})\).
	In the end, the SAT-UNSAT instance is true if and only if the bound of \(\Cal{S}\) is exactly \(k_1k_2 +k_2-1\).
\end{proof}

\begin{figure}
	\centering
	\resizebox{0.6\textwidth}{!}{
\begin{tikzpicture}


  \draw[fill=blue!10,thick] (-1,-1) rectangle +(8cm,2.5cm);
  \node (T1prime) at (3,1.2) {\(\mathcal{T}_1^\prime\)};

\node[state, rectangle,thick,fill=blue!20,minimum width=1.5cm,minimum height=1.5cm] (T1a) at (0,0) {\(\mathcal{T}_1\)};
\node[state, rectangle,thick,fill=blue!20,minimum width=1.5cm,minimum height=1.5cm] (T1b) at (3,0) {\(\mathcal{T}_1\)};
  \node (dot) at (4.5,0) {\(\cdots\)};
\node[state, rectangle,thick,fill=blue!20,minimum width=1.5cm,minimum height=1.5cm] (T1c) at (6,0) {\(\mathcal{T}_1\)};

\path[thick,draw,-latex] (T1a) -- (T1b);
\path[thick,draw,-latex] (T1b) -- (dot);
\path[thick,draw,-latex] (dot) -- (T1c);

\node[state, rectangle,thick,fill=red!20,minimum width=1.5cm,minimum
  height=1.5cm] (T2) at (9,1) {\(\mathcal{T}_2\)};

\node[state, rectangle,thick,fill=green!20,minimum width=1.5cm,minimum height=1.5cm] (T3) at (9,-1) {\(\mathcal{T}_3\)};
\path[thick,draw,-latex] (T1c) -- (T2);
\path[thick,draw,-latex] (T1c) -- (T3);
\end{tikzpicture}}
	\caption{Resulting NFT \(\Cal{S}\) for SAT-UNSAT to Exact Deviation Problem reduction.}
	\label{fig:dp-hardness-reduction-exemple}
\end{figure}
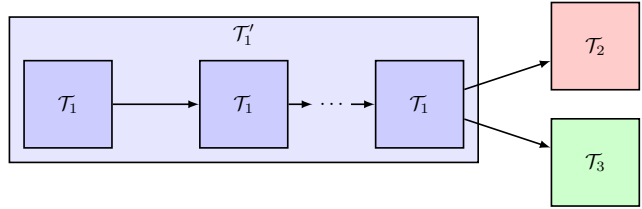



\section{Upper Bounds} 
\label{sec:upperBounds}
This section is devoted to proving the upper bounds stated
in Theorems~\ref{thm-BoundedNFT} and~\ref{thm-KboundedNFT}.
In Subsection~\ref{sub:bounded_nft} we show that the Bounded Deviation Problem is in NL 
and that whenever an NFT \(\cal{T}\) is bounded,
the bound is at most quadratic in \(|\cal{T}|\) (Proposition~\ref{prop:BoundedNFTHigh}).
Then, Subsection~\ref{ssub:k_bounded_nft}
proves the upper bound for the Threshold-bounded Deviation Problem.
In the following section, we construct indiscriminately NL or co-NL algorithms, as they are proven to be equivalent~\cite{Immerman88,Szelepcsenyi88}.

\subsection{Bounded Deviation Problem} 
\label{sub:bounded_nft}

This subsection is dedicated to prove the following upper bound:
\begin{proposition}\label{prop:BoundedNFTHigh}
The Bounded Deviation Problem is in NL.
Moreover, if \(\mathcal{T}\) is bounded,
then its deviation is in \(O(|\mathcal{T}|^2)\).
\end{proposition}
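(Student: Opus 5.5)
We propose to characterise unboundedness by a pattern of cycles in the transducer, and then check for this pattern in NL. We work with the trimmed transducer \(\Cal{T}\). Since every pair in \(R_{\Cal{T}}\) must have \(|u|=|v|\) for the deviation to be finite, we first handle the length condition. If some accepting run has nonzero shift, then \(d=\infty\) by the convention on \(d\), and \(\dev(R_{\Cal{T}})=\infty\) trivially. Hence we may assume \(\Cal{T}\) is length-preserving. Trimness then forces every cycle to have shift \(0\), since a cycle of nonzero shift could be pumped inside an accepting run. Moreover, the shift of any run from an initial state to a state \(q\) is then a fixed value \(\sigma(q)\) depending only on \(q\), with \(|\sigma(q)|\leq |Q|\cdot\smax(\Cal{T})\). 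The first step of the algorithm is therefore to check length-preservation in NL. We guess a cycle of nonzero shift, or an accepting run of nonzero shift, and keep only a logarithmic counter on the shift, which stays polynomially bounded by a pumping argument.

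Next we characterise unboundedness. Our claim is that \(\dev(R_{\Cal{T}})=\infty\) iff there is a state \(q\) on an accepting run and a cycle \(\gamma\) on \(q\) over \((x,y)\) such that \(x^\omega\) and \(y^\omega\), aligned with offset \(\sigma(q)\), differ somewhere. Concretely, \(y\) must not be conjugate to \(x\) by the appropriate shift (using the paper's notion of conjugacy by \(n\)), or the words must disagree with the prefix/suffix context around the cycle. The cleanest form compares the output letter at global position \(j\) with the input letter at the same position. Inside a shift-\(0\) cycle that is pumped, these two letters both come from iterations of \(\gamma\), at input and output offsets differing by \(\sigma(q)\). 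There is a caveat: when \(|\sigma(q)|\) exceeds \(|x|\), the matching letter may lie in the prefix \(\alpha\) or suffix \(\beta\) of the run rather than in \(\gamma\). So we must also consider positions where the output of \(\gamma\) is compared with input read by the part of the run surrounding the loop.

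The ``if'' direction is immediate: pumping \(\gamma\) \(N\) times yields at least \(N-O(1)\) mismatches. For the ``only if'' direction we argue by pigeonhole. Take a run with more than \(C\cdot|\Cal{T}|^2\) mismatches, for a suitable constant \(C\). Every mismatch at position \(j\) is witnessed by the pair of transitions \((\inn(j),\out(j))\), and the distance between them in the run is controlled by the shift, which is bounded by \(|Q|\smax\). Mark the pair of states at these two transitions. With more than about \(|Q|^2\) such mismatch pairs, some state pair \((p,p')\) repeats at two mismatches. The segment of the run between the two occurrences then gives simultaneously a cycle on \(p\) and on \(p'\) with the same relative offset, and pumping it reproduces the mismatch. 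This yields the cycle pattern, and with it the quadratic bound in the number of states times the transition size, which is \(O(|\Cal{T}|^2)\).

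The NL algorithm guesses \(p,p'\) and a position offset, then walks a cycle letter by letter. It keeps the current states, logarithmic indices inside transitions, and a shift counter bounded polynomially. It verifies that the two reading heads traverse synchronised cycles returning to \((p,p')\) with a mismatch along the way, and that \(p\), \(p'\) lie on a common accepting run in the right order, which is a reachability check. This is the product-of-\(\Cal{T}\)-with-itself reachability, done on the fly. We expect the main obstacle to be the pigeonhole step: ensuring that the repeated pair really produces synchronised, simultaneously pumpable loops despite the offset between input and output heads, and handling the case where input and output lie on the same cycle versus different ones. We would first try the ``two-head product automaton'' view, where the heads are the input position and output position of the same index, and do pigeonhole on product states.
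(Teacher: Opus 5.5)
\textbf{There is a genuine gap,} and it is exactly at the step you flagged as the main obstacle: a state pair \((p,p')\) repeated at two mismatches does not give a pumpable witness.

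Suppose mismatches at positions \(j_1<j_2\) satisfy \(\pi_1(\inn(j_1))=\pi_1(\inn(j_2))=p\) and \(\pi_1(\out(j_1))=\pi_1(\out(j_2))=p'\). You then get a cycle on \(p\) from \(\inn(j_1)\) to \(\inn(j_2)\), and a cycle on \(p'\) from \(\out(j_1)\) to \(\out(j_2)\). These are different segments of the run, not one segment that is a cycle on both. When \(j_2-j_1\leq |s_p|\) (your \(\sigma(p)\)), neither segment contains both \(u_{j_1}\) and \(v_{j_1}\). Pumping one segment changes which letters the other is compared with, so a cycle of your two-head product is in general not realised by any run.

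The transducer \(\Cal{T}_4\) of Example~\ref{ex-quadratic} refutes the step. On \((1001110000,0110001111)\), positions \(2\) and \(3\) are mismatches whose input letters are read from \(p_2\) and whose output letters are written from \(p_3\). The two cycles are the loops \(0|0\) on \(p_2\) and \(1|1\) on \(p_3\), both trivially conjugate, and \(\Cal{T}_4\) is bounded. For the same reason your NL procedure is unsound. It would accept the synchronised loops on \((p_2,p_3)\) with the mismatch \(0\neq 1\), and \(p_2,p_3\) do lie in this order on an accepting run, so it would declare \(\Cal{T}_4\) unbounded.

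The clause about disagreeing with the prefix/suffix context is also not a witness of unboundedness. However often \(\gamma\) is pumped, only about \(|s_q|\) letters of the pumped block are compared with letters of \(\alpha\) or \(\beta\). Only your ``cleanest form'' (non-conjugacy by \(s_q\)) is the right condition.

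\textbf{The missing idea} is to use a single cycle that is long relative to the shift, so that some mismatch has both its input and its output letter inside it; the second state \(p'\) plays no role. The paper pigeonholes only on \(p=\pi_1(\inn(m))\) over mismatch positions \(m\). It asks for \(b+\lmax(\Cal{T})+2\) occurrences of \(p\), where \(b=\min(\smax(\Cal{T})\cdot|Q|,|\Cal{T}|)\) bounds \(|s_p|\) (Lemma~\ref{lemma:bounded_imply_one_shift_by_state}). The cycle from the first to the \((\lmax(\Cal{T})+1)\)-th last occurrence then reads at least \(b+1\) mismatched input letters. At most \(|s_p|\leq b\) of these can have their output letter outside the cycle. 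Any remaining one gives \(u'_i\neq v'_j\) with \(j-i=s_p\), i.e.\ non-conjugacy by \(s_p\) (Lemma~\ref{lemma:QuadBound_imply_not_properties}).

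This yields the threshold \(B=(b+\lmax(\Cal{T})+2)\cdot|Q|=O(|\Cal{T}|^2)\), and it is \(b\leq|\Cal{T}|\) that makes it quadratic. Patching your version by requiring that many repetitions of a pair would only give \(|Q|^2(b+\lmax(\Cal{T})+2)\), a cubic bound.

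Correspondingly, the NL test guesses one state \(p\) and computes \(s_p\) along an initial run. It then guesses a single cycle on \(p\) together with two positions \(i,j\) such that \(j-i\equiv s_p \bmod |u|\), storing only the two letters. The small-witness bound of Claim~\ref{claim-complexity_conjugate_cycle} keeps all counters logarithmic.

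Your handling of length preservation and of the state shifts is fine and matches the paper.
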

We prove Proposition~\ref{prop:BoundedNFTHigh} in two steps.
First, we show in Lemma~\ref{lemma:bounded_equal_properties}
that bounded NFTs are characterized by a conjunction of two properties:
being length-preserving,
and having, for each cyclic run,
input and output words that are conjugate with respect to a state-dependent shift.
Second, we show that both properties are decidable in NL:
Lemmas~\ref{lemma:complexity_u_equal_v_for_small_run} and~\ref{lemma:complexity_u_equal_v_for_cycle}
imply together that being length-preserving is decidable in NL,
and Lemma~\ref{lemma:complexity_conjugate_cycle} provides an NL algorithm
for checking loop conjugacy.

To formalize the notion of state-dependent shift used in the loop conjugacy property,
we first establish the following lemma.
It shows that, in a length-preserving NFT,
all runs leading to a given state \(p\) share the same shift \(s_p\),
which we call the \emph{shift of the state \(p\)}.

\begin{lemma} 
\label{lemma:bounded_imply_one_shift_by_state}
	Let \(\Cal{T} = (Q,\Sigma, Q_{i}, Q_{f}, \Delta)\) be a length-preserving NFT.
	For all states \(p \in Q\), there exists \(s_p \in \mathbb{Z}\)
	such that every run \(\rho \in \Delta^*\)
	from \(q\in Q_i\) to \(p\) satisfies \(\shift(\rho) = s_p\).
	Moreover, \(|s_p| \leq \min(\smax(\Cal{T}) \cdot |Q|,|\Cal{T}|)\).
\end{lemma}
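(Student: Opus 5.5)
The plan is to exploit that all NFTs are assumed trimmed, so every state lies on some accepting run, and then to use length preservation to force all prefix shifts into a given state to coincide.

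\emph{Step 1: uniqueness of the shift at \(p\).} Fix \(p \in Q\). Since \(\Cal{T}\) is trim, there is a run \(\sigma\) from \(p\) to some final state. Let \(\rho,\rho'\) be two runs from initial states to \(p\). Then \(\rho\sigma\) and \(\rho'\sigma\) are accepting runs. By length preservation, \(\shift(\rho\sigma)=0=\shift(\rho'\sigma)\). By Remark~\ref{remark:shift_def}, shift is additive along concatenation, so \(\shift(\rho)=-\shift(\sigma)=\shift(\rho')\). This common value is \(s_p\). If \(p\) is initial, the empty run gives \(s_p=0\), which is consistent with the same argument.

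\emph{Step 2: the bound \(|s_p|\le \smax(\Cal{T})\cdot|Q|\).} By uniqueness, we may compute \(s_p\) on any convenient run. Trimness gives some initial run to \(p\). Removing cycles yields a simple run from an initial state to \(p\) with at most \(|Q|-1\) transitions, and this run is still a run to \(p\). By additivity and the triangle inequality, its shift is at most \((|Q|-1)\smax(\Cal{T})\le \smax(\Cal{T})\cdot|Q|\) in absolute value.

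\emph{Step 3: the bound \(|s_p|\le|\Cal{T}|\).} On the same simple run \(\rho\), each transition is used at most once. Hence \(|s_p| \le \sum_{\delta\in\rho} |\shift(\delta)| \le \sum_{\delta\in\rho}|\delta| \le \sum_{\delta\in\Delta}|\delta|\). The last sum is at most \(|\Cal{T}|\), since the encoding of \(\Cal{T}\) writes each transition's input and output words explicitly. Combining Steps 2 and 3 gives the minimum.

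The main point requiring care is Step 1. It is where the hypotheses are actually used: trimness supplies the completing suffix \(\sigma\), and length preservation makes every accepting run have shift zero. Note that length preservation is a property of the relation, but every accepting run realizes a pair in \(R_{\Cal{T}}\), so it applies run-wise. The bounds in Steps 2 and 3 are routine once uniqueness lets us pick a cycle-free witness run.
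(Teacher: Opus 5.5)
Your proposal is correct and matches the paper's proof essentially step for step: uniqueness of \(s_p\) comes from appending a common final run from \(p\) to two initial runs to \(p\) and using length preservation, and both bounds are read off a shortest initial run to \(p\), which has fewer than \(|Q|\) transitions and uses each transition at most once. You are slightly more explicit than the paper about where trimness is used and about the empty run when \(p\) is initial, but the argument is the same.
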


\begin{proof}
	Let \(p \in Q\) be a state of \(\Cal{T}\) and let \(\rho\) and \(\rho'\) be two runs of \(\Cal{T}\), from initial states 
	to \(p\).
	Let \(\rho_{final}\) be a run of \(\Cal{T}\) from \(p\) to some final state.
	Since \(\Cal{T}\) is length-preserving, we have \(\shift(\rho\rho_{final}) = 0\) and \(\shift(\rho'\rho_{final}) = 0\).
	Hence, \( \shift(\rho\rho_{final}) = \shift(\rho'\rho_{final})\), and  \(\shift(\rho) = \shift(\rho') = s_p\).
	
	Regarding the size of \(s_p\), let us consider a smallest initial run \(\rho\) to \(p\).
	Since \(|\rho|\) is smaller than \(|Q|\) and each transition generates a shift smaller than \(\smax(\Cal{T})\), \(s_p\) is smaller than \(\smax(\Cal{T}) \cdot |Q|\).
	Moreover, \(\rho\) uses each transition at most once,
	hence the shift \(s_p\) is smaller than the total representation of \(\Cal{T}\), so \(s_p \leq |\Cal{T}|\).
\end{proof}

\begin{remark}\label{rmk:QuadBound}
We discuss here the double bound on \(s_p\).
These bounds are in general incomparable,
and each can be advantageous depending on the transducer \(\Cal{T}\). 
If \(\Cal{T}\) has many transitions compared to its number of states,
it is more efficient to rely on the \(\smax(\Cal{T}) \cdot |Q|\).
On the other hand, if \(\Cal{T}\) has few transitions but large shift values,
then \(\smax(\Cal{T}) \cdot |Q|\) might be close to \(|\Cal{T}|^2\), 
in which case the bound \(|\Cal{T}|\) is preferable.
Notably, we rely on this later bound to obtain a generic quadratic bound on the deviation of \(\Cal{T}\) in~\cref{lemma:QuadBound_imply_not_properties}.
\end{remark}

We now formally express the conditions for an NFT to be bounded.
\begin{lemma}\label{lemma:bounded_equal_properties}
    Let \(\Cal{T} = (Q,\Sigma, Q_{i}, Q_{f}, \Delta)\) be an NFT.
    Then \(\Cal{T}\) is bounded if and only if both:
    \begin{enumerate}
    	\item \(\Cal{T}\) is length-preserving, and\label{item:bounded_equal_properties_1}
    	\item for every run \(\rho \in \Delta^*\) from \(p \in Q\) to itself over some \((u,v)\), \(u\) is conjugate to \(v\) by \(s_p\).\label{item:bounded_equal_properties_2}
    \end{enumerate}
    Moreover, these conditions imply that the deviation of \(\mathcal{T}\)
    is in \(O(|\mathcal{T}|^2)\).
\end{lemma}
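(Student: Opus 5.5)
We prove the two implications separately and read the quadratic bound off the sufficiency direction.

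\emph{Necessity.} First suppose \(\Cal{T}\) is bounded. Since \(d(u,v)=+\infty\) whenever \(|u|\neq|v|\), every pair in \(R_{\Cal{T}}\) has equal lengths, so condition~\ref{item:bounded_equal_properties_1} holds and Lemma~\ref{lemma:bounded_imply_one_shift_by_state} gives the shifts \(s_p\).

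For condition~\ref{item:bounded_equal_properties_2}, take a cycle \(\rho\) on \(p\) over \((u,v)\). By trimness, \(p\) lies on an accepting run \(\alpha\rho^N\beta\), and this holds for every \(N\). Because \(\shift(\alpha)=s_p\) and \(\shift(\rho)=0\), we have \(|u|=|v|\). In the word \(\inn(\alpha)u^N\inn(\beta)\) versus \(\out(\alpha)v^N\out(\beta)\), the block \(u^N\) is aligned with \(v^N\) offset by \(s_p\). Concretely, for positions deep inside the middle block, the input letter at index \(j\) of \(u\) (taken cyclically) faces the output letter at index \(j+s_p\) of \(v\), up to the sign convention in the definition of conjugacy by \(n\).

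If some pair of positions \(i,j\) with \(j-i\equiv s_p \bmod |u|\) satisfies \(u_i\neq v_j\), then each of the roughly \(N\) middle periods contributes at least one mismatch. The deviation is then unbounded as \(N\to\infty\), a contradiction. Hence \(u\) is conjugate to \(v\) by \(s_p\). One small care point is that a cycle may have \(|u|=0\); in that case the condition is vacuous.

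\emph{Sufficiency and the quadratic bound.} Assume both conditions hold and take an accepting run \(\rho\) over \((u,v)\). For each position \(i\) of \(u\), I will compare \(u_i\) with \(v_i\). Let \(t=\inn(i)\) and \(t'=\out(i)\) be the transitions producing these letters. Since \(|s_q|\le|\Cal{T}|\) at every state \(q\) by Lemma~\ref{lemma:bounded_imply_one_shift_by_state}, the input and output "heads" are never more than \(|\Cal{T}|\) letters apart.

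The aim is to show that a mismatch at position \(i\) forces the run segment between \(t\) and \(t'\) to contain no repeated state, or more precisely to be "short" in a pumping sense. Suppose the segment between \(\min(t,t')\) and \(\max(t,t')\) contains a cycle on a state \(q\). Condition~\ref{item:bounded_equal_properties_2} says the cycle's input is conjugate to its output by \(s_q\). This should let us transfer the comparison of \(u_i\) with \(v_i\) across the cycle, that is, remove the cycle without changing whether position \(i\) mismatches. Iterating the removal, a mismatch at \(i\) implies a mismatch in a shorter run whose relevant window is cycle-free.

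I would then charge mismatches by a counting argument. Each mismatch is witnessed by a cycle-free window of transitions covering at most \(|Q|\) transitions and carrying \(O(|\Cal{T}|)\) letters. The windows attached to distinct mismatches must moreover be "disjoint modulo cycles", meaning that along the run, cycles between them are removable. Consequently, after removing all cycles we obtain a simple run, of total letter length at most \(|\Cal{T}|\), in which every mismatch is still visible, while the offsets range over \(O(|\Cal{T}|)\) values. This yields at most \(O(|\Cal{T}|)\cdot O(|\Cal{T}|)=O(|\Cal{T}|^2)\) mismatches, which gives boundedness and the quadratic bound simultaneously.

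\emph{Main obstacle.} The main obstacle is making the cycle-removal step rigorous. Deleting a conjugate cycle shifts later input and output positions by the same amount \(|u_{\text{cycle}}|\). However, letters that are compared across the cycle boundary — an input read before the cycle against an output written after it, or vice versa — get realigned. Conjugacy by \(s_q\) is exactly what should guarantee that the letters seen across the cycle are periodic with the right phase, so that the mismatch pattern is preserved.

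I would first try a cleaner formulation. Let \(\rho=\alpha\gamma\beta\), where \(\gamma\) is a cycle on \(q\) over \((x,y)\) with \(y\) conjugate to \(x\) by \(s_q\). I would prove that \(d(\inn(\alpha\gamma\beta),\out(\alpha\gamma\beta))\) and \(d(\inn(\alpha\gamma^2\beta),\out(\alpha\gamma^2\beta))\) differ exactly by the number of mismatches inside the fully-periodic middle zone, and that this number is \(0\) by conjugacy. This monotonicity shows the deviation is attained on runs with at most \(O(1)\) repetitions of each cycle, namely runs of length bounded by the pumping structure. A final bookkeeping step bounds the mismatches in such runs by \(O(|\Cal{T}|^2)\).
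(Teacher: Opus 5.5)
Your necessity direction is correct and is the paper's own argument: you pump the cycle as in Lemma~\ref{lemma:bounded_imply_conjugate_cycle}. The sufficiency direction carries the real content of the lemma and the quadratic bound, and there it has a genuine gap, because none of the mechanisms you sketch works as stated.
\begin{itemize}
\item \emph{Deleting a cycle.} Take a cycle lying strictly between the transition reading \(u_i\) and the one writing \(v_i\). Deleting it realigns input position \(i\) with a different output letter, and conjugacy of the deleted cycle says nothing about letters outside it. For example, read \(ab|\varepsilon\), then a loop \(c|c\), then \(\varepsilon|bd\). Both conditions hold, yet position~2 matches with the loop and mismatches without it.
\item \emph{Counting on a simple run.} You claim that removing all cycles leaves a simple run in which every mismatch is still visible. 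This is false. A cycle-free accepting run of \(\Cal{T}_n\) from Lemma~\ref{lem:asymptoticFamily} reads only \(n\) letters, while \(\dev(R_{\Cal{T}_n})\geq (n^2+n)/2\); those mismatches exist only because iterating the self-loops changes the alignment.
\item \emph{Your ``cleaner formulation''.} This is closer to something true: the distance along \(\alpha\gamma^k\beta\) is independent of \(k\) once \(k|x|\geq|s_q|\). It fails for \(k=1\) versus \(k=2\) when \(|x|<|s_q|\), since the boundary zones themselves change. More importantly, invariance under powers of a single cycle does not bound the length of runs, which can traverse many distinct cycles at the same state. The final bookkeeping to \(O(|\Cal{T}|^2)\) is also not supplied.
\end{itemize}

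The missing idea is that no cycle surgery is needed: it suffices to localise mismatches inside one cycle.
\begin{itemize}
\item Let \(\rho'\) be a cycle on \(p\) over \((u',v')\) inside an accepting run, so \(|u'|=|v'|\), and say \(s_p\geq 0\). In the global comparison, the \(i\)-th input letter of \(\rho'\) faces \(v'_{i+s_p}\) whenever \(i\leq|u'|-s_p\). A mismatch there gives \(j-i=s_p\) with \(u'_i\neq v'_j\), violating condition~2.
\item Hence, under both conditions, a cycle on \(p\) reads at most \(|s_p|\leq b=\min(\smax(\Cal{T})\cdot|Q|,|\Cal{T}|)\) mismatched positions.
\item Now apply pigeonhole to the source states of the transitions reading the mismatches. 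If a run has at least \((b+\lmax(\Cal{T})+2)\cdot|Q|\) mismatches, some state \(p\) occurs at least \(b+\lmax(\Cal{T})+2\) times. The factor from its first occurrence to a suitably late one is a cycle on \(p\) reading more than \(b\) mismatches, a contradiction. The \(\lmax(\Cal{T})\) slack absorbs a single transition that reads several mismatches.
\end{itemize}
This is Lemma~\ref{lemma:QuadBound_imply_not_properties}, and it gives sufficiency and the \(O(|\Cal{T}|^2)\) bound at once.
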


\noindent
To prove \cref{lemma:bounded_equal_properties},
we first show that the conditions are necessary:
if \(\Cal{T}\) is bounded then both conditions hold
(the first follows directly from the definition, and the second from \cref{lemma:bounded_imply_conjugate_cycle}).
Then, to prove that the conditions are sufficient and provide the stated bound,
we show that if the deviation of \(\Cal{T}\) exceeds a quadratic threshold,
then at least one of the conditions must fail (Lemma~\ref{lemma:QuadBound_imply_not_properties}).

\subparagraph*{Necessary conditions for boundedness} 

\noindent
First, notice that since words of different lengths have infinite distance,
a bounded NFT is necessarily length-preserving.
Therefore, the first condition of \cref{lemma:bounded_equal_properties} is necessary.
As a first step towards proving that the second condition is necessary,
we establish the following consequence of length-preservation.

\begin{lemma}\label{lemma:length_preserving_imply_u_equal_v}
	Let \(\Cal{T} = (Q,\Sigma, Q_{i}, Q_{f}, \Delta)\) be a length-preserving NFT. Then for all runs \(\rho \in \Delta^*\) from \(q \in Q\) to itself over some \((u,v)\), \(|u| = |v|\).
\end{lemma}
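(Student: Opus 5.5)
The plan is to reduce the claim to the state-shift invariant of Lemma~\ref{lemma:bounded_imply_one_shift_by_state}, using the standing assumption that \(\Cal{T}\) is trimmed. Fix a state \(q\) and a cycle \(\rho\) from \(q\) to itself over \((u,v)\). The empty run is trivial, since then \(u=v=\varepsilon\), so assume \(|\rho|\geq 1\).

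Because \(\Cal{T}\) is trim, \(q\) is reachable from an initial state. Hence there is a run \(\sigma\) from some initial state to \(q\). The concatenation \(\sigma\rho\) is again a run from that initial state to \(q\); this is valid because \(\rho\) starts at the state where \(\sigma\) ends.

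Lemma~\ref{lemma:bounded_imply_one_shift_by_state} applies since \(\Cal{T}\) is length-preserving. It gives
\(\shift(\sigma) = s_q = \shift(\sigma\rho)\).
By Remark~\ref{remark:shift_def}, shift is additive over concatenation, so \(\shift(\sigma\rho)=\shift(\sigma)+\shift(\rho)\). Therefore \(\shift(\rho)=0\), that is, \(|u|=|v|\).

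There is one degenerate point to handle. The remark is stated for non-empty runs, and \(\sigma\) may be empty when \(q\) is itself initial. In that case \(\sigma\rho=\rho\) is already an initial run to \(q\). The empty run is also an initial run to \(q\) with shift \(0\), so the lemma gives \(\shift(\rho)=s_q=0\) directly. Alternatively, one can argue directly: pick a final run \(\tau\) from \(q\) (it exists by trimness). Then \(\sigma\tau\) and \(\sigma\rho\tau\) are both accepting, so both have shift \(0\), and subtracting gives \(\shift(\rho)=0\).

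I expect no real obstacle here. The only care needed is the reliance on trimness and the empty-run edge case.
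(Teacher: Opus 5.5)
Your proof is correct and rests on the same idea as the paper's: use trimness to place the cycle inside accepting runs, then subtract shifts, which length-preservation forces to be zero. The paper compares \(\rho_{\Rm{init}}\rho\rho_{\Rm{final}}\) with \(\rho_{\Rm{init}}\rho^2\rho_{\Rm{final}}\), which is essentially your alternative argument with \(\sigma\tau\) versus \(\sigma\rho\tau\). Your main route packages the same comparison through Lemma~\ref{lemma:bounded_imply_one_shift_by_state}, which is legitimate because that lemma is proved earlier without using this one.
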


\begin{proof}
Let \(\rho\) be a run of \(\Cal{T}\) from \(p \in Q\) to itself over some \((u,v)\).
	Let \(\rho_{\Rm{init}}\) be an initial run of \(\Cal{T}\) to \(p\)
	and let \(\rho_{\Rm{final}}\) be a final run of \(\Cal{T}\) from \(p\).
	Since \(\rho\) is a loop, \(\rho_{\Rm{init}}\rho^2\rho_{\Rm{final}}\) is a valid run of \(\Cal{T}\),
	and as \(\Cal{T}\) is length-preserving,
	we have that \(s(\rho_{\Rm{init}}\rho\rho_{\Rm{final}}) = s(\rho_{\Rm{init}}\rho^2\rho_{\Rm{final}}) = 0\). Consequently, \(s(\rho) = s(\rho^2) = 0\) and \(|u| = |v|\).
\end{proof}

\noindent
We now prove that the second condition of Lemma~\ref{lemma:bounded_equal_properties} is necessary.
\begin{lemma}
\label{lemma:bounded_imply_conjugate_cycle}
	Let \(\Cal{T} = (Q,\Sigma, Q_{i}, Q_{f}, \Delta)\) be a bounded NFT.
	Then for every run \(\rho \in \Delta^*\) from \(p \in Q\) to itself over some \((u,v)\),
	\(u\) is conjugate to \(v\) by \(s_p\).
\end{lemma}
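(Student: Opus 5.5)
We argue by contraposition: assuming some loop \(\rho\) from \(p\) to \(p\) over \((u,v)\) such that \(u\) is not conjugate to \(v\) by \(s_p\), we build accepting runs with arbitrarily many mismatches. Since \(\Cal{T}\) is bounded, it is length-preserving. Hence Lemma~\ref{lemma:bounded_imply_one_shift_by_state} gives a well-defined shift \(s_p\), and Lemma~\ref{lemma:length_preserving_imply_u_equal_v} gives \(|u|=|v|\); write \(L=|u|\). If \(L=0\) the statement is trivial, so assume \(L\geq 1\). Because \(\Cal{T}\) is trim, we fix an initial run \(\rho_{\Rm{init}}\) to \(p\) over some \((x,y)\) and a final run \(\rho_{\Rm{final}}\) from \(p\) over some \((x',y')\). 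For every \(n\), the run \(\rho_{\Rm{init}}\rho^n\rho_{\Rm{final}}\) is accepting over \((xu^nx', yv^ny')\), and these two words have equal length. By definition of \(s_p\) we have \(|x|-|y|=s_p\).

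The core step is to compare positions in this pair. Input position \(|x|+jL+i\), which carries \(u_i\) with \(1\le i\le L\), is aligned with output position \(|x|+jL+i = |y| + s_p + jL + i\). That output position falls inside the block \(v^n\) as long as \(j\) is neither too small nor too large; only a bounded number of boundary blocks are excluded, and that number depends on \(|s_p|\), \(|x|\), \(|y|\) but not on \(n\). Inside \(v^n\), output position \(|y|+m\) carries \(v_{((m-1)\bmod L)+1}\). So \(u_i\) is compared with \(v_{i'}\), where \(i' \equiv i + s_p \pmod L\). Our assumption that \(u\) is not conjugate to \(v\) by \(s_p\) means there are \(i,j\) with \(j-i\equiv s_p \bmod L\) and \(u_i\neq v_j\). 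I need to check the sign convention of ``conjugate by \(n\)'' against the index relation \(j-i\equiv s_p\). The symmetry remark in the footnote shows that the property is invariant under the choice of representative of \(s_p\) mod \(L\), so the only genuine risk is the sign, and I will verify it on a small example.

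Once the alignment is settled, every full copy of \(u\) that lies entirely inside the aligned region contributes at least one mismatch. The number of such copies is at least \(n-c\) for a constant \(c\) independent of \(n\). This gives \(d(xu^nx',yv^ny')\geq n-c\), which is unbounded and contradicts boundedness. The main obstacle is the index bookkeeping: deriving \(|x|-|y|=s_p\) cleanly, handling negative \(s_p\) and the case \(|s_p|>L\), and bounding \(c\) by roughly \(\lceil(|s_p|+|x|+|y|)/L\rceil+2\). I will handle this by working directly with absolute positions modulo \(L\) rather than splitting into cases on the sign of \(s_p\).
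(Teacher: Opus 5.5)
Your proposal is correct and takes essentially the same route as the paper: contraposition, pumping \(\rho_{\Rm{init}}\rho^n\rho_{\Rm{final}}\), and exhibiting one mismatch per interior copy of the loop while discarding a constant number of boundary copies (the paper does this by pumping \(\rho^{k+m}\) with \(k|u|>|s_p|\)). The sign check you leave open is already settled by your own alignment computation: aligned letters satisfy \(i'-i\equiv s_p \pmod L\), which is exactly the condition \(j-i\equiv s_p \bmod |u|\) in the paper's definition of conjugacy by \(s_p\).
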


\begin{proof}
	We prove  \cref{lemma:bounded_imply_conjugate_cycle} by contraposition.
	Let \(\rho \in \Delta^*\) be a run from \(p \in Q\) to itself over some \((u,v)\)
	such that \(u\) is not conjugate to \(v\) by \(s_p\).
	In particular, this means that \(u \neq \varepsilon\) or \(v \neq \varepsilon\), and by \cref{lemma:length_preserving_imply_u_equal_v}, \(|u| = |v| \neq 0\).
	Let \(k>0\) be an integer such that \(k|u|> |s_p|\).
	Let \(\rho_{\Rm{init}}\) be an initial run of \(\Cal{T}\) to \(p\) over some \((u',v')\) and let \(\rho_{\Rm{final}}\) be a final run of \(\Cal{T}\) from \(p\) over some \((u'',v'')\).
	By supposition, there exists \(i \leq |u|\) and \(j \leq |v|\) such that \(j-i \equiv s_p \bmod |u|\) and \(u_i \neq v_j\).
	For all \(m \in \mathbb{N}\) we can construct an accepting run
	\(\mu(m) = \rho_{\Rm{init}}\rho^{k+m}\rho_{\Rm{final}}\) over \(w(m)=u'u^{k+m}u''\) and \(z(m)=v'v^{k+m}v''\).
	By supposition, for each run \(\mu(m)\) there exists a set of positions 
	\[
		P = 
		\begin{cases}
			\{|v'| + (k + z -1) \cdot |v| + j \mid z \in [0,m]\} , &\text{ if }s_p \geq 0\\
			\{|u'| + (k + z -1) \cdot |u| + i \mid z \in [0,m]\} , &\text{ if }s_p < 0\\
			
		\end{cases}
	\]
	such that for all \(t \in P\) \(w(m)_{t} \neq z(m)_{t}\).
	Consequently, for all \(m \in \mathbb{N}\)
	we have a run \(\mu(m)\) of \(\Cal{T}\) over \((w(m),z(m))\) where \(d(w(m),z(m)) \geq m+1\).
	Therefore, \(\Cal{T}\) is not bounded.
	This establishes the contraposition and completes the proof.
\end{proof}


\subparagraph*{Sufficient conditions for boundedness.} 

\noindent
We now prove that transducers exceeding a quadratic deviation
already violate at least one of the two conditions,
which implies their sufficiency for boundedness.

\begin{lemma}\label{lemma:QuadBound_imply_not_properties}

Let \(\Cal{T} = (Q,\Sigma, Q_{i}, Q_{f}, \Delta)\) be an NFT and
\(b=\min(\smax(\Cal{T}) \cdot |Q|,|\Cal{T}|)\).

If \(\Cal{T}\) is not bounded by \(B=(b+\lmax(\Cal{T})+2)\cdot |Q|\). Then either:
\begin{enumerate}
\item \(\Cal{T}\) is not a length-preserving NFT, or\label{item:QuadBound_imply_not_properties_1}
\item \(\Cal{T}\) is length-preserving, but there exists \(\rho \in \Delta^*\),
a run from \(p \in Q\) to itself over some \((u,v) \in \Sigma^* \times \Sigma^*\)
such that \(u\) is not conjugate to \(v\) by \(s_p\).\label{item:QuadBound_imply_not_properties_2}
\end{enumerate}
\end{lemma}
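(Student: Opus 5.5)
We argue by contraposition: assume \(\Cal{T}\) is length-preserving and has an accepting run \(\rho\) over some \((w,z)\) with \(d(w,z) > B\). We will then exhibit a loop \(\rho'\) from some state \(p\) to itself, over some \((u,v)\), such that \(u\) is not conjugate to \(v\) by \(s_p\). Since \(\Cal{T}\) is length-preserving, \cref{lemma:bounded_imply_one_shift_by_state} applies: each state \(p\) has a well-defined shift \(s_p\) with \(|s_p| \leq b\). Write \(\rho = \delta_1\cdots\delta_n\) and let \(p_0,\dots,p_n\) be the states it visits. The shift of the prefix \(\delta_1\cdots\delta_t\) is exactly \(s_{p_t}\).

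The first step is to localise mismatches. Fix a mismatch position \(i\), so \(w_i \neq z_i\). Position \(i\) of the input is read by transition \(\inn(i)\), and position \(i\) of the output is written by transition \(\out(i)\). Because the prefix shifts are bounded by \(b\) in absolute value, and a single transition reads or writes at most \(\lmax(\Cal{T})\) letters, the two indices \(\inn(i)\) and \(\out(i)\) are close in the input/output coordinates. More precisely, the segment of \(\rho\) between them reads and writes at most about \(b+\lmax(\Cal{T})\) letters.

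The second step is pigeonhole. The number of mismatches exceeds \((b+\lmax(\Cal{T})+2)\cdot|Q|\), so we can select \(|Q|+1\) mismatch positions \(i_0<\dots<i_{|Q|}\) that are pairwise more than \(b+\lmax(\Cal{T})+1\) apart. We then choose a cut index \(t_k\) between the transitions \(\inn(i_k), \out(i_k)\) of one mismatch and those of the next mismatch. Among the states \(p_{t_0},\dots,p_{t_{|Q|}}\), two coincide, say \(p_{t_a}=p_{t_c}=p\) with \(a<c\). The loop \(\rho' = \delta_{t_a+1}\cdots\delta_{t_c}\), over \((u,v)\), then has \(|u|=|v|\) by \cref{lemma:length_preserving_imply_u_equal_v}, and it fully contains both the reading and the writing of the mismatch \(i_{a+1}\) (or of some mismatch), with the two letters lying inside \(u\) and \(v\).

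The third step checks that this mismatch violates conjugacy. Set \(\alpha = |w_1\cdots|\) read before \(t_a\), and \(\beta\) the output length written before \(t_a\), so that \(\alpha-\beta = s_p\). The mismatch then sits at position \(i-\alpha\) in \(u\) and at position \(i-\beta\) in \(v\). The difference of these positions is \(s_p\), so the definition of conjugacy by \(s_p\) would force \(u_{i-\alpha} = v_{i-\beta}\), contradicting \(w_i \neq z_i\).

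The main obstacle is step two. The cut points must be chosen so that each enclosed mismatch has both its input letter and its output letter strictly inside the loop, even though \(\inn(i)\) and \(\out(i)\) may lie on different sides of a naive cut. This is exactly where the spacing \(b+\lmax(\Cal{T})+2\) is consumed. I would define \(t_k\) as the index just after \(\max(\inn(i_k),\out(i_k))\) and show, using \(|s_{p_t}|\leq b\), that the next selected mismatch has \(\min(\inn,\out) > t_k\).
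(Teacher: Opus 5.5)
Your proof is correct, but it organises the pigeonhole differently from the paper.

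The paper pigeonholes on the source states \(\pi_1(\inn_\rho(m_i))\) of the transitions that read mismatched input letters. It extracts a single loop at a state \(p\) that reads at least \(b+1\) mismatched input letters. It then uses a counting argument: the loop's input window and output window are offset by \(s_p\), so at most \(|s_p|\leq b\) of the input positions read in the loop have their partner output position outside the loop. Hence some mismatch is both read and written inside the loop.

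You instead pre-select \(|Q|+1\) well-separated mismatches and cut the run right after each one is completely processed. Full containment of a mismatch between any two cuts then holds by construction, and the pigeonhole is on the cut states. Your key step does go through. If \(t_k=\inn(i_k)\geq\out(i_k)\), then \(\delta_1\cdots\delta_{t_k}\) reads at most \(i_k-1+\lmax(\Cal{T})\) input letters. It therefore writes at most \(i_k-1+\lmax(\Cal{T})+b\) output letters, and the other case is symmetric. So a gap of \(b+\lmax(\Cal{T})\) between selected mismatches already suffices. With your spacing \(b+\lmax(\Cal{T})+2\), the greedy selection yields \(|Q|+1\) mismatches as soon as \(d(w,z)>B\).

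The trade-off is as follows. Your argument needs the bound \(|s_{p_t}|\leq b\) at every cut point, whereas the paper only needs it at the base state of the loop. In exchange, your index bookkeeping is fully explicit. The paper's version mixes mismatch indices with transition indices, and leaves implicit why the \(\lmax(\Cal{T})\) slack is needed (several mismatches may be read by one transition).
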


\begin{proof}
	Let \(\Cal{T}\) be an NFT not bounded by \(B\).
	If it is also not length-preserving, then the lemma holds trivially.
	Suppose then that it is length-preserving.
	Let \(\rho = \delta_1\delta_2\cdots\delta_n \in \Delta^*\) be an accepting run of \(\Cal{T}\) over some \((u,v)\) such that \(d(u,v) \geq B\).
	We define \(m_1,\ldots,m_{d(u,v)}\) to be the strictly increasing sequence of positions such that \(u_{m_i} \neq v_{m_i}\) and \(q_i=\pi_1(\inn_\rho(m_i))\) the sequence of starting states of the transitions reading the mismatches.
	Since \(d(u,v) \geq B\), there exists a state \(p \in Q\) that appears at least \(b+\lmax(\Cal{T})+2\) times in the sequence \((q_i)_i\). We denote by \(o\) the smallest index of an occurrence of \(p\) and by \(r\) the \((\lmax(\Cal{T})+1)\)-th largest index of an occurrence of \(p\).
	Consequently, the factor \(\rho' = \delta_o\cdots\delta_{r}\) of \(\rho\) is a run from \(p\) to itself over some \((u',v')\) such that at least \(b +1\) mismatches are read in \(u'\).
	Since \(s_p\) is bounded by \(b\) by~\cref{lemma:bounded_imply_one_shift_by_state} and the subrun \(\rho'\) generates more than \(b\) mismatches, at least one of them is produced by \(\rho'\).
	As the initial shift is \(s_p\), there exist two integers \(i \leq |u'|\) and \(j \leq |v'|\) such that \(j-i \equiv s_p \bmod |u'|\) and \(u'_i \neq v'_j\).
	Consequently, \(u'\) is not conjugate to \(v'\) by \(s_p\).
\end{proof}

\subparagraph*{Complexity} 

\noindent
We now leverage the two conditions of~\cref{lemma:bounded_equal_properties} to obtain
the NL decision procedure for boundedness  that proves \cref{prop:BoundedNFTHigh}.
The NL algorithm is a combination of the next lemmas: 
it first checks whether the NFT is length-preserving by checking if:
\begin{enumerate}
\item All acyclic runs are length-preserving (\cref{lemma:complexity_u_equal_v_for_small_run})
\item All cyclic runs are length-preserving (\cref{lemma:complexity_u_equal_v_for_cycle})
\end{enumerate}
If both procedures accept, it then applies the algorithm from~\cref{lemma:complexity_conjugate_cycle} to test the second condition of~\cref{lemma:bounded_equal_properties}.

Before defining the algorithms, we establish two technical lemmas
that bound the maximal deviation inside runs and the length of runs
in terms of states and endpoint shifts.

\begin{lemma}\label{lemma:length-preserving_imply_smax_limited}
    Let \(\Cal{T} = (Q,\Sigma, Q_{i}, Q_{f}, \Delta)\) be a length-preserving transducer.
    Then for all runs \(\rho \in \Delta^*\), \(\smax(\rho) \leq \smax(\Cal{T}) \cdot |Q|\).
\end{lemma}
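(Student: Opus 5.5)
The plan is to show that every run of \(\Cal{T}\) has shift of absolute value at most \(\smax(\Cal{T})\cdot(|Q|-1)\), without ever comparing with initial states. The tool is that cycles have zero shift and can be removed. I read \(\smax(\rho)\) as the maximal absolute shift of a factor of \(\rho\) (prefixes included). Every factor of a run is itself a run of \(\Cal{T}\), so it suffices to prove that for every run \(\rho\) from a state \(p\) to a state \(q\), \(|\shift(\rho)| \leq \smax(\Cal{T})\cdot |Q|\). Applying this to every factor then gives the statement.

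\textbf{Step 1: cycles are neutral.} By Lemma~\ref{lemma:length_preserving_imply_u_equal_v}, every run from a state to itself over some \((u,v)\) satisfies \(|u|=|v|\), so its shift is \(0\). This uses that \(\Cal{T}\) is trimmed, which the paper assumes throughout.

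\textbf{Step 2: pumping down.} Let \(\rho=\delta_1\cdots\delta_n\) go from \(p\) to \(q\), visiting the states \(p=r_0,r_1,\ldots,r_n=q\). If \(n \geq |Q|\), two indices \(a<b\) satisfy \(r_a=r_b\). Then \(\delta_{a+1}\cdots\delta_b\) is a cycle on \(r_a\), and removing it yields a shorter run \(\rho_1\) from \(p\) to \(q\). By Remark~\ref{remark:shift_def} and Step~1, \(\shift(\rho)=\shift(\rho_1)+0\). Iterating, we reach a run \(\hat\rho\) from \(p\) to \(q\) with no repeated state, so \(|\hat\rho| \leq |Q|-1\), and \(\shift(\hat\rho)=\shift(\rho)\). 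The case \(p=q\) is included: it ends at the empty run, of shift \(0\).

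\textbf{Step 3: bounding the reduced run.} Again by Remark~\ref{remark:shift_def}, \(|\shift(\hat\rho)| \leq \sum_i |\shift(\hat\delta_i)| \leq (|Q|-1)\cdot\smax(\Cal{T})\). Hence \(|\shift(\rho)|\leq \smax(\Cal{T})\cdot|Q|\) for every run, and so for every factor of every run, which is the claim.

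The only delicate point is to avoid the detour through Lemma~\ref{lemma:bounded_imply_one_shift_by_state}. Writing \(\shift(\rho)=s_q-s_p\) and bounding each term separately would only give \(2\,\smax(\Cal{T})\cdot|Q|\). The cycle-removal argument on \(\rho\) itself avoids this loss, since it bounds the shift by a single simple path.
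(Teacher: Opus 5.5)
Your proof is correct and follows the paper's route: cycles have zero shift by Lemma~\ref{lemma:length_preserving_imply_u_equal_v}, so any run can be pumped down to a cycle-free run of length below \(|Q|\), whose shift is at most its length times \(\smax(\Cal{T})\). Applying this bound to every factor separately is a slightly cleaner way to handle the notation \(\smax(\rho)\), which the paper never defines for runs, and it also gives the marginally sharper constant \(|Q|-1\).
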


\begin{proof}
	Let \(\rho \in \Delta^*\) be a run of a length-preserving NFT \(\Cal{T}\). By~\cref{lemma:length_preserving_imply_u_equal_v}, cycles do not act on the shift of a run. Then there exists \(\rho' \in \Delta^*\) a run of \(\Cal{T}\) without cycle such that \(\smax(\rho) = \smax(\rho')\).
	Being without cycle, \(|\rho'| \leq |Q|\) and by definition \(\smax(\rho') \leq |\rho'| \cdot \smax(\Cal{T}) \leq \smax(\Cal{T}) \cdot |Q|\).
\end{proof}

\begin{lemma}\label{lemma:length-preserving-imply-correlation-production-run}
    Let \(\Cal{T} = (Q,\Sigma, Q_{i}, Q_{f}, \Delta)\) be a length-preserving transducer, and let \(\rho \in \Delta^*\) be a run of \(\Cal{T}\) from a state \(q \in Q\) to a state \(p \in Q\) over some \((u,v)\) without \((\varepsilon,\varepsilon)\)-cycles. Then
\[|\rho| \leq  |Q| \cdot |uv| = |Q| \cdot (s_q + 2 \cdot |u| - s_p) = |Q| \cdot (s_p + 2 \cdot |v| - s_q).\]

\end{lemma}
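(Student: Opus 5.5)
The plan is to split the statement into the two equalities, which are bookkeeping, and the inequality \(|\rho| \leq |Q|\cdot|uv|\), which is the real content.

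\textbf{The equalities.} Let \(\rho\) go from \(q\) to \(p\) over \((u,v)\). By Lemma~\ref{lemma:bounded_imply_one_shift_by_state} (\(\Cal{T}\) is trim, so an initial run to \(q\) exists), prefixing \(\rho\) with an initial run to \(q\) gives a run to \(p\). Remark~\ref{remark:shift_def} then yields \(s_q + |u| - |v| = s_p\). Hence \(|v| = |u| + s_q - s_p\), so
\[
|uv| = |u|+|v| = s_q + 2|u| - s_p.
\]
Symmetrically, \(|u| = |v| + s_p - s_q\) gives \(|uv| = s_p + 2|v| - s_q\).

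\textbf{The inequality.} I argue by pigeonhole on the transitions of \(\rho\) that contribute nothing, namely the \((\varepsilon,\varepsilon)\)-transitions. Write \(\rho = \delta_1\cdots\delta_n\) and call \(\delta_i\) \emph{productive} if it reads or writes at least one letter.

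\begin{enumerate}
\item There are at most \(|uv|\) productive transitions, since each contributes at least one letter to \(uv\).
\item The productive transitions cut \(\rho\) into at most \(|uv|+1\) maximal blocks of consecutive \((\varepsilon,\varepsilon)\)-transitions.
\item Each such block is a run over \((\varepsilon,\varepsilon)\). If it had length at least \(|Q|\), it would visit some state twice. The segment between the two visits would be an \((\varepsilon,\varepsilon)\)-cycle, which the hypothesis forbids. So each block has length at most \(|Q|-1\).
\end{enumerate}

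This gives
\[
|\rho| \leq |uv| + (|uv|+1)(|Q|-1) = |Q|\cdot|uv| + |Q| - 1.
\]

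\textbf{Removing the additive slack.} This count overshoots the target by \(|Q|-1\), and closing that gap is the step I expect to be the main obstacle. The plan is to group each productive transition with the \((\varepsilon,\varepsilon)\)-block that follows it. That pair has at most \(|Q|\) transitions and uses at least one letter.

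The remaining difficulty is the first block, the one before any productive transition. It is what makes the bound fail in the degenerate case \(uv=\varepsilon\) with \(\rho\) a nonempty \(\varepsilon\)-path. I would handle it in one of two ways:
\begin{itemize}
\item merge the first block with the first group, noting that a group containing several letters affords extra room (if all transitions read or write at most one letter, tighten the count accordingly); or
\item interpret the statement as intended up to this \(O(|Q|)\) slack, or for the runs considered later that start after a productive step, and note that the bound \(|\rho| \leq |Q|(|uv|+1)\) suffices for the NL algorithms that follow.
\end{itemize}
In either case the key point is the no-\((\varepsilon,\varepsilon)\)-cycle pigeonhole argument.
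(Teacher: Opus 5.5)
Your route is the paper's. The equalities come from \(s_p = s_q + |u| - |v|\), and the inequality from the fact that \(|Q|\) consecutive \((\varepsilon,\varepsilon)\)-transitions would close an \((\varepsilon,\varepsilon)\)-cycle. Your bookkeeping is more careful than the paper's, whose proof simply asserts ``hence \(|\rho| \leq |Q|\cdot|uv|\)''. That step only yields \(|\rho| \leq |Q|\cdot|uv| + |Q| - 1\), exactly as you computed. You are also right that the statement is false as written when \(uv = \varepsilon\). Take two states and one transition \((q,\varepsilon,\varepsilon,p)\), with \(q\) initial and \(p\) final. This is a trim length-preserving transducer, and its run of length \(1\) has no \((\varepsilon,\varepsilon)\)-cycle, yet \(1 > 0 = |Q|\cdot|uv|\). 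Your second option, the bound \(|\rho| \leq |Q|(|uv|+1)-1\), is fully justified. It is all the later claims need, since their witness lengths only have to be polynomial.

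Your first option, however, is not an argument, and that is where the gap lies. When every productive transition carries exactly one letter there is no ``extra room''. No count based only on \((\varepsilon,\varepsilon)\)-cycle-freeness can close the gap. For example, take an \((\varepsilon,\varepsilon)\)-chain \(q_1 \to \cdots \to q_n\) and a transition \((q_n,a,\varepsilon,q_1)\). The run chain, \(a\)-transition, chain has length \(2n-1 > n = |Q|\cdot|uv|\) and contains no \((\varepsilon,\varepsilon)\)-cycle.

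What rules this out is length preservation (that transducer has a cycle of nonzero shift), and your inequality argument never uses it. For \(uv \neq \varepsilon\), write \(\rho = \beta_0 t_1 \beta_1 \cdots t_P \beta_P\), where \(P \geq 1\), each \(t_i\) is productive, and each \(\beta_i\) is a block of \((\varepsilon,\varepsilon)\)-transitions.
\begin{itemize}
\item All states visited by \(\beta_i\) share one shift \(\sigma_i\). Let \(C_i\) be the set of all states of shift \(\sigma_i\).
\item Cycle-freeness gives \(|\beta_i| \leq |C_i| - 1\), so \(|\rho| \leq \sum_{i=0}^{P}|C_i| - 1\).
\item Let \(\ell_i\) be the number of letters of \(t_i\). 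If \(\sigma_{i-1} \neq \sigma_i\), then \(C_{i-1}\) and \(C_i\) are disjoint, so \(|C_{i-1}| + |C_i| \leq |Q|\). If \(\sigma_{i-1} = \sigma_i\), then \(t_i\) has shift \(0\), so \(\ell_i \geq 2\). Either way \(|C_{i-1}| + |C_i| \leq |Q|\cdot \ell_i\).
\item Sum over \(1 \leq i \leq P\). Each \(C_i\) appears at least once on the left, and \(\sum_i \ell_i = |uv|\). Hence \(\sum_{i=0}^{P}|C_i| \leq |Q|\cdot|uv|\), and so \(|\rho| \leq |Q|\cdot|uv| - 1\).
\end{itemize}
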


\begin{proof}
	Let \(\rho \in \Delta^*\) be a run of \(\Cal{T}\) from a state \(q \in Q\) to a state \(p \in Q\) over some \((u,v)\).
	Since there is no \((\varepsilon,\varepsilon)\)-cycles, each sequence of \(|Q|\) consecutive transitions either reads or writes at least one letter, hence \(|\rho| \leq |Q| \cdot |uv|\).
	Then by definition of the shift and the shift of a state, we have that \(|u|-|v|=s_p-s_q\).
	We obtain directly that \( |Q| \cdot |uv| = |Q| \cdot (s_q + 2 \cdot |u| - s_p) = |Q| \cdot (s_p + 2 \cdot |v| - s_q)\).
\end{proof}

\noindent
Before presenting the three NL procedures,
we briefly outline the overall strategy.
In each case, we show a small witness property:
whenever there exists a run violating the considered condition,
there also exists such a run of polynomially bounded size.
This allows us to nondeterministically guess a witness in logarithmic space,
and verify it in NL.

\begin{lemma}\label{lemma:complexity_u_equal_v_for_small_run}
	Given an NFT \(\Cal{T} = (Q,\Sigma, Q_{i}, Q_{f}, \Delta)\), it can be decided in NL
	whether every accepting runs \(\rho \in \Delta^*\) over some \((u,v)\)
	satisfies \(|\rho| \leq |Q|\) and \(|u| = |v|\).
\end{lemma}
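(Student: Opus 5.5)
We decide the complement in NL and conclude with NL = co-NL \cite{Immerman88,Szelepcsenyi88}. The complement asks for an accepting run \(\rho\) over some \((u,v)\) with \(|\rho| \leq |Q|\) and \(|u| \neq |v|\). The statement only constrains runs of length at most \(|Q|\), so a witness of the complement is a run of at most \(|Q|\) transitions. Our nondeterministic logspace machine guesses it transition by transition.

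The machine uses the following counters, each of logarithmic size:
\begin{itemize}
\item the current state;
\item a step counter bounded by \(|Q|\);
\item an integer accumulator \(\sigma\) holding the current shift \(\shift(\delta_1\cdots\delta_t)\).
\end{itemize}
Each transition \(\delta\) occurs at most once in the input encoding, so \(|\shift(\delta)| \leq |\Cal{T}|\). Hence \(|\sigma| \leq |Q| \cdot |\Cal{T}| \leq |\Cal{T}|^2\), which fits in \(O(\log |\Cal{T}|)\) bits.

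The machine works as follows.
\begin{enumerate}
\item Guess an initial state \(q \in Q_i\), checking membership by scanning the input tape.
\item Repeatedly, while the step counter is below \(|Q|\), guess (by index) a transition \(\delta = (q,x,y,q')\) of \(\Delta\) starting at the current state.
\item Compute \(\shift(\delta) = |x|-|y|\) by scanning the encoding of \(\delta\) with two logarithmic counters, add it to \(\sigma\), move to \(q'\), and increment the step counter.
\item At any point, the machine may stop. It accepts if the current state is in \(Q_f\) and \(\sigma \neq 0\).
\end{enumerate}
By Remark~\ref{remark:shift_def}, \(\sigma = |u| - |v|\) for the guessed run, so the machine accepts exactly when a short accepting run with \(|u| \neq |v|\) exists. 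The empty run is covered by stopping immediately; it has shift \(0\) and so never witnesses a violation, which is consistent.

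The main point of care is the bound on \(\sigma\). A single transition may carry a long word, so we should not count letters naively per step. Using \(|\shift(\delta)|\leq|\Cal{T}|\) (or \(\smax(\Cal{T})\cdot|Q|\)) with at most \(|Q|\) steps keeps everything logarithmic. Reading \(|x|\) and \(|y|\) from the tape is a routine logspace scan.

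Finally, the statement as written also constrains runs longer than \(|Q|\) (all accepting runs must have \(|\rho|\leq|Q|\)). We read it, as its use in the surrounding text indicates, as "every accepting run with \(|\rho| \leq |Q|\) satisfies \(|u|=|v|\)". Longer runs are handled by Lemma~\ref{lemma:complexity_u_equal_v_for_cycle} via cycles.
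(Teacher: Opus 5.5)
Your proposal is correct and takes the same approach as the paper: decide the complement by guessing an accepting run of at most \(|Q|\) transitions, keep a step counter and the accumulated shift in binary, accept at a final state with nonzero shift, and conclude with NL = co-NL. Your bound \(|\sigma| \leq |Q|\cdot|\Cal{T}|\) works just as well as the paper's \(\smax(\Cal{T})\cdot|Q|\), and your reading of the statement is the one the paper's algorithm actually decides.
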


\begin{proof}
	We construct a co-NL algorithm that accepts runs \(\rho \in \Delta^*\) over \((u,v) \in (\Sigma^* \times \Sigma^*)\) such that \(|\rho| \leq |Q|\) and \(|u| \neq |v|\).
	The algorithm works as follows.
	It guesses a run of \(\Cal{T}\) transition by transition and each time, updates two integers stored in binary:
	\begin{itemize}
		\item \(\ell\) stores the number of transitions that have already been through.
		\item \(s\) stores the shift between the two words already read and written.
	\end{itemize}
	If a final state is reached and \(s\) is not equal to \(0\) then the algorithm stops and accepts.
	If \(\ell\) become greater than \(|Q|\) then the algorithm stops and rejects.
	If neither of the above conditions are met, then the algorithm guesses the next transition, if it is not possible, then it stops and rejects.
	Denote that because we have been through at most \(|Q|\) transitions, \(s\) stays in the range \([ -\smax(\Cal{T}) \cdot |Q|,\smax(\Cal{T}) \cdot |Q|]\).
	Thus, this algorithm is NL.
\end{proof}

\begin{lemma}\label{lemma:complexity_u_equal_v_for_cycle}
    Given an NFT \(\Cal{T} = (Q,\Sigma, Q_{i}, Q_{f}, \Delta)\),
    it can be decided in NL whether every run \(\rho \in \Delta^*\)
    from some \(p \in Q\) to itself over some \((u,v)\) satisfies \(|u| \neq |v|\).
\end{lemma}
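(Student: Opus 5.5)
We decide the complement in NL and then invoke NL $=$ co-NL~\cite{Immerman88,Szelepcsenyi88}. The complement asks whether there is a run $\rho$ from some $p\in Q$ to itself, over some $(u,v)$, with $|u|=|v|$, that is, with $\shift(\rho)=0$.

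First, the degenerate case. The empty run from $p$ to $p$ is over $(\varepsilon,\varepsilon)$. If empty runs are counted, the property fails for every $\Cal{T}$ with $Q\neq\emptyset$, so the answer is constant and trivially in NL. We therefore treat the meaningful reading and quantify over non-empty runs. We view $\Cal{T}$ as a graph whose edge $\delta$ carries the weight $\shift(\delta)$, with $|\shift(\delta)|\leq \smax(\Cal{T})\leq|\Cal{T}|$. The complement then becomes: \emph{does there exist a non-empty closed walk of total weight $0$?}

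\textbf{Small-witness lemma.} Suppose a zero-weight closed walk $\rho$ exists. Decompose it into simple cycles, all inside one strongly connected component. Either one of these simple cycles has weight $0$, which is a witness of length at most $|Q|$. Or the decomposition contains a simple cycle $C_+$ of weight $a>0$ and a simple cycle $C_-$ of weight $-b<0$, since the weights sum to zero. In the second case, $C_+$ and $C_-$ lie in the same component. Pick a closed walk $W$ of length at most $2|Q|$ through a state of $C_+$ and a state of $C_-$, and let $w$ be its weight, so $|w|\leq 2|Q|\smax(\Cal{T})$. Insert $b\cdot c_+$ copies of $C_+$ and $a\cdot c_-$ copies of $C_-$ into $W$, choosing nonnegative integers $c_\pm$ so that the total weight $w+ab(c_+-c_-)$ is zero. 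This needs $ab\mid w$, which may fail. The fix is to replace $W$ by $ab$ copies of itself, which changes its weight to $ab\,w$, and then take $c_-=\max(w,0)$ and $c_+=\max(-w,0)$. All multiplicities are polynomial in $|Q|$ and $\smax(\Cal{T})$, hence in $|\Cal{T}|$. Consequently, if a zero-weight cycle exists, one exists of length at most some explicit polynomial $N(|\Cal{T}|)$, and its running shift stays within $[-N\smax(\Cal{T}),N\smax(\Cal{T})]$.

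\textbf{Algorithm.} The NL algorithm guesses a state $p$. It then guesses a walk from $p$ transition by transition and maintains in binary a step counter $\ell\leq N$ and the current shift $s$. Both need $O(\log|\Cal{T}|)$ bits. It accepts when it returns to $p$ with $\ell\geq1$ and $s=0$, and rejects when $\ell$ exceeds $N$. Soundness is immediate. Completeness is exactly the small-witness lemma.

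The main obstacle is that lemma: bounding the length of a zero-shift cycle when no simple cycle has shift $0$. The divisibility adjustment above, which repeats $W$ $ab$ times, is the step to verify carefully. Once the lemma holds, the NL bound and the return to the original (negated) property via Immerman–Szelepcsényi are routine.
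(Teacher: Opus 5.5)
Your argument is correct for the statement read literally (over non-empty runs), but the paper's proof establishes a different property: the \(\neq\) in the statement is a typo for \(=\). Two things in the paper show this. The paper's algorithm accepts when it returns to \(p\) with \(s\neq 0\), and the lemma is invoked to check that all cyclic runs are length-preserving. Your observation that the literal version is trivially false because of the empty run is another symptom of the typo.

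For the intended property, the complement is the existence of a cycle with non-zero shift, and the small-witness argument is a one-liner. Shifts add up over a decomposition of a closed walk into simple cycles, so a non-zero total forces some simple cycle, of length at most \(|Q|\), to have non-zero shift. The NL algorithm then only needs a counter up to \(|Q|\) and a shift in \([-\smax(\Cal{T})\cdot|Q|,\smax(\Cal{T})\cdot|Q|]\).

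You instead solved zero-shift cycle detection, where additivity alone fails, since a zero total need not be witnessed by a simple cycle. Your \(C_+\)/\(C_-\) pumping through \(W^{ab}\) is a correct remedy, giving witnesses of length \(O(|Q|^3\smax(\Cal{T})^2)\). So your route answers a genuinely harder existential question, whereas the paper's handles exactly what is needed downstream for length-preservation. Your result does not give that property directly, although your algorithm specializes to it immediately by accepting on \(s\neq 0\) with the length bound \(|Q|\).

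One small repair is needed. If \(C_+\) and \(C_-\) share a state, \(W\) may be the empty walk. Then \(w=0\), \(c_\pm=0\), and your construction returns the empty run, which is not a valid witness in the non-empty reading. Inserting \(b(c_++1)\) copies of \(C_+\) and \(a(c_-+1)\) copies of \(C_-\) instead keeps the total shift at \(ab(w+c_+-c_-)=0\) and makes the witness non-empty.
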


\begin{proof}
	First, note that if there exists a cyclic run of length greater than \(|Q|\) over some \((u,v)\)
	such that \(|u| \neq |v|\),
	decomposing it into simple cycles will yield at least one cyclic subrun
	of length smaller than or equal to \(|Q|\)
	over some \((u',v')\)
	such that \(|u'| \neq |v'|\).
	We construct a co-NL algorithm that accepts runs \(\rho \in \Delta^*\) with \(|\rho| \leq |Q|\) from \(p \in Q\) to itself, over some \((u,v)\) such that \(|u| \neq |v|\).
	The algorithm works as follows.
	We guess a state \(p\) and a run \(\rho\) of \(\Cal{T}\) starting in \(p\) transition by transition. At each step it updates three integers stored in binary:
	\begin{itemize}
		\item \(p\) stores the first state of \(\rho\).
		\item \(\ell\) stores the number of transitions that have already been processed.
		\item \(s\) stores the shift between the processed input and output.
	\end{itemize}
	If the state \(p\) is reached and \(s\) is not equal to \(0\) then the algorithm stops and accepts.
	If \(\ell\) become greater than \(|Q|\) then the algorithm stops and rejects.
	If neither of the above conditions are met, then the algorithm guesses the next transition if possible, and rejects otherwise.
	Note that since the algorithm processes at most \(|Q|\) transitions, \(s\) stays in the range \([- \smax(\Cal{T}) \cdot |Q|, + \smax(\Cal{T}) \cdot |Q|]\).
	Thus, this algorithm is NL.
\end{proof}

\begin{lemma}\label{lemma:complexity_conjugate_cycle}
    Given a length-preserving NFT \(\Cal{T} = (Q,\Sigma, Q_{i}, Q_{f}, \Delta)\),
    it can be decided in NL whether for all cyclic runs \(\rho \in \Delta^*\) over some \((u,v)\),
    \(u\) is conjugate to \(v\) by \(s_p\).
\end{lemma}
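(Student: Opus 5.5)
We decide the complement in NL: we design a nondeterministic logspace procedure that accepts exactly when some cycle \(\rho\) from a state \(p\) to itself over \((u,v)\) has positions \(i\leq|u|\), \(j\leq|v|\) with \(j-i\equiv s_p \bmod |u|\) and \(u_i\neq v_j\). By the Immerman–Szelepcsényi theorem this suffices. Since \(\Cal{T}\) is length-preserving, Lemma~\ref{lemma:bounded_imply_one_shift_by_state} applies: every state \(q\) has a well-defined shift \(s_q\) with \(|s_q|\leq \smax(\Cal{T})\cdot|Q|\), and \(s_q\) can be computed in logspace by guessing any initial run to \(q\) of length at most \(|Q|\). By Lemma~\ref{lemma:length_preserving_imply_u_equal_v}, every cycle satisfies \(|u|=|v|\).

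The key step is to turn conjugacy by \(s_p\) into a local condition. Consider the infinite periodic run \(\rho^\omega\) from \(p\) over \((u^\omega,v^\omega)\), placed after an initial run. Then a violation of conjugacy is the same as an input letter at global position \(x\) and an output letter at global position \(x\) that differ. Here positions are measured from the initial state, so input positions are shifted by the input already read before \(p\), and output positions likewise. Let \(\rho_0\) be any initial run to \(p\) over \((u_0,v_0)\). Then \(|u_0|-|v_0|=s_p\), and the letters \(u_i\) and \(v_j\) with \(j-i\equiv s_p\) mod \(|u|\) are exactly those landing on a common global position in \(u_0u^{\omega}\) versus \(v_0v^\omega\) (up to the appropriate sign convention). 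So a violation amounts to a pair of transitions in \(\rho_0\rho^{N}\), one reading and one writing at the same global position with different letters, both occurring inside the periodic part.

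To check this in logspace, we guess the state \(p\) and then walk along the run. Starting from \(p\), with counter \(c=s_p\) recording how far the input is ahead of the output, we guess transitions of the cycle. At a nondeterministically chosen moment we commit to reading a letter \(a\) at an input offset and remember the pending distance \(c\) to the matching output position, together with \(a\). We then keep guessing transitions, updating the counter, until the output reaches that position, and check that the produced letter differs from \(a\). The case where the output is ahead is handled symmetrically. Finally we must return to \(p\) in a way consistent with the guessed run being a power of a single cycle. Here it suffices to observe that any closed walk through \(p\) is itself a cycle \(\rho\) in the sense of the statement, so we only need to end at \(p\) after the check. We also need \(\rho\) to be long enough that \(|u|>|s_p|\), which we ensure by allowing the walk to traverse several loops; \(\rho^k\) is still a cycle, and conjugacy by \(s_p\) of \(\rho^k\) is equivalent to that of \(\rho\).

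The stored data are \(p\), the current state, the counter (bounded by \(\smax(\Cal{T})\cdot|Q|\) plus \(\lmax(\Cal{T})\), because length preservation bounds all intermediate shifts by Lemma~\ref{lemma:length-preserving_imply_smax_limited}), one letter, and offsets inside transitions. All of this fits in logarithmic space, and no bound on the run length is needed since the walk is guessed online.

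The main obstacle is the correctness of this reduction. We must justify that every violating pair \((i,j)\) is realised by such a walk in which the pending distance stays bounded. This should follow from the bounded intermediate shifts: the matched positions in \(\rho^k\) differ by exactly \(s_p\), and the letters at distance \(s_p\) lie within \(\rho^2\). Conversely, we must justify that any accepted walk yields a cycle violating conjugacy. Handling the wrap-around modulo \(|u|\) carefully, by working in \(\rho^2\) or \(\rho^k\), is the delicate point.
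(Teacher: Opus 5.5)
Your approach is correct, and it differs from the paper's in where the polynomial bound enters. The paper first proves a small-witness claim: a violating cycle can be shortened to length at most \(L = 2|Q| + 2\smax(\Cal{T})|Q|^2\). It does this by iterating \(\rho\) until the witnessing positions satisfy \(j-i=s_p\) exactly, cutting \(\rho\) around the transitions \(\inn(i)\) and \(\out(j)\), removing cycles before and after them, and bounding the middle part with Lemma~\ref{lemma:length-preserving-imply-correlation-production-run}. Its algorithm then guesses \(i\), \(j\) and a run of length at most \(L\), maintaining absolute counters \(\ell,\ell_r,\ell_w\) that fit in logarithmic space only because of \(L\).

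You never bound the run. You store only the offset between the committed letter and its partner on the other tape, and at any state \(q\) this offset is at most \(|s_q|\leq\smax(\Cal{T})\cdot|Q|\) (plus \(\lmax(\Cal{T})\) inside a transition). In the input-first case this holds because the input read is at least the committed position while the output written is still below it; the output-first case is symmetric. So your test is reachability in a polynomial-size graph of configurations (start state \(p\), current state, phase, offset, stored letter). Its correctness needs only Lemma~\ref{lemma:bounded_imply_one_shift_by_state} and the power trick.

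For the power trick, be precise. Write \(j-i=s_p+c|u|\). The copies \(i+a|u|\) and \(j+b|u|\) with \(b-a=-c\) are exactly aligned inside \(\rho^k\) once \(k>\max(a,b)\). In general you therefore need roughly \(|s_p|/|u|+2\) iterations, not just \(\rho^2\). Conversely, an exact alignment \(j-i=s_p\) inside any closed walk at \(p\) is a violation, because \(|u|=|v|\) by Lemma~\ref{lemma:length_preserving_imply_u_equal_v}.

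Your route gives a cleaner correctness argument with no cycle surgery. In the paper's decomposition, cycles between \(\inn(i)\) and \(\out(j)\) cannot be removed without breaking the alignment, which is why the middle part needs a separate bound. The paper's route gives an explicit, standalone bound on a shortest violating cycle, in the same style as the witness-size claim used later for the co-NP algorithm. To make your procedure formally halt, phrase it as graph reachability or add a step counter bounded by the number of configurations.
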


\begin{proof}
We first claim a small witness property and give an NL algorithm relying on it. 
The claim is proved afterward.
	\begin{claim}\label{claim-complexity_conjugate_cycle}
		If there exists a run \(\rho \in \Delta^*\) from \(p \in Q\) to itself over some \((u,v)\) such that \(u\) is not conjugate to \(v\) by \(s_p\) then there exists one of length inferior or equal to \(L = 2 \cdot |Q| + 2 \cdot \smax(\Cal{T}) \cdot |Q|^2\).
	\end{claim}
	From this claim, we construct a co-NL algorithm that accepts runs \(\rho \in \Delta^*\) of length \(|\rho| \leq L\) from \(p \in Q\) to itself, over some \((u,v)\) such that \(u\) is not conjugate to \(v\) by \(s_p\).
	The algorithm works as follows. We first guess a state \(p \in Q\), and we use an oriented graph accessibility algorithm to find an initial run leading to \(p\) and compute the value \(s_p\) along the way.
	Being the sum of the shift of each transition of this initial run, \(s_p\) is computable in NL.
	Next we guess two integers \(i\) and \(j\) and a run of \(\Cal{T}\) transition by transition. At each step we update five values:
	\begin{itemize}
		\item \(\ell\) stores the number of transitions already have been processed.
		\item \(\ell_r\) stores the length of the already processed input.
		\item \(\ell_w\) stores the length of the already processed output.
		\item \(u_i\) store the i-th letter of the word read.
		\item \(v_j\) store the j-th letter of the word written.
	\end{itemize}
	If \(\ell\) becomes greater than \(L\) before the state \(p\) is reached then we reject.
	If the algorithm reaches the state \(p\) with \(j-i \not\equiv s_p \bmod |\ell_r|\) then it rejects.
	Finally, if the state \(p\) is reached with \(j-i \equiv s_p \bmod |\ell_r|\) and \(u_i \neq v_j\) then it accepts.
	Note that since the algorithm processed at most \(L\) transitions, \(\ell_r\) and \(\ell_w\) both stays in the range \([- \lmax(\Cal{T}) \cdot L, \lmax(\Cal{T}) \cdot L]\).
	Thus, this algorithm is NL.

	\proofsubparagraph{Proof of \cref{claim-complexity_conjugate_cycle}.}
	Let \(\rho = \delta_1\delta_2\cdots\delta_n \in \Delta^*\) be a run from \(p \in Q\) to itself over some \((u,v)\) such that \(u\) is not conjugate to  \(v\) by \(s_p\), i.e. there exists \(i,j \in [1,|u|]\) such that \(j - i \equiv s_p \bmod |u|\), and \(u_i \neq v_j\).
	Remark that by sufficiently iterating \(\rho\), we can ensure that we are in one of the following cases:
	\begin{enumerate}
		\item \(s_p \geq 0\) and \(j > s_p \),
		\item \(s_p \leq 0\) and \(i > |s_p| \).
	\end{enumerate}
	Those two cases being symmetrical, we focus on the first one.
	We denote by \(r\) and \(w\) \(\inn(i)\) and \(\out(j)\) respectively.
	We next decompose \(\rho\) into five parts: three subruns, and two transitions:
	\begin{enumerate}
		\item \(\rho^{s} = \delta_1\cdots\delta_{\min(r,w)-1}\), a run from \(p\) to \(q'\) over some \((u',v')\),
		\item \(\delta^{o} = \delta_{\min(r,w)}\), a transition from \(p''\) to \(q''\) over some \((u'',v'')\),
		\item \(\rho^{m} = \delta_{\min(r,w)+1}\cdots\delta_{\max(r,w)-1}\), a run from \(p^{(3)}\) to \(q^{(3)}\) over some \((u^{(3)},v^{(3)})\),
		\item \(\delta^{r} = \delta_{\max(r,w)}\), a transition from \(p^{(4)}\) to \(q^{(4)}\) over some \((u^{(4)},v^{(4)})\),
		\item \(\rho^{f} = \delta_{\max(r,w)+1}\cdots\delta_n\), a run from \(p^{(5)}\) to \(p\) over some \((u^{(5)},v^{(5)})\).
	\end{enumerate}
	Informally,  \(\delta^o\) and  \(\delta^r\) are the transitions that read and write the positions that witness the non conjugacy, while \(\rho^{s}\), \(\rho^{m}\) and \(\rho^{f}\) decompose the remainder of \(\rho\) into respectively before, between, and after these transitions, as shown in \cref{fig:complexity_conjugated_cycle_fig}.
	Since \(\delta^o\) and \(\delta^r\) are not in neither \(\rho^{s}\) nor \(\rho^{f}\), if their length is greater than \(|Q|\) then there exists a factor that is a cyclic run.
	Since \(\Cal{T}\) is length-preserving, this factor can be removed without modifying the shift. Then we can assume \(\rho^{s}\) and \(\rho^{p}\) to be of length smaller than \(|Q|\).
	
	Concerning \(\rho^m\), again we distinguish three cases, either \(r = w\), \(r < w\), or \(r > w\).
	The first case means that \(\delta^1=\delta^2\) and \(\rho^m\) is inexistent.
	The two other cases are symmetrical, we focus on the first one, when \(r < w\).
	We denote by \(o \in Q\) the starting state of \(\rho^m\), \(q \in Q\) the final state of \(\rho^m\), and \((u',v')\) respectively the input and output words of \(\rho^m\).
	As \(r < w\), \cref{lemma:length-preserving_imply_smax_limited} implies that \(0 < s_{p^{(3)}} \leq \smax(\Cal{T}) \cdot |Q|\), \(0 < s_{q^{(3)}} \leq \smax(\Cal{T}) \cdot |Q|\), and \(|u^{(3)}| < s_{p^{(3)}}\).
	Therefore, using~\cref{lemma:length-preserving-imply-correlation-production-run}, \(|\delta^m| \leq |Q| \cdot |u^{(3)}v^{(3)}| = |Q| \cdot (s_{q^{(3)}} + 2 \cdot |u^{(3)}| - s_{p^{(3)}})\).
	Combining these equations, we obtain that \(|\delta^m| \leq 2 \cdot \smax(\Cal{T}) \cdot |Q|^2 - 2\).
	To conclude, if there exists a run \(\rho\) from \(p \in Q\) to itself over some \((u,v)\) such \(u\) is not conjugate to  \(v\) by \(s_p\), then there exists one of size smaller than or equal to:
	\[
		2 \cdot |Q| + 2 + 2 \cdot \smax(\Cal{T}) \cdot |Q|^2 - 2 = 2 \cdot |Q| + 2 \cdot \smax(\Cal{T}) \cdot |Q|^2 = L.\qedhere
	\]
\end{proof}

\begin{figure}
	\centering
	\resizebox{0.9\textwidth}{!}{

\begin{tikzpicture}

\draw[very thick] (0,0) rectangle +(13,1);
\draw[thick,fill=blue!20] (0,0) rectangle +(8,1);
\draw[thick,fill=orange!30] (8,0) rectangle +(1,1);
\draw[thick,fill=red!40] (9,0) rectangle +(2,1);
\draw[thick,fill=green!40] (11,0) rectangle +(1,1);
\draw[thick,fill=yellow!40] (12,0) rectangle +(1,1);

\path[draw,very thick,latex-latex,dashed] (13,0.5) to node[above]
     {\(s_p\)} (16,0.5);

\node(w) at (-2,0.5) {\(v\) (write)};

\node(rw) at (4,0.5){\textcolor{black}{\(v'\)}};
\node(drw) at (8.5,0.5){\textcolor{black}{\(v''\)}};
\node(r2w) at (10,0.5){\textcolor{black}{\(v^{(3)}\)}};
\node(drr) at (11.5,0.5){\textcolor{black}{\(v^{(4)}\)}};
\node(r3w) at (12.5,0.5){\textcolor{black}{\(v^{(5)}\)}};

\node (j) at (8.5,-0.5) {\textcolor{red}{\(j\)}};
\begin{scope}[xshift=3cm,yshift=2cm]
\draw[very thick] (0,0) rectangle +(13,1);
\draw[thick,fill=blue!20] (0,0) rectangle +(2,1);
\draw[thick,fill=orange!30] (2,0) rectangle +(1,1);
\draw[thick,fill=red!40] (3,0) rectangle +(2,1);
\draw[thick,fill=green!40] (5,0) rectangle +(1,1);
\draw[thick,fill=yellow!40] (6,0) rectangle +(7,1);

\path[draw,very thick,latex-latex,dashed] (-3,0.5) to node[above]
     {\(s_p\)} (0,0.5);
     
\node(w) at (-5,0.5) {\(u\) (read)};  
\node(rr) at (1,0.5){\textcolor{black}{\(u'\)}};
\node(dww) at (2.5,0.5){\textcolor{black}{\(u''\)}};
\node(r2r) at (4,0.5){\textcolor{black}{\(u^{(3)}\)}};
\node(dwr) at (5.5,0.5){\textcolor{black}{\(u^{(4)}\)}};
\node(r3r) at (9.5,0.5){\textcolor{black}{\(u^{(5)}\)}};

\node (i) at (5.5,1.5) {\textcolor{red}{\(i\)}};
\end{scope}

\path[draw,dashed,red] (i) to node[left] {\textcolor{black}{mismatch}} (j);

\end{tikzpicture}}
	\caption{Illustration of the decomposition of \(\rho\) into five parts. In this case, \(\min(r,w) = w\), and \(\max(r,w) = r\).}
	\label{fig:complexity_conjugated_cycle_fig}
\end{figure}

	
\subsection{ Threshold-bounded Deviation Problem} 
\label{ssub:k_bounded_nft}

Now we address the Threshold-bounded Deviation Problem,
and show that its complexity depends on whether the threshold \(k\) is part of the input or fixed.

\begin{proposition}\label{prop:KboundedNFTHigh}
	The Threshold-bounded Deviation Problem is:
	\begin{itemize}
		\item in co-NP when \(k\) is part of the input;
		\item in NL for every fixed \(k \in \mathbb{N}\).  
		\end{itemize}
\end{proposition}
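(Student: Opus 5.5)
We design, for both parts, an algorithm for the complement problem: decide whether some accepting run \(\rho\) over some \((u,v)\) satisfies \(d(u,v) > k\). By Immerman--Szelepcsényi (NL \(=\) co-NL), an NL procedure for the complement gives the fixed-\(k\) bound, and an NP procedure gives the co-NP bound.

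\textbf{Step 1: reduce to bounded transducers.} First run the NL procedure of Proposition~\ref{prop:BoundedNFTHigh}. If \(\Cal{T}\) is unbounded, the answer is immediately \false. Otherwise \(\Cal{T}\) is length-preserving, every state \(p\) has a well-defined shift \(s_p\) (Lemma~\ref{lemma:bounded_imply_one_shift_by_state}), and by Lemma~\ref{lemma:length-preserving_imply_smax_limited} every prefix of every run has shift of absolute value at most \(S=\smax(\Cal{T})\cdot|Q|\). Moreover, by Lemma~\ref{lemma:QuadBound_imply_not_properties}, \(\dev(R_{\Cal{T}})\leq B\) with \(B=O(|\Cal{T}|^2)\). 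Hence if \(k\geq B\) the answer is \true, and we may assume \(k< B\), which is polynomial in \(|\Cal{T}|\).

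\textbf{Step 2: on-the-fly guessing of a witness run.} We guess an accepting run transition by transition and count mismatches. Because the lag between read and written positions never exceeds \(S\), it suffices to keep a buffer holding the at most \(S\) letters already produced by the leading side but not yet matched by the other side. Each time a letter is consumed from the buffer and compared with the new letter, we increment a mismatch counter when they differ. We accept once the counter exceeds \(k\) and the run later reaches a final state.

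The buffer has polynomial size, so this procedure is in NP once we also bound the run length, which is the main obstacle: \(\varepsilon\)-cycles and mismatch-free segments could make the run arbitrarily long. We handle this with a counter-abstraction argument. The configuration is (state, buffer content). Between two consecutive mismatches, if a configuration repeats, the segment between the two occurrences can be cut out. This is safe because no mismatch occurs in that segment and the equal buffers guarantee the remainder of the run continues identically. So each mismatch-free segment has length at most the number of configurations. Since \(|Q|\cdot|\Sigma|^{S}\) is exponential, a mismatch-free segment might be exponentially long; but an NP algorithm guesses only the \(k+1\) mismatch events, and the reachability between consecutive events is itself decided nondeterministically in polynomial space.

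To get a genuine NP certificate, we instead use the argument of Lemma~\ref{lemma:complexity_conjugate_cycle}. In a bounded \(\Cal{T}\), every cycle satisfies the conjugacy condition of Lemma~\ref{lemma:bounded_equal_properties}, so pumping cycles out of a run never destroys mismatches located outside the removed cycle. Removing simple cycles that contain neither a mismatch-reading nor a mismatch-writing transition then leaves a run with at most \(2(k+1)\) marked transitions and fewer than \(|Q|\) transitions between consecutive marks. The resulting run has length polynomial in \(|\Cal{T}|\) and \(k<B\), and serves as the certificate, checked in polynomial time. The delicate point is justifying that removing such a cycle shifts the alignment consistently, that is, a letter paired across the cycle remains paired with a letter equal to its former partner. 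This is exactly what conjugacy by \(s_p\) gives, and proving it cleanly is where we expect the real work to lie.

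\textbf{Fixed \(k\).} We instead guess only the \(k+1\) mismatch positions: for each, guess the transitions reading and writing the paired letters, the letters themselves, and their relative offsets, all bounded by \(S\). The shift invariant \(s_p\) is computed as in Lemma~\ref{lemma:complexity_conjugate_cycle}, and reachability between consecutive guessed transitions, with consistent length and offset bookkeeping in binary counters of size \(O(\log|\Cal{T}|)\), is checked in NL as in that lemma. Storing \(O(k)\) such items uses \(O(k\log|\Cal{T}|)\) space, which is logarithmic for constant \(k\), giving NL.
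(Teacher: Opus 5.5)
Your co-NP part has a genuine gap, at exactly the point you flag as delicate. It is not true, even in a bounded transducer, that cutting out a simple cycle with no mismatch-reading or mismatch-writing transition preserves the remaining mismatches.

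Take $\Delta=\{(q_0,aba,\varepsilon,p),\,(p,a,a,p),\,(p,\varepsilon,aba,q_f)\}$, which realizes $\{(aba\,a^m,\,a^m aba)\mid m\geq 0\}$. It is length-preserving and its only cycles are over $(a^m,a^m)$, so it is bounded, with $s_p=3$. For $m=1$ the pair $(abaa,aaba)$ has two mismatches, at positions $2$ and $3$. Each pairs a letter read by the first transition with a letter written by the last one. The loop reads position $4$ and writes position $1$, both matches, so it is unmarked. Removing it gives $(aba,aba)$, with no mismatch at all. So for $k=1$ your pruning turns a witness into a non-witness.

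The reason is that removing a cycle $\nu$ at $p$ re-pairs the $|s_p|$ letters pending at $p$ (processed before $\nu$) with letters processed after $\nu$. Each is now paired with the letter that stood $n$ positions further along, where $n$ is the input length of $\nu$. Conjugacy of $\nu$ only constrains comparisons in which both letters come from $\nu$, and says nothing about these cross pairs. Hence the polynomial length bound for your NP certificate is not established.

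The missing idea is to use the conjugacy of the \emph{enclosing} cycle rather than of the removed one. This is what the paper does in Claim~\ref{claim:k-bounded-NP-algo}: take a long factor $\mu$ from a frequently visited state $p$ back to $p$. Since $\mu$ is conjugate by $s_p$, every comparison between two letters that both lie in $\mu$ is a match. So $\mu$ interacts with the rest of the run only through its first $|s_p|$ output letters and its last $|s_p|$ input letters. These are carried by $\mu^s\delta^o$ and $\delta^r\mu^f$, which are short by Lemma~\ref{lemma:length-preserving-imply-correlation-production-run}. Loops are cut only inside the middle part $\mu^m$. The result is still a cycle at $p$, hence still internally mismatch-free, and its boundary letters keep the same partners.

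Alternatively, your own fixed-$k$ bookkeeping already yields a polynomial certificate. List the at most $2(k+1)$ marked transitions with their lag-bounded offsets. A segment between two marks either has no pending guessed position, in which case only state reachability matters, or it reads and writes numbers of letters bounded by the lag. The latter is checked in polynomial time by reachability in $Q\times[0,i]\times[0,j]$, and since $k<B$ the certificate is polynomial. Storing the actual buffer contents was the wrong abstraction.

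Your fixed-$k$ NL part is sound. Because it tracks only relative offsets, it does not even need the run-length bound that the paper's algorithm, which stores absolute positions, relies on.
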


Since a NFT \(\Cal{T} = (Q,\Sigma, Q_{i}, Q_{f}, \Delta)\) can be bounded only if it is length-preserving,
then \(\Cal{T}\) is \(k\)-bounded if and only if for all accepting runs \(\rho \in \Delta^*\) over some \((u,v)\) there is at most \(k\) mismatch between \(u\) and \(v\).
We call a run with \(k\) mismatch a \(k\)\emph{-mismatch witness} for \(\Cal{T}\).
We show that if a \(k\)-mismatch witness exists, there exists one of polynomial length on the size of \(\Cal{T}\) and consequently, we can construct an NP algorithm to find such witness.
More precisely, the algorithm first checks whether the NFT is bounded, which can be done in NL thanks to \cref{thm-BoundedNFT}.
Then if \(k\) is greater than the quadratic bound of \cref{lemma:QuadBound_imply_not_properties} the algorithm accepts, otherwise it applies the co-NP algorithm of the Lemma below.

\begin{lemma}\label{lemma:k-bounded-NP-algo}
    Given a bounded NFT \(\Cal{T} = (Q,\Sigma, Q_{i}, Q_{f}, \Delta)\), and an integer \(k <  B\) where \(B = (\min(\smax(\Cal{T}) \cdot |Q|,|\Cal{T}|)+\lmax(\Cal{T})+2)\cdot |Q|\), it can be decided in co-NP whether for all accepting runs \(\rho \in \Delta^*\) over some \((u,v)\), \(d(u,v) \leq k\).
\end{lemma}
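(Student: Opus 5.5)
The plan is to give an NP algorithm for the complement: decide whether some accepting run over some \((u,v)\) satisfies \(d(u,v) \geq k+1\). Since \(\Cal{T}\) is bounded, it is length-preserving, and so \(d(u,v)\) is the number of mismatching positions. The core is a small-witness property. If a \((k+1)\)-mismatch witness exists, then one exists whose run has length polynomial in \(|\Cal{T}|\) (and in \(k<B\), itself polynomial). The NP machine then guesses such a run explicitly, transition by transition. It computes \(u\) and \(v\), checks that the run is accepting, and counts the positions \(t\) with \(u_t\neq v_t\), accepting if the count reaches \(k+1\). All of this is polynomial, so the original question is in co-NP.

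To get the small witness, fix an accepting run \(\rho\) over \((u,v)\) with at least \(k+1\) mismatches. Select exactly \(k+1\) mismatch positions \(m_1<\dots<m_{k+1}\), and mark the transitions \(\inn_\rho(m_i)\) and \(\out_\rho(m_i)\). This gives at most \(2(k+1)\) marked transitions, which cut \(\rho\) into at most \(2k+3\) unmarked segments. I would shorten each segment by removing cyclic factors, keeping every segment at most \(|Q|\) transitions long.

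The key claim is that removing a cycle \(\rho_c\) from a segment preserves the selected mismatches. Because \(\Cal{T}\) is length-preserving, \cref{lemma:length_preserving_imply_u_equal_v} shows that \(\rho_c\) reads and writes the same number \(\ell\) of letters. Removing it therefore shifts every later input position and every later output position by exactly \(\ell\).

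The delicate case is a selected mismatch whose reading transition lies before the cycle while its writing transition lies after it, or the reverse. Such a pair would become misaligned. By \cref{lemma:length-preserving_imply_smax_limited} and \cref{lemma:bounded_imply_one_shift_by_state}, the distance between \(\inn(m_i)\) and \(\out(m_i)\) is controlled by the state shifts, which are at most \(b=\min(\smax(\Cal{T})\cdot|Q|,|\Cal{T}|)\). So I would rather not delete arbitrary cycles inside a segment that straddles a pending mismatch. Instead I would argue as in the proof of \cref{claim-complexity_conjugate_cycle} via \cref{lemma:length-preserving-imply-correlation-production-run}: while some mismatch is pending, the read and written lengths of the segment are bounded by roughly \(2b\). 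This bounds such a segment by \(|Q|\cdot O(b)\) transitions, after also eliminating \((\varepsilon,\varepsilon)\)-cycles, which can always be removed. Segments with no pending mismatch can simply be pumped down to length at most \(|Q|\).

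Summing over segments gives a total length of \(O((k+1)\cdot |Q|\cdot(b+\lmax(\Cal{T})))\), which is polynomial because \(k<B\). I expect this straddling case to be the main obstacle: making the alignment argument precise, and checking that removing cycles keeps the run accepting, which holds since a cycle returns to the same state.
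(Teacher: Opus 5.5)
Your proof is correct, but it takes a different route from the paper's.

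\textbf{The paper's route.} The paper proves a small-witness bound \(L=8\cdot\smax(\Cal{T})\cdot|Q|^3\) that does not depend on \(k\). It takes a long loop \(\mu\) at a state \(p\). It isolates a prefix that writes the \(s_p\) output letters aligned with input read before \(\mu\), and a suffix that reads the \(s_p\) input letters aligned with output written after \(\mu\); both have read and written length \(O(s_p)\). It then uses the conjugacy property of \cref{lemma:bounded_imply_conjugate_cycle} to cut every cycle in the middle part. Every loop at \(p\) is conjugate by \(s_p\), so all alignments internal to the loop are matches. Hence the total number of mismatches of the run is unchanged.

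\textbf{Your route.} You instead freeze the at most \(2(k+1)\) transitions that read or write \(k+1\) chosen mismatches. You delete arbitrary cycles only in segments that no chosen mismatch straddles; there, the zero shift of cycles moves both letters of each chosen pair by the same amount. You bound straddled segments directly through the state shifts. To make that step precise: if \(\inn(m_i)\) precedes a segment \(\sigma\) from \(q_0\) to \(q_1\) that precedes \(\out(m_i)\), then \(|v_\sigma|<s_{q_0}\) and \(|u_\sigma|<s_{q_1}\), both at most \(b\). Once \((\varepsilon,\varepsilon)\)-cycles are removed, which leaves \(u\) and \(v\) untouched, \cref{lemma:length-preserving-imply-correlation-production-run} gives \(|\sigma|=O(|Q|\cdot b)\).

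\textbf{What each buys.} Your argument uses only length preservation and the bound on state shifts, never conjugacy. So it works for any length-preserving NFT once \(k\) is polynomially bounded, and every step is easy to check in full detail. The price is a witness length linear in \(k\), which is polynomial only because of the hypothesis \(k<B\). The paper's approach gives the stronger structural fact that even the maximal deviation is witnessed by a run whose length is independent of \(k\). That fact genuinely needs boundedness, through the conjugacy of loops.
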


\begin{proof}
	We first claim a small witness property and give a co-NP algorithm relying on it.
	The claim is proved afterward.

	\begin{claim}\label{claim:k-bounded-NP-algo}
		If there exists an accepting run \(\rho \in \Delta^*\) over some \((u,v)\) such that \(d(u,v) > k\) then there exists an accepting run \(\rho' \in \Delta^*\) over some \((u',v')\) such that \(|\rho'| \leq  L = 8 \cdot \smax(\Cal{T}) \cdot |Q|^3\) and \(d(u',v') > k\).
	\end{claim}

	From this claim, we construct a co-NP algorithm that accepts accepting runs \(\rho \in \Delta^*\) of length \(|\rho| \leq L\) over some \((u,v)\) such that \(d(u,v) > k\).
	The algorithm works as follows.
	It first guesses \(k+1\) positions, all these positions are stored in a set of pairs \(Z\) of the form \((i,a)\), where \(i\) is an integer storing the guessed position, and \(a\) stores an element of \(\Sigma \cup \{\varepsilon\}\).
	Initially, all pairs in \(Z\) are of the form \((i,\varepsilon)\).
	Next, the algorithm guesses a run \(\rho\) of \(\Cal{T}\), transition by transition, and each step it updates four variables:
	\begin{itemize}
		\item \(\ell\), an integer stored in binary, tracking the number of processed transitions.
		\item \(\ell_u\), an integer stored in binary, tracking the length of the processed input. 
		\item \(\ell_v\), an integer stored in binary, tracking the length of the processed output. 
		\item \(Z\), the set of pairs.
	\end{itemize} First, if the first state of the run \(\rho\) is not in \(Q_i\) then the algorithm stops and rejects.
	Next, at each step, the algorithm starts by comparing the input and output of the current transition denoted \((u,v)\) with \(Z\).
	If there exists \((i,a) \in Z\) such that \(i\) is in \((\ell_u,\ell_u + |u|]\), or in \((\ell_v,\ell_v + |v|]\) then the following updates are executed:
	\begin{itemize}
		\item If \(a = \varepsilon\), and \(i\) is in both \( (\ell_u,\ell_u + |u|]\) and \((\ell_v,\ell_v + |v|]\). Then, if \(u_{i-\ell_u} \neq v_{i-\ell_v}\), the algorithm removes \((i,\varepsilon)\) from the set, and if \(u_{i-\ell_u} = v_{i-\ell_v}\), it stops and rejects.
		\item Else if \(a = \varepsilon\), then the algorithm updates the value of \(a\) with \(u_{i-\ell_u}\) or \(v_{i-\ell_v}\) depending on if \(i\) was in \((\ell_u,\ell_u + |u|]\), or in \((\ell_v,\ell_v + |v|]\).
		\item Else if \(a \neq \varepsilon\) and \(a\) is not equal to \(u_{i-\ell_u}\) or \(v_{i-\ell_v}\), again depending on if \(i\) was in \((\ell_u,\ell_u + |u|]\), or in \((\ell_v,\ell_v + |v|]\) then the algorithm removes \((i,a)\) from the set.
		\item Else if \(a \neq \varepsilon\) and \(a\) is equal to \(u_{i-\ell_u}\) or \(v_{i-\ell_v}\), again depending on if \(i\) was in \((\ell_u,\ell_u + |u|]\), or in \((\ell_v,\ell_v + |v|]\) then the algorithm stops and rejects.
	\end{itemize}
	After these updates of \(Z\) the algorithm updates the three other variables \(\ell\), \(\ell_u\), and \(\ell_v\).
	If a final state is reached with \(\ell \leq L\) and \(|Z| = 0\) then the algorithm stops and accepts.
	If \(\ell\) becomes greater than \(L\), then the algorithm stops and rejects.
	If neither of this two conditions are met, then the algorithm guesses the next transition, if it is not possible, then it stops and rejects.

	\proofsubparagraph{Proof of~\cref{claim:k-bounded-NP-algo}}.
	Let \(\rho \in \Delta^*\) be an accepting run of \(\Cal{T}\) over some \((w,z)\) such that \(k < d(w,z)\), and \(|\rho| > L\).
	We prove that there is a valid sub-run \(\rho'\) of \(\rho\) that generates the same number of mismatches as \(\rho\).
	We assume that \(\rho\) is \((\varepsilon, \varepsilon)\)-cycle free, as such cycles cannot generate mismatches nor shift.
	Consequently, there exists a state \(p \in Q\) which is visited \(8 \cdot \smax(\Cal{T}) \cdot |Q|^2\), and there exists \(\mu =  \delta_1\delta_2\cdots\delta_n\) a factor of \(\rho\) from \(p\) to itself over some \((u,v)\) with \(|\mu| \geq 8 \cdot \smax(\Cal{T}) \cdot |Q|^2\).
	We distinguish two cases, when \(s_p \geq 0\) and when \(s_p \leq 0\).
	Those two cases being symmetrical, we focus on the first one.
	Using \cref{lemma:length-preserving-imply-correlation-production-run}, we get that \(|uv| \leq 8 \cdot \smax(\Cal{T}) \cdot |Q|\).
	Moreover, thanks to \cref{lemma:length-preserving_imply_smax_limited}, \(||u| - |v|| \leq \smax(\Cal{T}) \cdot |Q|\), therefore, we get that both \(|u| > \smax(\Cal{T}) \cdot |Q|\) and \(|v| > \smax(\Cal{T}) \cdot |Q|\).
	We decompose the run \(\mu\) into five part, three runs and two transitions, as shown in \cref{fig:k-bounded-NP-algo-witness-size-fig}:
	\begin{enumerate}
		\item \(\mu^s = \delta_1\cdots\delta_{\out(s_q)-1}\), a run from \(p\) to \(q'\) over some \((u',v')\),
		\item \(\delta^o = \delta_{\out(s_q)}\), a run from \(p''\) to \(q''\) over some \((u'',v'')\),
		\item \(\mu^m = \delta_{o+1}\cdots\delta_{m-1}\), a run from \(p^{(3)}\) to \(q^{(3)}\) over some \((u^{(3)},v^{(3)})\),
		\item \(\delta^r = \delta_{\inn(|u'| - s_q +1)}\), a run from \(p^{(4)}\) to \(q^{(4)}\) over some \((u^{(4)},v^{(4)})\),
		\item \(\mu^f = \delta_{\inn(|u'| - s_q +1)+1}\cdots\delta_n\), a run from \(p^{(5)}\) to \(p\) over some \((u^{(5)},v^{(5)})\).
	\end{enumerate}
	Informally, \(\delta^o\) marks the transition where \(s_p\), the initial shift \(\mu\) is caught up, and \(\delta^r\) marks the symmetrical transition from which the input read is not matched within \(\mu\). 
	Note that the output of \(\mu^s\) is equal to \(s_q\) which is smaller than \(\smax(\Cal{T}) \cdot |Q|\).
	Therefore, using~\cref{lemma:length-preserving-imply-correlation-production-run}, we can suppose that \(\mu^s\) is a \((\varepsilon,\varepsilon)\)-cycle free run of length at most \(|Q|(s_q + 2 s_q)  \leq 3\smax(\Cal{T}) \cdot |Q|^2\).
	A symmetric argument holds for \(\mu^f\) and gives the same bound on its length.
	Finally, the length of \(\mu^m\) can be bounded by \(|Q|\) thanks to~\cref{lemma:bounded_imply_conjugate_cycle} as any loop in \(\mu^m\) can be cut off without affecting the number of mismatches.
	
Altogether, we proved that \(\mu\) can be made to be of length smaller than 
\[
|\mu^s|+1+|\mu^m|+1+|\mu^f| \leq 6\smax(\Cal{T}) \cdot |Q|^2 + |Q| + 1 \leq 8\smax(\Cal{T}) \cdot |Q|^2.\qedhere
\]
\end{proof}

\begin{figure}
	\centering
	\resizebox{0.9\textwidth}{!}{

\begin{tikzpicture}

\draw[very thick] (0,0) rectangle +(13,1);
\draw[thick,fill=blue!20] (0,0) rectangle +(5,1);
\draw[thick,fill=orange!30] (5,0) rectangle +(1,1);
\draw[thick,fill=red!40] (6,0) rectangle +(4,1);
\draw[thick,fill=green!40] (10,0) rectangle +(1,1);
\draw[thick,fill=yellow!40] (11,0) rectangle +(2,1);

\path[draw,very thick,latex-latex,dashed] (13,0.5) to node[above]
     {\(s_p\)} (18.5,0.5);

 \node(w) at (-2,0.5) {\(v\) (write)};

 \node(m1w) at (2.5,0.5){\textcolor{black}{\(v'\)}};
 \node(d0w) at (5.5,0.5){\textcolor{black}{\(v''\)}};
 \node(m2w) at (8,0.5){\textcolor{black}{\(v^{(3)}\)}};
 \node(dmw) at (10.5,0.5){\textcolor{black}{\(v^{(4)}\)}};
 \node(r3w) at (12,0.5){\textcolor{black}{\(v^{(5)}\)}};

 \begin{scope}[xshift = +5.5cm, yshift=+2cm]

\draw[very thick] (0,0) rectangle +(13,1);
\draw[thick,fill=blue!20] (0,0) rectangle +(2,1);
\draw[thick,fill=orange!30] (2,0) rectangle +(1,1);
\draw[thick,fill=red!40] (3,0) rectangle +(4,1);
\draw[thick,fill=green!40] (7,0) rectangle +(1,1);
\draw[thick,fill=yellow!40] (8,0) rectangle +(5,1);
\node(w) at (-7,0.5) {\(u\) (read)};

\path[draw,very thick,latex-latex,dashed] (-5.5,0.5) to node[above]
     {\(s_p\)} (0,0.5);
 \node(m1r) at (1,0.5){\textcolor{black}{\(u'\)}};
 \node(d0r) at (2.5,0.5){\textcolor{black}{\(u''\)}};
 \node(m2r) at (5,0.5){\textcolor{black}{\(u^{(3)}\)}};
 \node(dmr) at (7.5,0.5){\textcolor{black}{\(u^{(4)}\)}};
 \node(r3r) at (10.5,0.5){\textcolor{black}{\(u^{(5)}\)}};

 \end{scope}

\path[draw,thick,dashed] (5.5,2) -- (5.5,0);
\path[draw,thick,dashed] (13,1) -- (13,3);
\end{tikzpicture}}
	\caption{Illustration of the decomposition of \(\mu\) into five parts.}
	\label{fig:k-bounded-NP-algo-witness-size-fig}
\end{figure}



\section{Conclusion} 
\label{sec:conclusion}

In this work, we introduced some new lower-bounds on the complexity of three NFTs comparison problems for pairs of transducers.
We also strengthened the already studied upper-bounds for those problems in \cite{AiswaryaMS24LongVersion}.
A natural direction for pursuing this work would be to extend these positive results to other metrics over pair of words, such as the Levenshtein edit distance~\cite{Levenshtein66} or the 
Dynamic Time Warping distance (or DTW)~\cite{Vintsyuk,saoke78}
which is notably used in speech recognition or DNA sequencing for the comparison of gene expression~\cite{SVBKP13}.

In another direction, the NL-complexity that we achieved for the Bounded Comparison Problem and more importantly the fixed Threshold Comparison Problem could be leveraged for practical use.
In particular, one can think of applying these to the fixed approximated Model-Checking,
which asks whether a specification can be satisfied by a class of machine up to tolerating a fixed maximal number of errors.
In our setting, the specification is given by the first transducer which can be highly ambiguous, while the model is given by the second transducer which would be a deterministic one. Deciding the fixed threshold comparison amounts then to decide the fixed approximated Model-Checking.
In a similar but more ambitious vein, we hope to extend this approach to approximated synthesis, by adapting our approach to directly generate, from the specification given as an ambiguous transducer, a deterministic transducer whose deviation from the specification is bounded.


\newpage
\bibliography{biblio}

\end{document}